\theoremstyle{plain}
\newtheorem{assumption}{Assumption}
\newcommand{\RR}{\mathbb{R}}
\newcommand{\X}{\mathcal{X}}
\newtheorem{theorem}{Theorem}[section]
\newtheorem{lemma}[theorem]{Lemma}
\newtheorem{proposition}[theorem]{Proposition}
\newtheorem{corollary}[theorem]{Corollary}
\providecommand{\X}{\mathcal{X}}
\providecommand{\RR}{\mathbb{R}}
\newcommand{\anon}{1}
\begin{document}

\date{}
\def\spacingset#1{\renewcommand{\baselinestretch}%
{#1}\small\normalsize} \spacingset{1.}


\if1\anon
{
 \title{\bf Robust Bayesian Optimization via Tempered Posteriors}

\author{
  Jiguang Li\thanks{Email: jiguang@chicagobooth.edu. Jiguang Li is a 4th-year doctoral student in Econometrics and Statistics at the Booth School of Business of the University of Chicago. The author would like to thank Sean O'Hagan and Veronika Rockova for their valuable inputs for the manuscript.}\hspace{.2cm}\\
  Booth School of Business, University of Chicago
  \and \\
  Hengrui Luo\thanks{Corresponding Author. Email: hl180@rice.edu}\hspace{.2cm}\\
  Department of Statistics, Rice University \\
  and \\
  Computational Research Division, Lawrence Berkeley National Laboratory 
}
  \maketitle
} \fi

\if0\anon
{
  \bigskip
  \bigskip
  \bigskip
  \begin{center}
    {\LARGE\bf Robust Bayesian Optimization via Tempered Posteriors}
\end{center}
  \medskip
} \fi

\bigskip
 
\begin{abstract}
Bayesian optimization (BO) iteratively fits a Gaussian process (GP) surrogate to accumulated evaluations and selects new queries via an acquisition function.  Under local misspecification, this feedback loop can produce overconfidence precisely in the region guiding subsequent decisions. We develop a tempered GP-based BO framework that raises the likelihood to a
power $\alpha\in(0,1]$. For a generalized family of improvement acquisitions indexed by $g$, including probability of improvement (PI, $g=0$) and expected improvement (EI, $g=1$), we derive finite-time cumulative regret bounds with adaptively learned kernel hyperparameters. The analysis shows that tempering reduces the noise-driven confidence and information-gain contributions to regret, while a deterministic RKHS term prevents arbitrarily aggressive tempering from being uniformly beneficial. It also clarifies the role of the acquisition function: positive-order $g$-EI rules preserve the usual information-gain regret behavior, whereas zero-jitter PI is more exploitative and admits a weaker worst-case guarantee.

Motivated by our theoretic findings, we propose a prequential procedure for selecting $\alpha$ online: it decreases $\alpha$ when realized prediction errors exceed model-implied uncertainty and returns $\alpha$ toward one as calibration improves. Empirical results demonstrate that tempering provides a practical yet theoretically grounded tool for stabilizing BO surrogates under localized sampling. 
\end{abstract}


\noindent%
{\it Keywords:} Model misspecification; Regret analysis; Online Calibration; generalized expected improvement; Nonparametric Regression.
\vfill

\newpage
\spacingset{1.2} 
\section{Introduction}

\subsection{Basic concepts and BO pipeline}\label{sec:intro:basics}

Bayesian optimization (BO) aims to identify a global maximizer of an unknown black-box function $f$ defined on a compact search domain $\mathcal X$ \citep{garnett_bayesoptbook_2023}. The algorithm proceeds by sequentially querying an input location $x_t\in\mathcal X$ and observing a noisy response $y_t=f(x_t)+\varepsilon_t$ from the truth $f(\cdot)\in\mathbb{R}$ with noise $\varepsilon_t\sim\mathcal N(0,\sigma^2)$. The datum collected up to time $t$ is denoted by $\mathcal D_t=\{(x_s,y_s)\}_{s=1}^t$ and the procedure stops until time $T$. Based on the datum, we may build a surrogate model summarizes $\mathcal D_t$ and delivers a predictive mean $\mu_t(\cdot;\theta)$ and a predictive standard deviation $\sigma_t(\cdot;\theta)$, where $\theta$ usually denotes model hyperparameters estimated from $\mathcal{D}_t$. 

In this work, the surrogate may be a linear model or a Gaussian process (GP) regression. When the algorithm proceeds, the next location $x_{t+1}$ is selected by maximizing an acquisition function $\mathcal{A}$ that balances exploration with exploitation and is measurable with respect to $\mathcal D_t$.  The choice of the next location comes from the rule
$x_{t+1}=\arg\max_{x\in\mathcal X}\,\mathcal A(x\mid\mathcal D_t)$,
where $\mathcal A$ is any acquisition computed from the current surrogate. 
Figure \ref{fig:schema} presents a schematic view of the pipeline where we allow the surrogate update step to employ a \emph{tempered posterior}, in which the likelihood is raised to a fractional power $\alpha\in(0,1]$ \citep{bhattacharya2019bayesian}. When $\alpha_t\equiv 1$, this procedure is identical to the usual BO pipeline \citep{garnett_bayesoptbook_2023}. 

\begin{figure}
    \centering
    \IfFileExists{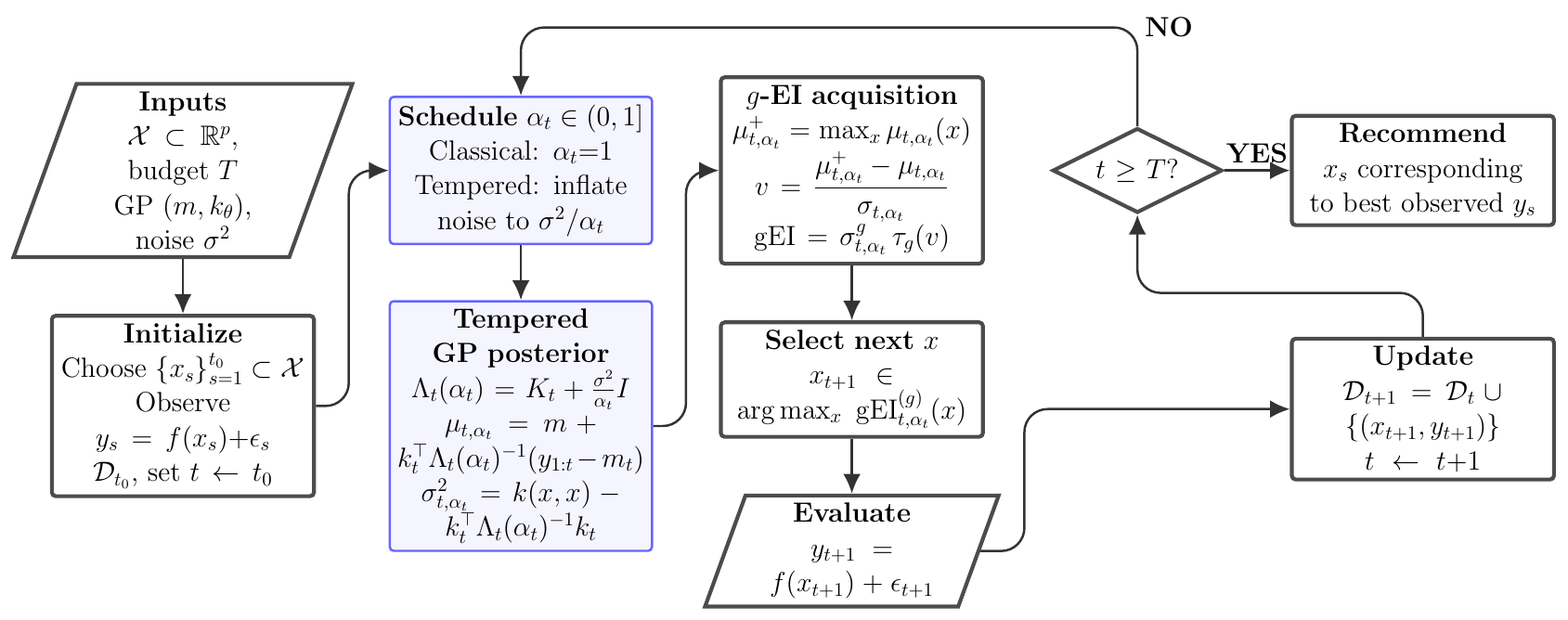}{%
      \includegraphics[width=0.995\linewidth]{figs/flow_chart.pdf}%
    }{%
      \fbox{\parbox[c][2in][c]{0.95\linewidth}{\centering
        Missing figure asset:\\
        \texttt{\detokenize{figs/flow_chart.pdf}}}}%
    }
    \caption{Schematic procedure of BO with possible tempered posterior. The tempered posterior part is indicated by blue boxes, setting $\alpha_t\equiv1$ in these blue boxes reduces to the regular BO.}
    \label{fig:schema}
\end{figure}

\subsection{Related literature}\label{sec:intro:literature}
The foundations of BO trace back to early work on expected improvement and related strategies that use GP surrogates to guide adaptive evaluation \citep{Mockus1975,gramacy2020surrogates}. Our work focuses on the generalized expected improvement ($g$-EI) acquisition functions that contain probability of improvement (PI) and classical expected improvement (EI) as special cases\citep{schonlau1998global}. These criteria provide a convenient unifying framework for studying how sensitivity to improvement magnitude influences the exploration-exploitation tradeoff. We note that subsequent development in BO also introduced alternative acquisition principles, including knowledge gradient policies and information based criteria \citep{frazier2009knowledge, srinivas2012information}. Parallel to this line of work, a substantial BO theory has established cumulative regret guarantees for various acquisition functions, with GP-UCB \citep{srinivas2012information} serving as a canonical example and motivating many subsequent refinements \citep{wang2014theoreticalanalysisbayesianoptimisation, jaiswal2023generalized}.

A central theme of this paper is the robustness of BO updating under model misspecification. One widely studied strategy replaces the usual Bayesian posterior with a likelihood tempered update,
\begin{align}
p_\alpha(\theta\mid \mathcal{D})\ \propto\ p(\theta) p(\mathcal{D}|\theta)^\alpha,\qquad \alpha\in(0,1]\label{eq:alpha_post},
\end{align}
which downweights the information contributed by each observation. The same construction appears under several names across communities. In the robust Bayesian asymptotics literature it is commonly referred to as a \emph{tempered} (or \emph{fractional}) posterior \citep{pitas2024fine,alquier2020concentration,yang2020alphavi,bhattacharya2019bayesian}, whereas in computational Bayes and marginal-likelihood estimation it is often called a \emph{power posterior} and
used for thermodynamic integration and related evidence estimators \citep{friel2008power,watanabe2013wbic}. Throughout, we use the term \emph{tempered posterior} to emphasize the role of $\alpha$ as a temperature/learning-rate parameter.

This perspective connects directly to generalized Bayes, which replaces the log-likelihood with a loss
and introduces a learning-rate parameter that controls the update magnitude \citep{bissiri2016general,lyddon2019general,syring2019calibrating,knoblauch2022optimization}. Growing Bayes-PAC theory clarifies why tempering helps under misspecification, from misspecified Bernstein-von Mises guarantees to concentration results for tempered posteriors and their variational approximations \citep{kleijn2012bvm,alquier2020concentration,yang2020alphavi}. Practical schemes for choosing or adapting the temperature include SafeBayes and related extensions \citep{grunwald2012safe,deHeide2020safe, grunwald2017inconsistency}, and following works explored exact conditioning with coarsened or divergence based updates, delivering robustness to small departures from the model \citep{miller2019coarsening,ghosh2016dpd}.

In terms of BO, to the best of our knowledge, we develop the first tempered-posterior regret analysis for improvement-based GP Bayesian optimization, together with a BO-specific prequential temperature schedule. A related line of work adapts exploration-exploitation behavior by modifying the acquisition rule itself \citep{golovin2017google,srinivas2009gaussian}.  Classic extensions include noisy and augmented variants of expected improvement that introduce exploration margins or corrections for observation noise \citep{Huang2006AEI,Letham2019ConstrainedBO,hu2025adjusted}. Other approaches adaptively combine multiple acquisitions 
\citep{Brochu2010GPHedge,Shahriari2014ESP}, develop GP updates for observation corruption \citep{ezzerg2026robustbayesianoptimisationunbounded}, or adjust internal weighting parameters within improvement-based rules to rebalance exploration over time \citep{Benjamins2023SAWEI,Qin2017TTEI}. Recent work revisits acquisition design from a variational-inference perspective, proposing hybrid acquisition rules \citep{cheng2025unified}. These methods regulate exploration through acquisition-level tuning or model combination, whereas our approach operates at the surrogate-update level via likelihood tempering, providing a complementary and broadly applicable mechanism for robustly alleviating posterior overconfidence.

\subsection{Contributions and Organization}\label{sec:intro:organization}

Despite these developments illustrated in Section \ref{sec:intro:literature}, existing regret analyses for BO typically assume the standard Bayesian update ($\alpha=1$), while the robust Bayes literature largely studies tempering outside sequential decision problems. The present paper addresses this gap by analyzing BO under tempered surrogates and deriving regret bounds that make the roles of tempering and the choice of acquisition functions within the generalized improvement family explicit.

Building on these strands, we study BO with tempered surrogate updates and characterize how tempering affects improvement-based acquisition functions. Our contributions are threefold. First, we embed likelihood tempering into Bayesian GP surrogates and demonstrate how it can stabilize predictive uncertainty and sequential optimization under localized model misspecification. Second, for the generalized improvement family indexed by $g$, we derive an analytic representation of the tempered acquisition and establish finite-time cumulative regret bounds for fixed-$\alpha$, confidence-rescaled $g$-EI. Our analysis accommodates predictable,
time-varying kernel hyperparameters through a uniform self-normalized confidence argument and an explicit kernel-domination condition. The resulting bounds distinguish the effects of tempering and the improvement order through the tempered information gain and acquisition-specific
constants. Third, we propose a tuning-light prequential schedule for choosing $\alpha_t$ and prove its calibration limit under explicit martingale and noise-estimation conditions. The adaptive schedule is an empirical extension; a finite-time regret theorem for adaptive $\alpha_t$ remains open.

In the rest of the paper, Section \ref{sec:motivations:background} motivates robustness through the tempered posterior and Section \ref{sec:pi:empirical} provides a concrete toy illustration. Section \ref{sec:GP-ei-alpha} develops GP surrogates, gives a closed form for the generalized expected improvement, and provides regret bounds as our main theoretic investigation of the tempered posterior in BO setting. Section \ref{sec:algorithmDesign} describes algorithmic design choices, including a tuning-light schedule for adaptively choosing $\alpha$. Section \ref{sec:exps} reports experimental results on benchmark functions, followed by a concluding discussion in Section \ref{Discussion}. Appendix \ref{sec:linear-ei-alpha} provides a complementary analysis for tempered Bayesian linear surrogates. Additional proofs, implementation details, and experimental results are collected in the remaining appendices.

\section{Motivations} 

\subsection{Misspecification and Regret Guarantees}\label{sec:motivations:background}

The BO pipeline above highlights a tension between statistical modeling and sequential decision-making. Since BO produces an adaptive design \citep{srinivas2009gaussian} and often concentrate evaluations near the current incumbent, the surrogate can become predictively overconfident in the sense that its model-implied predictive uncertainty understates realized errors precisely in the region that drives acquisition decisions. 

The miscalibration we target is primarily \emph{uncertainty misspecification} rather than global correctness of the surrogate class.  It can arise from kernel mismatch (e.g., incorrect smoothness/lengthscale or stationarity imposed on a nonstationary function), noise misspecification (e.g., underestimating $\sigma^2$, heteroskedasticity, or heavy-tailed errors), and repeated local refitting on a highly nonuniform design. A standard Bayesian device to alleviate misspecification is the tempered posterior \citep{bhattacharya2019bayesian}, which raises the likelihood to a power $\alpha\in(0,1]$ and thereby downweights the information contributed by the observations. For Gaussian noise GP regression, this is equivalent to inflating the effective noise variance from $\sigma^2$ to $\sigma^2/\alpha$, preventing the posterior from collapsing too quickly. Thus, tempering acts directly at the surrogate update step that produces the predictive mean and variance used by the acquisition.

Although tempering may mitigate predictive overconfidence, its implications for sequential optimization are not immediate. In particular, it is unclear how the additional regularization affects cumulative regret or how it interacts with the acquisition function. At the same time, the choice of acquisition function plays a central role in BO and can directly influence regret. For instance, PI is more exploitative and can be effective once a promising region has been identified, whereas EI offers a more balanced exploration-exploitation trade-off. To study this interaction between acquisition-level and surrogate-level adaptiveness systematically, we work with widely used generalized EI ($g$-EI) family \citep{schonlau1998global}:
\begin{equation} 
\alpha_{\theta,g,\alpha}^{EI(f)}(x\mid\mathcal D_t)
=
\mathbb E\Big[\big(\max\{0,\,f(x)-\mu^+_{t,\alpha}(\theta)\}\big)^g \,\Big|\,\mathcal D_t\Big],
\qquad g\in\mathbb R_{\ge 0}, \qquad \alpha \in (0,1],
\label{eq:g-ac-alpha-fixed}
\end{equation}
where $\mu_{t,\alpha}^{+}(\theta):=\max_{x\in\mathcal{X}}\mu_{t, \alpha}(x;\theta)$. For $g=0$ we use the indicator convention $(z_+)^0:=\mathbf 1\{z>0\}$, so the acquisition is a probability of improvement. Practically, we sometimes replace the threshold by $\mu_{t,\alpha}^{+}(\theta)+\xi$ with jitter $\xi>0$. 
The algorithm selects the next query point by computing $x_{t+1}=\arg\max_{x\in\mathcal{X}}\alpha_{\theta,g,\alpha}^{EI(f)}(x|D_{t})$.
In the standard posterior setting ($\alpha=1$), equation (\ref{eq:g-ac-alpha-fixed}) reduces to zero-jitter PI when $g=0$, and it recovers classical EI when $g=1$. Because the threshold is the posterior-mean maximum, zero-jitter PI is exactly posterior-mean greedy: every posterior-mean maximizer attains PI $1/2$, and no other point attains a larger value. Positive jitter removes this degeneracy. Larger $g$ values place greater weight on the magnitude of improvement. In this formulation, $\alpha$ and $g$ play distinct and complementary roles: $\alpha$ controls how aggressively the surrogate trusts the data, while $g$ controls how the acquisition maps posterior uncertainty into exploration incentives. This family of acquisition functions \eqref{eq:g-ac-alpha-fixed} is moment-based and is distinct from entropy-based acquisition functions studied in \citet{cheng2025unified}.

Existing BO regret analyses typically assume a standard Bayesian posterior update ($\alpha=1$) and do not directly apply to a tempered surrogate, since the predictive variances and the associated information-gain quantities are altered. Motivated by this theoretical gap, we analyze generalized improvement under a tempered GP predictive distribution and derive regret bounds that make the respective roles of $\alpha$ and $g$ explicit. The analysis clarifies when tempering reduces the noise-driven component of the guarantee, while also identifying the deterministic regularization trade-off that prevents arbitrarily aggressive tempering from being uniformly beneficial.  We further propose a simple prequential schedule for selecting $\alpha$ in a plug-in GP BO implementation as a novel methodology inspired by our theoretical result.

\subsection{A Toy Illustration}\label{sec:pi:empirical}

To provide an intuitive illustration of why likelihood tempering can matter in BO, we consider a simple one-dimensional example designed to highlight a common failure mode of PI: when the surrogate becomes overconfident early, PI can concentrate on a single basin and under-explore competing regions.

Figure \ref{fig:one-d-pi-tempered} visualizes the following oscillatory black-box function, whose symmetric global maximizers are approximately $x=0.4032$ and $x=0.5968$, with maximum value approximately $0.9294$:
\begin{align}
f(x)\;=\;-2\,\frac{\cos\!\left(8\,|4x-2|\right)}{|4x-2|^{2}+2},
\quad x\in[0,1],\label{eq:blackbox_fun}
\end{align}
observed with Gaussian noise of standard deviation $0.05$. We fit a GP surrogate with a Mat\'ern covariance (smoothness fixed to $\nu=2$) and estimate the remaining
hyperparameters, including a white-noise term by marginal likelihood (with multiple restarts). We run PI with jitter $\xi=0.01$ under three tempering levels $\alpha\in\{0.1,0.5,1.0\}$, where $\alpha=1$ is the usual posterior. Each choice of $\alpha$ starts from the same five initial points and then repeatedly selects
$x_{t+1}$ by maximizing PI under the corresponding tempered predictive distribution.

The key mechanism is that tempering reduces the effective influence of each observation in the surrogate update. Since PI depends on standardized improvement, an underestimated predictive variance can make PI sharply peaked around the current incumbent and discourage exploration elsewhere; tempering counteracts this by maintaining non-negligible uncertainty away from the incumbent and keeping alternative regions competitive.

This behavior is visible in Figure \ref{fig:one-d-pi-tempered}. By iteration $10$, the tempered run with $\alpha=0.1$ has already sampled the region around the global peak and attains a best observed value near $0.968$, while the $\alpha\in\{0.5,1.0\}$ trajectories remain around $0.819$ and $0.817$. The corresponding PI curves show that smaller $\alpha$ produces broader, less brittle acquisition profiles, whereas larger $\alpha$ yields more concentrated peaks that can reinforce repeated sampling in the same basin. This example illustrates that tempering can restore exploration to PI when the surrogate becomes locally overconfident under adaptive sampling. Additional toy variants are reported in Appendix \ref{sec:additional_toy_result}.

The example uses positive-jitter PI and is therefore illustrative rather than directly covered by the zero-jitter PI theorem in the next section. Our theory in Section \ref{sec:GP-ei-alpha} instead quantifies the fixed-$\alpha$ trade-off for confidence-rescaled improvement acquisitions. The example also motivates the practical question of choosing $\alpha$ in a sequential BO loop, which we address in Section \ref{sec:algorithmDesign} with a prequential information-matching schedule.

\begin{figure}[t]
\centering
\IfFileExists{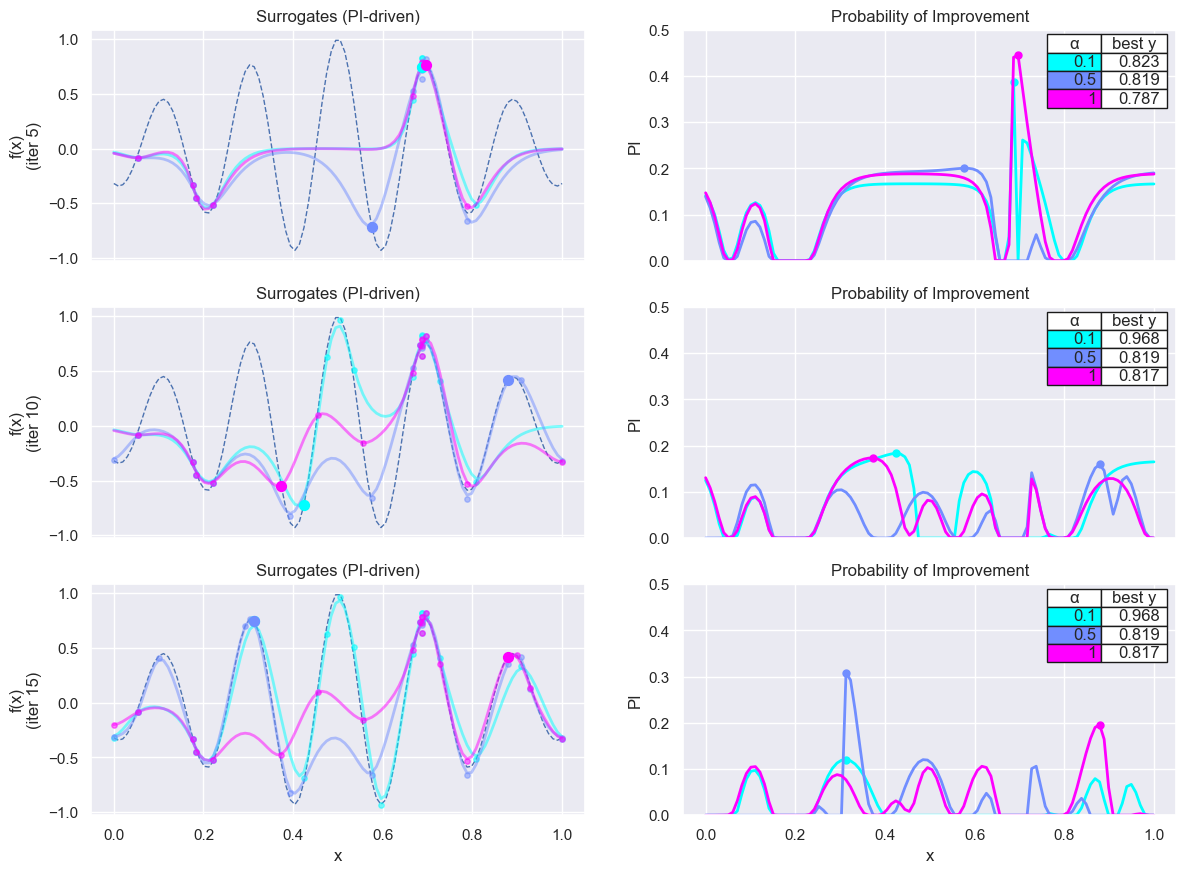}{%
  \includegraphics[width=0.98\linewidth]{figs/1d_tempered_example_big.png}%
}{%
  \fbox{\parbox[c][2.5in][c]{0.94\linewidth}{\centering
    Missing figure asset:\\
    \texttt{\detokenize{figs/1d_tempered_example_big.png}}}}%
}
\caption{Tempered probability of improvement in one dimension. Each row shows iteration $t\in\{5,10,15\}$. The left panels show the true function in \eqref{eq:blackbox_fun} as a dashed curve and the posterior mean for each $\alpha\in\{0.1,0.5,1.0\}$ as a solid curve.  Bigger dots are the current candidates, smaller dots are the selected historical candidates. 
The table reports the best observed value of $y$ so far for each $\alpha$. Right panels plot the probability of improvement curves with their maximizers marked. The surrogate uses a Mat\`ern kernel with $\nu=2$ and a white noise term, observations have standard deviation $0.05$, 
five shared random initial points are used at iteration 0.}
\label{fig:one-d-pi-tempered}
\end{figure}

\section{Bayesian GP Surrogates}\label{sec:GP-ei-alpha}

In practice, the predominant approach in BO is to start by fitting a standard GP regression on the current data collection $x_{1:t}$ and use it as a surrogate for the unknown function $f$ \citep{gramacy2020surrogates}:
\[
f(x_{1:t})|\theta\sim \mathcal{N}(m(x_{1:t}),K^{\theta}(x_{1:t},x_{1:t})),
\]
where $m(.)$ is the mean function, and $\theta$ parameterizes the covariance kernel, with
$\bigl[K^{\theta}(x_{1:t},x_{1:t})\bigr]_{ij}= k^{\theta}(x_i,x_j)$. Without loss of generality, we consider a zero-mean function prior, and the covariance kernel is parametrized by a $p$-dimensional length-scale vector $\theta$ estimated by maximum likelihood estimates (MLE). We focus on covariance kernels from the Mat\'ern family, with smoothness parameter $\kappa>0$. Let
$r^2 = (\mathbf{x}-\mathbf{x}')^\top \operatorname{diag}(\theta^2)^{-1}(\mathbf{x}-\mathbf{x}').$
The Mat\'ern--$5/2$ kernel and the squared-exponential kernel, which arises as a limiting case of the Mat\'ern family under the conventional scaling,
are given by
$$k^{\theta}_{\text{SE}}(\mathbf{x}, \mathbf{x}')  = \exp\!\left(-\tfrac{1}{2} r^2\right), \quad  k_{5/2}^\theta(\mathbf{x}, \mathbf{x}') 
= \exp(-\sqrt{5}\, r)
\left( 1 + \sqrt{5}\, r + \tfrac{5}{3} r^2 \right).$$

\subsection{Tempered Posteriors}
Consider the case where we update the GP surrogate model using the tempered posterior with $\alpha>0$. After $t$ steps with dataset $\mathcal{D}_{t}=\{(x_{s},y_{s})\}_{s=1}^{t}$,
we define $x_{1:t}:=[x_{1},\cdots,x_{t}]$, $y_{1:t}:=[y_{1},\cdots,y_{t}]$,
$K_{t}^{\theta}:=[k^{\theta}(x,x')]_{x,x'\in x_{1:t}}$, and $f:=[f(x_1),\cdots,f(x_t)]$. For a nonsingular $K_t^{\theta}$ (with the usual Gaussian-conditioning extension when $K_t^{\theta}$ is singular), the tempered posterior can be derived as
\begin{align*}
p_{\alpha}(f|\mathcal{D}_{t}) & \propto p(f)(p(y_{1:t}|f))^{\alpha}\\
 & \propto\exp\left \{-\frac{1}{2}f^{T}(K_{t}^{\theta})^{-1}f \right \}\\
 & \qquad{}\times\exp\left \{-\frac{1}{2}(y_{1:t}-f)^{T} \left(\frac{\sigma^{2}}{\alpha}\mathbb{I}\right)^{-1}(y_{1:t}-f)\right \}\\
 & \sim N(\mu_{t,\alpha},\Sigma_{t,\alpha}),
\end{align*}
where 
 $\mu_{t,\alpha}=K_{t}^{\theta} \left (K_{t}^{\theta}+\frac{\sigma^{2}}{\alpha}\mathbb{I} \right)^{-1}y_{1:t}$, 
 $\Sigma_{t,\alpha}=K_{t}^{\theta}-K_{t}^{\theta} \left (K_{t}^{\theta}+\frac{\sigma^{2}}{\alpha}\mathbb{I}_t \right)^{-1}K_{t}^{\theta}$.
To evaluate an arbitrary point $x\in \mathcal{X}$ using the updated surrogate
model, we define $k_{t}^{\theta}(x)=[k^{\theta}(x_{1},x),\cdots,k^{\theta}(x_{t},x)]^{T}$, and
obtain 
\begin{align*}
 & \mu_{t,\alpha}(x; \theta)=k_{t}^{\theta}(x)^{T} \left (K_{t}^{\theta}+\frac{\sigma^{2}}{\alpha}\mathbb{I}_t \right)^{-1}y_{1:t},\\
 & k_{t,\alpha}^{\theta}(x,x')=k^{\theta}(x,x')-k_{t}^{\theta}(x)^{T} \left(K_{t}^{\theta}+\frac{\sigma^{2}}{\alpha}\mathbb{I}_t \right)^{-1}k_{t}^{\theta}(x'),\\
 & \sigma^2_{t,\alpha}(x; \theta)=k_{t,\alpha}^{\theta}(x,x).
\end{align*}

\subsection{The \texorpdfstring{$g$}{g}-EI Acquisition Function}

Our regret analysis covers the entire $g$-EI family for $g \geq 0$, which includes both probability of improvement $(g=0)$ and classical expected improvement $(g=1)$. Under the $\alpha$-tempered GP posterior, the latent objective at any fixed $x$ has the predictive distribution
$f(x)\mid \mathcal D_t,\alpha,\theta \sim \mathcal N\!\big(\mu_{t,\alpha}(x;\theta),\,\sigma^2_{t,\alpha}(x;\theta)\big),$
and we measure improvement relative to the current posterior-mean maximizer $\mu^{+}_{t,\alpha}(\theta)\;=\;\max_{x\in\mathcal X}\mu_{t,\alpha}(x;\theta)$. The following proposition gives a convenient analytic representation of the $g$-EI family defined in \eqref{eq:g-ac-alpha-fixed}.:

\begin{proposition} \label{prop:tau-g}
For $\alpha\in(0,1]$ and $g\geq 0$, Let $v_{\alpha}=\frac{\mu_{t,\alpha}^{+}(\theta)-\mu_{t, \alpha}(x;\theta)}{\sigma_{t, \alpha}(x;\theta)}$. The $g$-EI family defined in (\ref{eq:g-ac-alpha-fixed}) can be rewritten as:
\begin{equation}
\alpha^{\mathrm{EI}(f)}_{\theta,g,\alpha}(x\mid\mathcal D_t)
=\sigma_{t,\alpha}^g(x;\theta)\,\tau_g\!\big(v_{\alpha}\big),
\qquad
\tau_g(v):=\int_v^\infty (u-v)^g\,\phi(u)\,du.
\label{eq:tau-integral}
\end{equation}
If $g\in\mathbb N_{\geq0}$, this representation reduces to the finite sum:
\begin{equation*}
\alpha^{\mathrm{EI}(f)}_{\theta,g,\alpha}(x\mid\mathcal D_t)
=\sigma_{t,\alpha}^g(x;\theta)\sum_{k=0}^{g}(-1)^{k}\binom{g}{k}\,
v_{\alpha}^{k}\,T_{g-k}\!\big(v_{\alpha}\big),
\label{eq:alpha-EI-g}
\end{equation*}
where $T_0(v)=\Phi(-v)$, $T_1(v)=\phi(v)$, and for $m>1$, $T_m(v)=v^{m-1}\phi(v)+(m-1)T_{m-2}(v)$.
\end{proposition}

Following \citet{wang2014theoreticalanalysisbayesianoptimisation}, we introduce a positive, round-dependent parameter $\nu_t$ that controls exploration by rescaling the posterior standard deviation within the acquisition function. Equivalently, the acquisition is evaluated using the same posterior mean and the rescaled standard deviation $\nu_t\sigma_{t,\alpha}(x;\boldsymbol{\theta})$. Combining this rescaling with Proposition~\ref{prop:tau-g}, we obtain
\begin{equation}
\alpha_{\boldsymbol{\theta},g,\alpha}^{EI(f)}(x\mid\mathcal D_t)=\nu_t^g
\sigma_{t,\alpha}^g(x;\boldsymbol{\theta})\tau_g\!\left(\frac{v_\alpha}{\nu_t}\right).
\label{eq:scaled-g-ac-alpha}
\end{equation}
Here, $v_\alpha$ is defined in Proposition~\ref{prop:tau-g}, and $\tau_g$ is defined in \eqref{eq:tau-integral}. For integer $g$, the finite-sum representation in Proposition~\ref{prop:tau-g} may be used.

\subsection{Regret Analysis of Tempered GP Surrogate}\label{subsec:GP_fractional}

Our regret analysis uses the tempered analogue of the \emph{maximum information gain}. For $t\geq1$, define
\begin{equation*}
\gamma_{t,\alpha}^{\boldsymbol{\theta}} := \sup_{(x_1,\ldots,x_t)\in\mathcal X^t}
\frac{1}{2}
\log \left| \mathbb{I}_t+\alpha\sigma^{-2} K_t^{\boldsymbol{\theta}}(x_{1:t})\right|.
\label{eq:mig-alpha}
\end{equation*}
When $\alpha=1$, this reduces to the standard maximum information gain that appears throughout the BO and GP-bandit regret literature \citep{garnett_bayesoptbook_2023,srinivas2012information}. Our analysis
uses a self-normalized concentration argument for predictable sequential designs. Under Assumption \ref{ass:uniform-theta}, Proposition \ref{prop:uniform-alpha-confidence} establishes a high-probability confidence event for the tempered GP surrogate that holds uniformly over time, design points, and admissible kernel hyperparameters. Proofs of the results in this subsection are deferred to
Appendix~\ref{sec:GP=proof}.

\begin{assumption}[Uniform hyperparameter regularity]
\label{ass:uniform-theta}
The design is predictable: $x_t$ is $\mathcal F_{t-1}$-measurable, and $y_t=f(x_t)+\varepsilon_t$, where $\varepsilon_t$ is conditionally
$\sigma$-subGaussian given $\mathcal F_{t-1}$. The learned hyperparameters
satisfy $\theta_t\in\Theta$ and are $\mathcal F_{t-1}$-measurable, where
$\Theta \subset\mathbb R^p$ is compact. The kernels are normalized so that
$k^\theta(x,x)=1$ for all $x\in\mathcal X$ and $\theta\in\Theta$. Define
$B_\Theta:=\sup_{\theta\in\Theta}\|f\|_{\mathcal H_\theta(\mathcal X)}<\infty$, and $\overline\gamma_{t,\alpha}^{\Theta}:= \sup_{\theta\in\Theta}\gamma_{t,\alpha}^{\theta}$.

Assume also a uniform posterior-variance domination condition: there exist a reference hyperparameter $\theta^0\in\Theta$ and a constant $C_{\mathrm{var}}<\infty$ such that, for every $n\geq0$, every predictable design $x_{1:n}$, every $x\in\mathcal X$, every $\theta\in\Theta$, and every $\alpha\in(0,1]$,
\[
\sigma_{n,\alpha}^2(x;\theta)\leq C_{\mathrm{var}}\,\sigma_{n,\alpha}^2(x;\theta^0).
\]
For ordered length-scale  Mat\'ern families, this condition can be verified by a kernel-specific comparison argument. 

For every finite horizon $T$ and every $\rho\in(0,1)$, there exist deterministic envelopes $L_{t-1,\alpha}^{\mu}(\rho)$ and $L_{t-1,\alpha}^{\sigma}(\rho)$ and an event
$\mathcal E_{T,\rho}^{\mathrm{Lip}}$ with probability at least $1-\rho$ on
which $\mu_{t-1,\alpha}(\cdot;\theta)$ and
$\sigma_{t-1,\alpha}(\cdot;\theta)$ are uniformly Lipschitz in $\theta$:
for all $1\le t\le T$, $x\in\mathcal X$, and
$\theta,\theta'\in\Theta$,
\begin{align*}
|\mu_{t-1,\alpha}(x;\theta)-\mu_{t-1,\alpha}(x;\theta')|
&\le L_{t-1,\alpha}^{\mu}(\rho)\|\theta-\theta'\|,\\
|\sigma_{t-1,\alpha}(x;\theta)-\sigma_{t-1,\alpha}(x;\theta')|
&\le L_{t-1,\alpha}^{\sigma}(\rho)\|\theta-\theta'\|.
\end{align*}
\end{assumption}

\begin{proposition}[Uniform confidence for tempered GP surrogates]
\label{prop:uniform-alpha-confidence}
Suppose Assumption \ref{ass:uniform-theta} holds. Fix $T\ge1$, $\delta\in(0,1)$, and $\alpha\in(0,1]$. Then there exists a deterministic sequence of confidence radii $\{\Phi_{t,\alpha}^{\Theta}\}_{t=1}^T$, chosen before observing the data, such that, with probability at least $1-\delta$, simultaneously for all $1\le t\le T$, all $x\in\mathcal X$, and all $\theta\in\Theta$,
$$
|\mu_{t-1,\alpha}(x;\theta)-f(x)|
\le
\Phi_{t,\alpha}^{\Theta}
\sigma_{t-1,\alpha}(x;\theta).
$$
\end{proposition}

Our regret bounds require the acquisition rescaling parameter $\nu_t$ to be of the same order as the confidence radius, as formalized in Assumption~\ref{ass:nu-scaling}. Similar confidence-scale rescalings have appeared in earlier EI regret analyses \citet{wang2014theoreticalanalysisbayesianoptimisation}, but our proof below uses the self-normalized confidence event in Proposition \ref{prop:uniform-alpha-confidence}.

\begin{assumption}[Acquisition rescaling]
\label{ass:nu-scaling}
Let $\Phi_{t,\alpha}^{\Theta}$ be the confidence radius from Proposition~\ref{prop:uniform-alpha-confidence}. There exist constants $0<c_\nu\le C_\nu<\infty$, such that $c_\nu\Phi_{t,\alpha}^{\Theta} \le \nu_t \le C_\nu\Phi_{t,\alpha}^{\Theta}$ uniformly over $1\le t\le T$.
\end{assumption}

We now state the main cumulative regret bound for tempered  positive-order $g$-EI.

\begin{theorem}[Regret bound for tempered $g$-EI] \label{thm:general-g-alpha}
Fix $g > 0$, $\alpha\in(0,1]$, $T\ge1$,  and $\delta\in(0,1)$. Suppose Assumptions~\ref{ass:uniform-theta} and \ref{ass:nu-scaling} hold. Let $\Phi_{T,\alpha}^{\Theta,\max}:=\max_{1\le t\le T}\Phi_{t,\alpha}^{\Theta}$ and $D_{g,\nu}
:=\sup_{0\le z\le 1/c_\nu} \frac{1/c_\nu-z}{\tau_g(z)^{1/g}}$.
Then, with probability at least $1-\delta$,
$$
R_T = \mathcal O\!\left( \Phi_{T,\alpha}^{\Theta,\max} \kappa_{T,\alpha,g} \sqrt{\frac{\overline\gamma_{T,\alpha}^{\Theta}T}{\log(1+\alpha \sigma^{-2})}} \right),
$$
where the hidden constant may depend on $C_{\mathrm{var}}$ but not on $T$, and
$$\kappa_{T,\alpha,g}:= 1+D_{g,\nu}C_{\nu}C_{(g)}^{1/g}+C_{\nu}\tau_g^{-1}\!\left[C_{(g)}
\left(\frac{\sigma^2}{\alpha(T-1)+\sigma^2}\right)^{g/2}\right],
\qquad C_{(g)}:=\frac{2^{g/2}\Gamma((g+1)/2)}{2\sqrt{\pi}}.$$

\end{theorem}

\begin{corollary}[Order of the regret bound] \label{cor:general-g-alpha-order}
Under the conditions of Theorem~\ref{thm:general-g-alpha}, assume in addition that, for the fixed $\alpha$ and $\delta$, there exist constants $C_L,q_L<\infty$ such that for every $T\geq1$,
\[
\max_{1\leq t\leq T}\left\{L_{t-1,\alpha}^{\mu}(\delta/2)+L_{t-1,\alpha}^{\sigma}(\delta/2)\right\}\leq C_L T^{q_L}.
\]
Then, with probability at least $1-\delta$,
$$
R_T = \mathcal O\!\left( \left[ B_\Theta+ \sqrt{ \alpha\left( \overline\gamma_{T,\alpha}^{\Theta} +p\log T+\log(1/\delta) \right)}\right]
\kappa_{T,\alpha,g} \sqrt{\frac{\overline\gamma_{T,\alpha}^{\Theta}T}{\log(1+\alpha \sigma^{-2})}}  \right).
$$
\end{corollary}

Corollary \ref{cor:general-g-alpha-order} clarifies that tempering improves the stochastic and information-gain components of the bound, but excessive tempering can be limited by the deterministic RKHS term. To see this, let $A_{T,\alpha}:=\overline\gamma_{T,\alpha}^{\Theta}+p\log T+\log(1/\delta)$ and $\ell_\alpha:=\log(1+\alpha\sigma^{-2})$, so that the bound can be rewritten as $\kappa_{T,\alpha,g} \sqrt{T} \left[ \sqrt{A_{T,\alpha} \bar{\gamma}_{T,\alpha}^{\Theta} \frac{\alpha}{\ell_{\alpha}}}+B_\Theta\sqrt{\bar{\gamma}_{T,\alpha}^{\Theta}/\ell_\alpha}\right].$ The first term is the stochastic component. Since the terms $\kappa_{T,\alpha,g}$, $A_{T,\alpha}$, $\bar{\gamma}_{T,\alpha}^{\Theta}$, and $\alpha/\ell_\alpha$ are all nondecreasing in $\alpha$, this component improves with tempering (smaller $\alpha$). This supports the use of tempering when posterior overconfidence and stochastic prediction error dominate the regret. However, our result does not imply that excessive tempering ($\alpha \downarrow 0$) is uniformly better, due to the existence of the second deterministic RKHS term $B_\Theta \sqrt{\bar{\gamma}_{T,\alpha}^{\Theta}/\ell_\alpha}$. It follows that tempering is beneficial when stochastic uncertainty is the dominant part of the regret.

The dependence on $g$ is governed by two opposing terms in $\kappa_{T,\alpha,g}$. In Appendix \ref{subsec:g-monotonicity}, we show that $D_{g,\nu}C_{(g)}^{1/g}$ is nonincreasing in $g$ and diverges as $g\downarrow0$. In contrast, the $\tau_g^{-1}(.)$ term is nondecreasing in $g$ and diverges as $g \rightarrow \infty$. These opposing limits imply that the acquisition-specific factor is minimized in an interior positive-$g$ regime, although the minimizing value depends on the problem parameters. The classical EI choice $g=1$ is therefore a natural interior benchmark, but the theorem does not identify it as universally optimal. We next analyze the $g=0$ case separately.

\begin{theorem}[Regret bound for probability of improvement] \label{thm:pi-alpha}
Fix $g=0$, $\alpha\in(0,1]$, $T\ge1$, and $\delta\in(0,1)$. Suppose Assumption \ref{ass:uniform-theta} holds. Then, with probability at least $1-\delta$,
$$R_T= \mathcal O\!\left( \Phi_{T,\alpha}^{\Theta,\max}\kappa_{T,\alpha,0} \sqrt{\frac{\overline\gamma_{T,\alpha}^{\Theta}T}{\log(1+\alpha \sigma^{-2})}}  \right),$$
where
$$\kappa_{T,\alpha,0}=1+\sqrt{\frac{\alpha(T-1)+\sigma^2}{\sigma^2}}.$$
\end{theorem}

Theorem \ref{thm:pi-alpha} concerns zero-jitter PI, which is posterior-mean greedy and therefore contains no direct posterior-variance exploration incentive. The result carries an additional order-$\sqrt{T}$ factor in $\kappa_{T,\alpha,0}$, which gives a weaker worst-case guarantee than the positive-order $g$-EI bound. This supports the standard caution that PI can be overly exploitative, but only as an upper-bound separation; it does not prove that PI is uniformly poor in practical finite-sample settings. However, the result does not analyze the positive-jitter PI used in Figure~\ref{fig:one-d-pi-tempered}, nor does it by itself imply monotone improvement as $\alpha$ decreases.

When $\alpha=1$ and $g=1$, Theorem~\ref{thm:general-g-alpha} has the standard information-gain structure familiar from GP bandit analysis: a
confidence width multiplied by $\sqrt{T\gamma_T}$, together with an
EI-specific inverse-$\tau_1$ factor arising from the improvement acquisition.
This comparison is most natural with the information-gain regret analyses of
GP-UCB and kernelized bandit algorithms \citep{srinivas2012information,Chowdhury17,vakili2021information}, rather
than with classical noiseless EI convergence theory. The latter typically
studies fixed-prior, noiseless expected improvement and simple-regret-type
convergence behavior \citep{bull2011convergence}. Recent noisy EI analyses emphasize that cumulative regret depends delicately on the choice of incumbent \citep{wang2025bayesian}. Our result is complementary: it provides a self-normalized, frequentist RKHS guarantee for tempered, confidence-rescaled $g$-EI with adaptively learned kernel hyperparameters. Algorithmically, the special case $\alpha=1$, $g=1$, and $\nu_t=1$ reduces to unscaled EI. The regret theorem covers this unscaled rule
when the choice $\nu_t=1$ also satisfies Assumption \ref{ass:nu-scaling}.

Our theoretical results give tempering a finite-sample calibration interpretation in adaptive surrogate optimization. Reducing $\alpha$ does more than inflate the posterior variance: it strengthens kernel-ridge regularization, and systematically reduces the noise-driven confidence component. However,  aggressive tempering can leave the deterministic RKHS regularization component dominant. Our results also connect with generalized-Bayes motivations for using a learning rate below one under misspecification, with frequentist-style regret guarantees for adaptive experimental design. The fixed-$\alpha$, confidence-rescaled analysis intentionally isolates this mechanism; extending the finite-time regret
theory to the data-dependent tempering schedule developed in Section~\ref{sec:algorithmDesign} remains open.

\section{Tempering Schedule Design}
\label{sec:algorithmDesign}

Our regret bounds in Section~\ref{sec:GP-ei-alpha} show that, for a fixed acquisition parameter $g$ and fixed confidence-rescaling constants, reducing $\alpha$ weakly tightens the noise-driven confidence component of the bound. The deterministic RKHS component, however, prevents a uniform ordering of
the full regret bound over $\alpha$. This trade-off motivates calibrating the tempering level rather than choosing $\alpha$ as small as possible. In practice, when researchers have access to a large external dataset $\mathcal{D}_n$, existing methods may be used to estimate an appropriate tempering level $\hat{\alpha}$ (e.g., \cite{holmes2017assigning,miller2019coarsening}), see \citet{wu2023comparison} for a comprehensive review of how to choose $\alpha$. However, in many Bayesian optimization settings, such calibration dataset is unavailable, and picking the ``optimal'' $\alpha$ remains challenging. We hence develop an adaptive strategy for choosing $\alpha$ based on information-matching principle: the tempering level is chosen so that the prior expected information gain from a single observation under a possibly misspecified model matches the gain one would obtain if the model were correct. Our idea of designing a tempering schedule follows \citet{holmes2017assigning}, who propose to choose $\alpha$ such that 
\begin{equation} \label{eq:hw-alpha}
    \alpha = \left \{\frac{\int f(x; \theta_0) \Delta(x) dx}{\int f_0(x) \Delta(x) dx } \right \}^{\frac{1}{2}},
\end{equation}
    where $f_0(x)$ is the true data density, $f(x; \theta_0)$ is the pseudo true model that minimizes the KL divergence to $f_0(x)$, and $\Delta(x)$ is the Fisher-divergence information between a posterior update from its prior, with likelihood $f(x;\theta)$. Our adaptation of this principle leads to a simple, tuning-free, and computationally efficient procedure for choosing $\alpha$ online.
    
 Fix a design point $x \in \mathcal{X}$ and write the local parameter of interest as $\theta = f(x)$, under the working observation model, we have $y|\theta \sim \mathcal{N}(\theta, \sigma^2)$. Following \citet{holmes2017assigning}, we quantify the information of a single update at $x$ as 
\begin{align*}
    \Delta_{t}(x;y) &:= \mathbb{E}_{\theta \sim N(\mu_{t-1,1}(x), \sigma_{t-1,1}^2(x)) } \left [\left(\frac{\partial }{\partial \theta} \log p(y|\theta)\right)^2 \right] \\
    & = \mathbb{E} \left [\left (\frac{(y-\theta)}{\sigma^2} \right)^2 \right] = \frac{(y-\mu_{t-1,1}(x))^2+ \sigma_{t-1,1}^2(x)}{\sigma^4}.
\end{align*}

Note that we have used $(\mu_{t-1,1}(x),\sigma_{t-1,1}^2(x))$ to denote the regular untempered GP predictive mean and variance given $\mathcal{D}_{t-1}$. Under a likelihood power $\alpha$, the score is multiplied by $\alpha$ and the squared score by $\alpha^2$. Consequently, the one-step information measure under tempering is $\Delta_{t,\alpha}(x;y)=\alpha^2\Delta_t(x;y)$.

We now take expectations of $\Delta_t(x;y)$ with respect to two different distributions for $y | x$. When our GP model is correctly specified, we have $y|x \sim N(\theta_0(x), \sigma^2)$, where $\theta_0(x)$ represents the true signal value at $x$. It follows $$\mathbb E_{\mathrm{model}}\!\left[\Delta_t(x;Y)\mid x\right]=\frac{\sigma_{t-1,1}^2(x)+\sigma^2+(\theta_0(x)-\mu_{t-1,1}(x))^2}{\sigma^4}.$$
When our model is misspecified, we let $y|x \sim f_0(.|x)$ with mean $m_0(x)$, variance $v_0(x)$. Then we have
$$\mathbb E_{0}\!\left[\Delta_t(x;Y)\mid x\right]
=
\frac{\sigma_{t-1,1}^2(x)+v_0(x)+(m_0(x)-\mu_{t-1,1}(x))^2}{\sigma^4}
=
\frac{\sigma_{t-1,1}^2(x)+\mathrm{MSE}_t(x)}{\sigma^4},$$
where $\mathrm{MSE}_t(x):=\mathbb E_0[(Y-\mu_{t-1,1}(x))^2\mid x]$. 

Following \citet{holmes2017assigning}, we choose a local tempering level $\alpha_t(x)$ so that the tempered information under the true world matches the information under the as-if-correct model world so that $\alpha_t(x)^2\;\mathbb E_{0}\!\left[\Delta_t(x;Y)\mid x\right]= \mathbb E_{\mathrm{model}}\!\left[\Delta_t(x;Y)\mid x\right].$ Therefore,

\begin{equation} \label{eq:alpha-local}
    \alpha_{t}^2(x) = \frac{\mathbb E_{\mathrm{model}}\!\left[\Delta_t(x;Y)\mid x\right]}{\mathbb E_{0}\!\left[\Delta_t(x;Y)\mid x\right]} = \frac{\sigma_{t-1,1}^2(x)+\sigma^2+(\theta_0(x)-\mu_{t-1,1}(x))^2}{\sigma_{t-1,1}^2(x)+v_0(x)+(m_0(x)-\mu_{t-1,1}(x))^2}.
\end{equation}
Under correct specification, $(m_0(x),v_0(x))=(\theta_0(x),\sigma^2)$, and
(\ref{eq:alpha-local}) gives $\alpha_t(x)=1$.

For implementation, it is impractical to specify different $\alpha$ values at each $x$ and we need to aggregate across visited points $x_t$ and use prequential quantities in the schedule.  Although $(m_0(x), v_0(x))$ is unknown, observe that by the standard bias-variance trade-off formula, $\text{MSE}(x):= \mathbb{E}[(y-\mu_{t-1,1}(x))^2 | x] =  v_0(x) + (m_0(x)-\mu_{t-1,1}(x))^2$. Based on $\mathcal{D}_t$, we compute the following prequential estimator of the global tempering weight at time $t$: 
\begin{align}\label{eq:hw-alpha-es}
\widehat{\alpha}_t = \min \left\{\sqrt{\frac{\frac{1}{t}\sum_{s=1}^t\!\big\{\sigma_{s-1,1}^2(x_s)+\widehat\sigma_t^2\big\}}
     {\frac{1}{t}\sum_{s=1}^t\!\big\{\sigma_{s-1,1}^2(x_s)+\widehat{\text{MSE}}_s\big\}}}, \quad  1\right \},
\qquad
\widehat{\text{MSE}}_s := \big(y_s-\mu_{s-1,1}(x_s)\big)^2,
\end{align}
where $\widehat\sigma_t^2$ denotes a global estimate of the observation-noise variance available at time $t$.
This estimator depends only on prequential quantities $(\mu_{s-1,1}(x_s),\sigma_{s-1,1}^2(x_s))$ and adopts a conservative information-matching approximation by setting the model-side bias term
$\big(\theta_0(x_s)-\mu_{t-1,1}(x_s)\big)^2$ to zero in the as-if-correct world.
In practice, \eqref{eq:hw-alpha-es} is easy to compute: after observing $y_t$, we compute $\widehat\alpha_t$ from $\mathcal D_t$ and use it to construct the tempered surrogate and select
$x_{t+1}$. This typically yields $\widehat{\alpha}_T\le 1$ early in BO, since $\widehat{\mathrm{MSE}}_t$ reflects both noise and misspecification; as posterior uncertainty contracts under correct specification and with a consistent $\widehat\sigma_t^2$, the ratio drifts toward $1$. We present our entire BO framework with scheduled tempered posterior in Algorithm \ref{alg:bo-alpha-gp-gei} of Appendix \ref{sec:algo-pseudo}. 

Our proposed tempering schedule in (\ref{eq:hw-alpha-es}) is highly general and remains coherent even when applied to acquisition functions beyond the $g$-EI family. We conclude this section with Proposition \ref{prop:calibration2}, which establishes that the proposed tempering schedule is well-behaved asymptotically. 
In the well-specified regime, the schedule satisfies $\widehat{\alpha}_t\xrightarrow{p}1$ under the stated conditions, indicating no tempering is needed. In contrast, when the model is misspecified, we have $\widehat{\alpha}_t$ converging to a limit smaller than $1$, reflecting the reduced information that each observation contributes relative 
to the misspecified surrogate.

\begin{proposition}[Limits of the Adaptive Schedule]\label{prop:calibration2}
Let $\{\mathcal{F}_s\}_{s\geq0}$ be the filtration generated by the BO history, and assume $y_s=f(x_s)+\varepsilon_s$, where
\[
\mathbb{E}[\varepsilon_s\mid\mathcal{F}_{s-1}]=0,\qquad
\mathbb{E}[\varepsilon_s^2\mid\mathcal{F}_{s-1}]=\sigma^2>0,\qquad
\mathbb{E}[\varepsilon_s^4\mid\mathcal{F}_{s-1}]\leq M_4<\infty
\]
almost surely. Define $\mathrm{PV}_t :=t^{-1}\sum_{s=1}^t \sigma^2_{s-1,1}(x_s)$, $\mathrm{MSE}_t :=t^{-1}\sum_{s=1}^t\bigl(y_s-\mu_{s-1,1}(x_s)\bigr)^2$, and $e_s:= \mu_{s-1,1}(x_s)-f(x_s)$. Assume that $e_s$ is $\mathcal F_{s-1}$-measurable and $|e_s|\leq E<\infty$ almost surely for all $s$, that $\widehat\sigma_t^2\xrightarrow{p}\sigma^2$, and that $\mathrm{PV}_t\xrightarrow{p}\mathrm{PV}_\infty$ for some finite constant $\mathrm{PV}_\infty\geq0$. Then
\begin{enumerate}[label=\roman*)]
    \item Well-specified: if $t^{-1} \sum_{s=1}^t e_s^2 \xrightarrow{p}0$, then $\widehat{\alpha}_t  \xrightarrow{p}1$.
    \item Misspecified: if $t^{-1} \sum_{s=1}^t e_s^2 \xrightarrow{p} b^2$ for some $0<b^2<\infty$, then
    \[
    \widehat{\alpha}_t \xrightarrow{p}  \sqrt{\frac{\mathrm{PV}_\infty+\sigma^2}{\mathrm{PV}_\infty+\sigma^2+b^2}}<1.
    \]
\end{enumerate}

\end{proposition}

\section{Experiments}
\label{sec:exps}
\subsection{Simulation study on benchmark optimization functions} \label{subsec:simulation}
We evaluate the effects of $\alpha$ and $g$ on a suite of challenging benchmark optimization functions \citep{simulationlib}. The function suite contains $47$ distinct functions exhibiting multiple local minima and diverse geometries. For functions without a fixed intrinsic dimension, we consider dimension $p\in\{5,10\}$; functions with a native dimension are used as is. This yields $61$ distinct function-dimension instances in total.

For each of the $61$ benchmark instances, we run $5$ independent Bayesian optimization trials (different random seeds) for every configuration of $(\alpha,g)$, with $\alpha\in\{1,\mathrm{Tempered}\}$ and $g\in\{0,1,2\}$. The GP posterior is adaptively tempered as described in Section \ref{sec:algorithmDesign}. For a $p$-dimensional objective function, we initialize with $n_0=\max\{5,2p\}$ observations and run the optimization for $T=10p$ additional evaluations.\footnote{We scale the evaluation budget linearly with dimension to compare the methods in a limited-evaluation regime. Substantially larger budgets for
low-dimensional functions tend to produce ceiling effects, as many methods eventually locate a global maximizer eventually.}  We consider a challenging high-noise regime $\sigma=2$. The corresponding low-noise results are reported in Appendix~\ref{subsec:appendix-simulation}. 

To assess robustness, we consider three additional noise settings beyond simple Gaussian noise $\epsilon \sim \mathcal{N}(0,\sigma^2)$, all normalized to have marginal standard deviation $\sigma$: 
\begin{enumerate}[label=(\alph*)]
    \item Gaussian mixture noise: $\epsilon \sim (1-\pi)\mathcal{N}(0,\widetilde{\sigma}^2) +\pi\mathcal{N}\bigl(0,(c\widetilde{\sigma})^2\bigr)$, where $\pi=0.1, c=5$, and $\widetilde{\sigma}=\frac{\sigma}{\sqrt{(1-\pi)+\pi c^2}}$ so that marginal standard deviation remains $\sigma$;
    \item Heteroscedastic noise: We allow the observations to be noisier near the optimum $x^\star$:
    $$\epsilon(x)\sim \mathcal{N}\left(0,\sigma^2\frac{\lambda(x)}{Z}\right), \qquad
      \lambda(x) = 1+\rho\exp\left\{ -\frac{\|x-x^\star\|^2}{\kappa p} \right\},$$
     where $\rho=\frac{1}{4}$, $\kappa=4$, and the normalizing constant $Z$ is estimated via Monte Carlo.
     \item Heavy-tailed Student's t-noise: $\epsilon = \sigma \sqrt{\frac{\nu-2}{\nu}}T_\nu$, with $\nu=4$.
\end{enumerate}
For each noise model, we conduct $61\times 5\times 2\times 3 = 1{,}830$ distinct BO runs.
The Gaussian experiment is closest to the theorem assumptions. We note that the Student-$t$ experiment is deliberately outside the conditional sub-Gaussian setting, and the adaptive $\widehat\alpha_t$ regime is not directly covered by the fixed-$\alpha$ regret theorem.

Table \ref{tab:alpha_noise_sigma2} summarizes pairwise comparisons of final simple regret. For each noise model and each value of $g$, final simple regret is averaged over the five random seeds for each benchmark instance.  The wins column gives the number of strict lower-regret outcomes for tempering versus $\alpha=1$, while the non-tie win rate conditions on instances where the two methods differ. Since exact ties can happen, the strict wins need not sum to $61$. The final two columns report paired improvements in average rank, oriented so that positive values favor tempering. The formal definitions of these metrics are given in Table \ref{tab:alpha_noise_sigma2}.

Tempering is favored in most regimes. Across the twelve noise-$g$ settings, it obtains more strict wins than $\alpha=1$ in nine settings and ties in one. The strongest gains occur at $g=1$: tempering wins $39$--$17$ under Gaussian noise and $36$--$18$ under heteroscedastic noise, with one-sided paired Wilcoxon tests giving $p=0.005$ and $p=0.013$ respectively. The Gaussian-mixture setting is also consistently favorable, with tempering winning all three values of $g$ and achieving non-tie win rates between $58.3\%$ and $62.7\%$.

The dependence on $g$ suggests that tempering is most effective in the intermediate regime represented by $g=1$. In the low-noise experiments reported in Appendix~\ref{subsec:appendix-simulation}, tempering is
especially helpful for $g=0$, where PI is highly exploitative and therefore vulnerable to premature posterior concentration. In the high-noise setting considered here, noisy comparisons with the incumbent may reduce the benefit of variance inflation for PI. At the other extreme, $g=2$ places greater weight on large improvements and already induces stronger exploration in many instances, leaving less room for tempering to alter the acquisition landscape. The strongest and most stable improvements therefore
occur at $g=1$, which balances the probability and magnitude of improvement. Across noise models, tempering is most consistently favorable under Gaussian and Gaussian-mixture noise; the heteroscedastic results depend more strongly on $g$, while the Student-$t$ results are mixed. This pattern is consistent with tempering as a correction for posterior overconfidence rather than as a substitute for a robust heavy-tailed likelihood.

\begin{table}[t]
\centering
\footnotesize
\setlength{\tabcolsep}{6pt}
\renewcommand{\arraystretch}{1.2}
\caption{Pairwise regret comparison of tempering against the untempered baseline
$\alpha=1$.}
\label{tab:alpha_noise_sigma2}
\begin{threeparttable}
\begin{tabular}{@{}l c c r r@{}}
\toprule
Noise & $g$ & Wins: Temp. vs. $\alpha=1$
& Non-tie win rate
& $\Delta$ rank \\
\midrule

Gaussian
  & 0 & $31$--$25$ & 55.4\% & $+0.10$ \\
  & 1 & $39$--$17$ & 69.6\% & $+0.36$ \\
  & 2 & $28$--$28$ & 50.0\% & $+0.00$ \\

\addlinespace[3pt]
Gaussian mixture
  & 0 & $35$--$25$ & 58.3\% & $+0.16$ \\
  & 1 & $37$--$22$ & 62.7\% & $+0.25$ \\
  & 2 & $34$--$24$ & 58.6\% & $+0.16$ \\

\addlinespace[3pt]
Heteroscedastic
  & 0 & $25$--$33$ & 43.1\% & $-0.13$ \\
  & 1 & $36$--$18$ & 66.7\% & $+0.30$ \\
  & 2 & $31$--$26$ & 54.4\% & $+0.08$ \\

\addlinespace[3pt]
Student-$t$
  & 0 & $31$--$26$ & 54.4\% & $+0.08$ \\
  & 1 & $28$--$31$ & 47.5\% & $-0.05$ \\
  & 2 & $31$--$27$ & 53.4\% & $+0.07$ \\

\bottomrule
\end{tabular}

\begin{tablenotes}[flushleft]
\scriptsize
\item Notes:
Entries compare the final simple regret over five seeds for each of the
$61$ benchmark instances. ``Wins: Temp. vs. $\alpha=1$'' reports the
numbers of strict lower-regret outcomes for the two methods, excluding ties.
The non-tie win rate is the fraction of non-tied comparisons won by
tempering. $\Delta$ rank is the average rank of $\alpha=1$ minus the
average rank of tempering, so positive values favor tempering.
\end{tablenotes}
\end{threeparttable}
\end{table}

To conclude the simulation study, Figure~\ref{fig:logregret_2x3} visualizes the regret trajectories for six benchmark instances under Gaussian noise, with $g\in\{0,1\}$. Each panel reports the median simple regret across the five random seeds, together with the interquartile range. The plots make clear that the gains from tempering are not merely terminal effects: in several cases, the tempered posterior drives regret down earlier and continues to attain lower regret throughout the search. This behavior is consistent with the interpretation that tempering can moderate posterior overconfidence, delay premature commitment to locally attractive regions, and preserve exploration during the early stages of
optimization.

\begin{figure}[t]
  \centering
  \includegraphics[width=0.75\linewidth, height=0.6\linewidth]{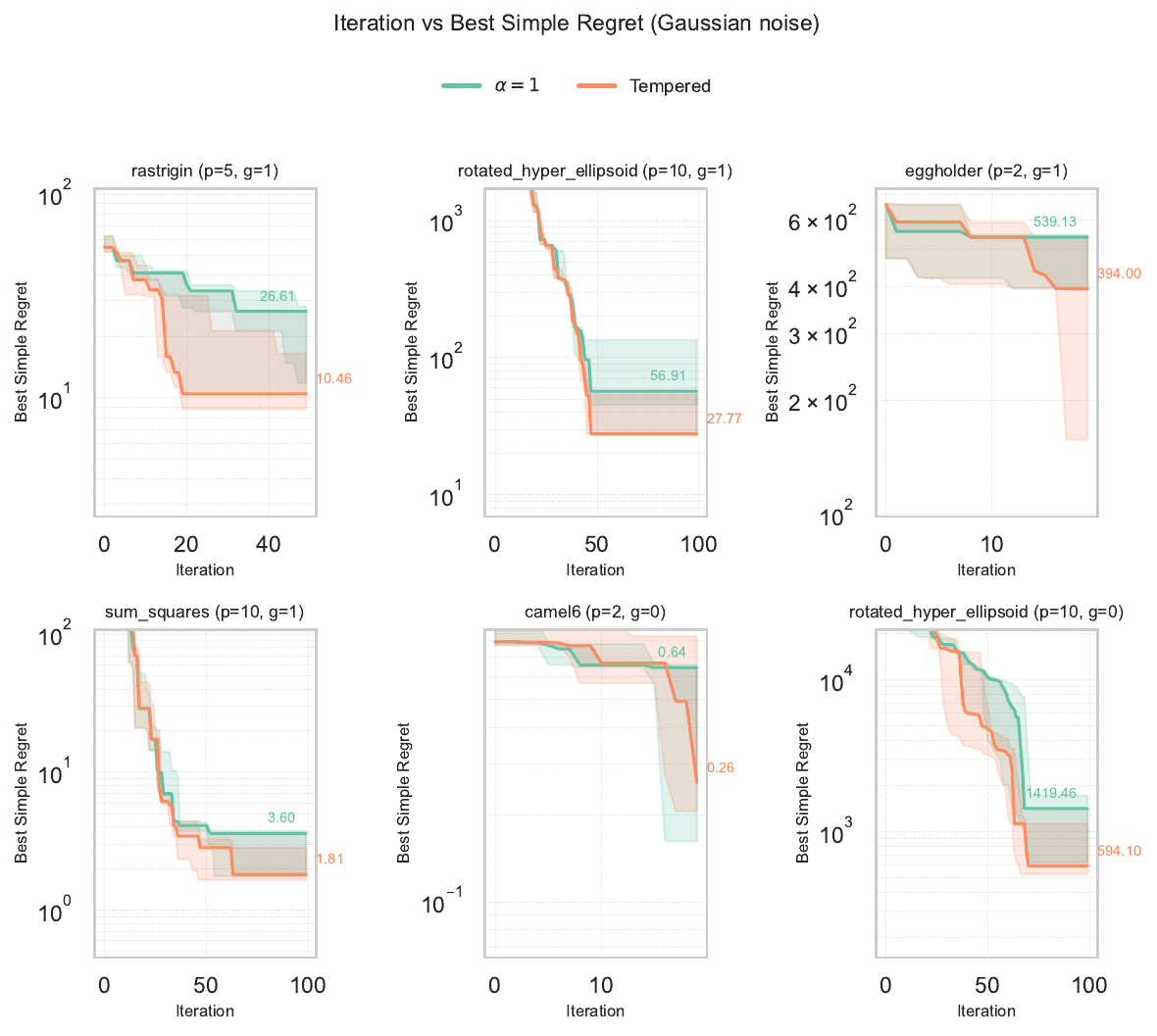}
  \caption{Effects of Tempering on Selected benchmarks, regret magnitude is shown on log scales.}
  \label{fig:logregret_2x3}
  \captionsetup{font=footnotesize}
  \vspace{2mm}
  {\footnotesize \emph{Notes.} Ribbons are the 25-75\% interquartile range across seeds at each iteration.  Note that for a $p$-dimensional function, we conducted $10p$ evaluations. }
\end{figure}

\subsection{Real data: materials optimization}

We use the \textit{Fe--Ga--Pd} materials-optimization dataset originally introduced by \citet{Long2007RapidMapping} and later adopted in the Bayesian active learning literature \citep{Kusne2020OnTheFly}. Each sample corresponds to a ternary alloy characterized by its composition fractions of iron (Fe), gallium (Ga), and palladium (Pd), which together satisfy the simplex constraint 
$\text{Fe} + \text{Ga} + \text{Pd} = 1, \quad \text{Fe}, \text{Ga}, \text{Pd} \ge 0$.
The response variable is the \textit{remanent magnetization}, measured as the output voltage of a scanning SQUID (Superconducting Quantum Interference Device) microscope \citep{Long2007RapidMapping}. Our objective is to identify the alloy composition that maximizes the SQUID voltage, corresponding to the highest remanent magnetization. Compositions with strong magnetic responses are of interest for applications in actuators and sensors. Because experimental evaluation of new compositions is both costly and time-consuming, Bayesian optimization provides a natural and efficient framework for this search.

Given the simplex constraint, we reparameterize the domain in two dimensions by $(\text{Fe}, \text{Ga})$, with $\text{Pd} = 1 - \text{Fe} - \text{Ga}$. For offline benchmarking, we fit a GP  model with a Matérn-$5/2$ kernel to all $278$ available measurements and treat its posterior mean function as the ground-truth black-box objective for subsequent experimentation.

To systematically examine the effects of posterior tempering accross different acquisition functions, we evaluate each method over real values of $g \in \{0, 0.5, 1, 1.5, 2\}$. \footnote{Because non-integer values of $g$ do not have easy to compute closed form solutions, we did not consider real-valued $g$ case in the large-scale simulation study in Section \ref{subsec:simulation}.} In additional to the $g$-EI family, we also include the Max-value Entropy Search (MES) acquisition function, 
which can be viewed as a variational approximation to MES \citep{cheng2025unified}. We consider the standard GP posterior $(\alpha=1)$. and the tempered posterior described in Section \ref{sec:algorithmDesign}. For each combination of $\alpha$ and acquisition function, we conduced $10$ independent random seeds of simulations, each with different initial sample size of 5.

Table \ref{tab:bo_g_alpha_voltage} compares the average best-observed SQUID voltages under different acquisition functions and tempering level $\alpha$. Tempering yields only marginal gains for MES, suggesting that when the acquisition function is already highly exploratory, additional variance tempering provides limited benefit. In contrast, tempering substantially improve the performance for the $g=0$ case, consistent with findings in Appendix \ref{subsec:low-noise} that tempered posteriors enhance probability-improvement behavior by promoting benign exploration. Notable gains are also observed for $g \in \{0.5, 1, 1.5 \}$, indicating that tempering remains advantageous when the acquisition balances exploration and exploitation. However, for $g=2$, tempering provides no improvement, reinforcing the pattern that excessive exploration diminishes the utility of posterior tempering.

\begin{table}[t]
\centering
\caption{Bayesian optimization performance by Acquisition Function and $\alpha$. 
Entries are average best observed SQUID voltage across all $10$ independent seeds at iterations 5-30.}
\label{tab:bo_g_alpha_voltage}
\begin{tabular}{
    l      
    c      
    S[table-format=1.2, round-mode=places, round-precision=2] 
    S[table-format=1.2, round-mode=places, round-precision=2] 
    S[table-format=1.2, round-mode=places, round-precision=2] 
    S[table-format=1.2, round-mode=places, round-precision=2] 
    S[table-format=1.2, round-mode=places, round-precision=2] 
    S[table-format=1.2, round-mode=places, round-precision=2] 
}
\toprule
\multicolumn{1}{c}{Acquisition $g$} & \multicolumn{1}{c}{$\alpha$} & 
\multicolumn{1}{c}{iter5} & \multicolumn{1}{c}{iter10} &
\multicolumn{1}{c}{iter15} & \multicolumn{1}{c}{iter20} &
\multicolumn{1}{c}{iter25} & \multicolumn{1}{c}{iter30} \\
\midrule
\multirow{2}{*}{MES} & 1  & \bfseries 5.244187 & \bfseries 5.542870 & \bfseries 5.687650 & 5.906776 & \bfseries 6.292336 & {\bfseries 6.54} \\
                     & Tempered & 5.047614 & 5.158649 & 5.551071 & \bfseries 6.017854 & 6.017854 & 6.105913 \\
\addlinespace[2pt]
\multirow{2}{*}{0.0} & 1  & \bfseries 5.072557 & 5.216988 & 5.335539 & 5.335539 & 5.389902 & 5.394254 \\
                     & Tempered & 4.985941 & \bfseries 5.509554 & \bfseries 5.985530 & \bfseries 5.990089 & \bfseries 6.256510 & \bfseries 6.256510 \\
\addlinespace[2pt]
\multirow{2}{*}{0.5} & 1  & \bfseries 5.435038 & \bfseries 6.089558 & 6.138694 & 6.646735 & 6.833154 & 6.834120 \\
                     & Tempered & 5.434079 & 5.984924 & \bfseries 6.998355 & \bfseries 7.017270 & \bfseries 7.018159 & \bfseries 7.047042 \\
\addlinespace[2pt]
\multirow{2}{*}{1.0} & 1  & \bfseries 5.290367 & \bfseries 5.439139 & 5.711982 & 5.793758 & 6.005960 & 6.208248 \\
                     & Tempered & 5.197762 & 5.202578 & \bfseries 5.852262 & \bfseries 6.858881 & \bfseries 7.192716 & \bfseries 7.264755 \\
\addlinespace[2pt]
\multirow{2}{*}{1.5} & 1  & 5.134462 &  5.399661 & 5.568679 & 5.683474 & 5.683833 & 5.806065 \\
                     & Tempered & 5.134462 & 5.399661 &  \bfseries 5.938927 & \bfseries 6.053723 &  \bfseries 6.053723 &  \bfseries 6.053723 \\
\addlinespace[2pt]
\multirow{2}{*}{2.0} & 1  & \bfseries 5.752573 & \bfseries 6.399660 & \bfseries 6.399660 & \bfseries 6.686206 & \bfseries 7.277116 & \bfseries 7.444661 \\
                     & Tempered & 5.511485 & 5.715830 & 5.765498 & 6.709797 & 6.709797 & 6.771566 \\
\bottomrule
\end{tabular}
\end{table}

\section{Discussion}\label{Discussion}
\noindent
We study Bayesian optimization with tempered posterior surrogates and generalized expected  improvement  acquisition functions, where the likelihood power $\alpha\in(0,1]$ regulates how strongly each new evaluation updates the surrogate and the exponent $g$ interpolates between probability of improvement and classic expected improvement. For a fixed $\alpha$, and a confidence-rescaled acquisition, our regret analysis makes the dependence on $(\alpha,g)$ explicit through information-gain quantities and exhibits a trade-off rather than a uniform benefit from tempering. Separately, we propose a sequential tuning-light schedule for choosing $\alpha_t$ and prove its calibration limit under the conditions of Proposition~\ref{prop:calibration2}. Empirically, we show that tempering tends to stabilize improvement-based BO in regimes where the surrogate can become overconfident under localized sampling, with the largest gains typically observed for not overly exploratory settings (e.g., PI and EI).

Our theoretical development focuses on generalized improvement-based acquisitions (the $g$-EI family) together with GP surrogates. There are are several potential extensions of our work. First, it would be of interest to understand whether likelihood tempering yields analogous benefits for other acquisition functions, such as knowledge gradient \citep{frazier2009knowledge} and entropy-based criteria (e.g., maximum-entropy search). Second, it remains open to derive regret guarantees for tempered updates when the surrogate departs from GP regression and scalar outputs, including tree-based  \citep{boyne2025bark,ohagan2024treebanditsgenerativebayes,luo2024hybrid}, neural-network surrogates \citep{srinivas2009gaussian} and tensor-functional outputs \citep{luo2025efficient}. Third, while our analysis treats $(\alpha,g)$ as fixed, developing methods that learn joint schedules $(\alpha_t,g_t)$ based on the current design with finite-time guarantees would strengthen the connection between acquisition-level exploration  and surrogate-level tempering trade-offs, even in BOI settings described in \citep{wang2025bayesian}. More broadly, the perspective of tempering as an online learning-rate mechanism may extend beyond BO to other sequential decision problems, including reinforcement learning \citep{Sutton1998}, dynamic treatment regimes \citep{murphy_2003}, and adaptive experimental design and testing \citep{10.1214/23-STS915,li2025deepcomputerizedadaptivetesting,srinivas2009gaussian}. Finally, it would be valuable to understand how tempering interacts with recent advances targeting challenging BO regimes, such as in high-dimensional domains, nonsmooth or nonstationary objectives, and mixed discrete-continuous search spaces, where complementary tools and benchmarks are actively being developed \citep{luo2022sparse,luo2024hybrid,cho2025surrogate,risser2024compactly,noack2025gp2scale}.

\section*{Disclosure statement}\label{disclosure-statement}
 HL was supported by U.S. Department of Energy under Contract DE-AC02-05CH11231 and U.S. National Science Foundation NSF-DMS 2412403. Data and code have been made available at the following URL: \url{https://github.com/JiguangLi/Bayesian-Optimization-via-Tempered-Posterior}.

\spacingset{1.2}
\bibliographystyle{plainnat}
\bibliography{vbo}
\clearpage
\newpage{}
\spacingset{1.2}
\appendix
\begin{center}
    \section*{SUPPLEMENTARY MATERIALS}
\end{center}
\section{Additional Motivation Results}
\label{sec:additional_toy_result}
\begin{table}[h]
\centering
\begin{tabular}{c|cccccc}
\hline
$\alpha$ & df = 1 & df = 2 & df = 5 & df = 10 & df = 20 & df = 100 \\
\hline
0.1 & \textbf{0.9963} & \textbf{0.9875} & \textbf{0.9856} & \textbf{0.9857} & \textbf{0.9857} & \textbf{0.9857} \\
0.5 & 0.9848 & 0.7644 & 0.7687 & 0.7608 & 0.7631 & 0.7631 \\
1.0 & 0.9665 & 0.7737 & 0.9842 & 0.9846 & 0.9848 & 0.9850 \\
\hline
\end{tabular}
\caption{\label{tab:example_PI}Best observed $y$ from PI runs for each $(\alpha, \text{df})$. Maximum per column is bold.}
\end{table}

\begin{table}[h]
\centering
\begin{tabular}{c|cccccc}
\hline
$\alpha$ & df = 1 & df = 2 & df = 5 & df = 10 & df = 20 & df = 100 \\
\hline
0.1 & 0.9787 & 0.9844 & 0.9919 & 0.9909 & 0.9904 & 0.9899 \\
0.5 & \textbf{1.3515} & \textbf{0.9901} & \textbf{1.0000} & \textbf{0.9957} & \textbf{0.9940} & \textbf{0.9925} \\
1.0 & 0.9875 & 0.9852 & 0.9883 & 0.9879 & 0.9879 & 0.9878 \\
\hline
\end{tabular}
\caption{\label{tab:example_EI}Best observed $y$ from EI runs for each $(\alpha, \text{df})$. Maximum per column is bold.}
\end{table} 

Figure \ref{fig:one-d-pi-tempered} in the main text presents a controlled one dimensional example in which three values of the tempering parameter \(\alpha \in \{0.1, 0.5, 1.0\}\) are used with probability of improvement. The figure shows that when the surrogate is fit with a tempered posterior with small $\alpha$, the posterior variance is effectively enlarged and the probability of improvement curve places mass in regions that have not yet been explored. As a result the policy that uses \(\alpha=0.1\) is able to locate a point with noticeably larger observed value than the policies that keep \(\alpha=0.5\) or \(\alpha=1.0\). We interpret  this as evidence that tempering can counter posterior over confidence and can restore exploratory moves when the acquisition is otherwise very exploitative.

In BO, model misspecification and the reuse of surrogate fits across iterations can make the GP posterior too sharp. A natural question is therefore whether the effect persists once we move away from the convenient Gaussian noise. The additional experiment keeps \eqref{eq:blackbox_fun}, the same acquisition, but it replaces the Gaussian noise by Student \(t\) noise with several degrees of freedom $\alpha$. 

Table \ref{tab:example_PI} reports the best value of the observed response that each of the three tempered policies attains, across degrees of freedom equal to \(1,2,5,10,20,100\). For every column the largest value is obtained by \(\alpha=0.1\). The advantage is most pronounced for degrees of freedom equal to \(1\) and \(2\), which correspond to the heaviest tails and therefore to the strongest mismatch. For larger degrees of freedom, the values for \(\alpha=0.1\) and \(\alpha=1.0\) become close, which is expected because the Student \(t\) noise approaches a Gaussian distribution and the original example already showed that the untempered policy can perform reasonably in that case.  

Table  \ref{tab:example_EI} reports the analogous experiment for expected improvement. In that case the maximum in each column is obtained for \(\alpha=0.5\). This agrees with the broader empirical conclusions of the paper, where it is noted that acquisitions that are already more balanced between exploitation and exploration benefit from a moderate amount of tempering but do not require the very small values of $\alpha$ that were effective for probability of improvement. 

This set of additional experiment shows that the main effect described in the text is acquisition dependent. For the strongly exploitative rule the smallest $\alpha$ is best, and for the more balanced rule an intermediate $\alpha$ is preferable. This is in harmony with the interpretation in the main text that tempered posteriors act as a control on the effective confidence of the surrogate.
\section{Connection to Stochastic Gradient Descent}\label{sec:motivations:equivalence}
This subsection presents an exact per iteration equality that links the $\alpha$ tempered Gaussian process update of the posterior mean with a single preconditioned stochastic gradient step on a one point loss. The statement holds for Gaussian observation noise and is conditional on the kernel fitted at the previous iteration, which is the convention adopted in sequential GP regression.

Consider an unknown target $f^\star$ and noisy observations $y_t=f^\star(x_t)+\varepsilon_t$ with $\varepsilon_t\sim\mathcal N(0,\sigma^2)$. Let the prior be $f\sim\mathcal{GP}(0,k)$ and after $t-1$ steps write the posterior as $\mathcal{GP}(\mu_{t-1},k_{t-1})$. For a candidate input location $x$ denote the predictive variance by $v_{t-1}(x):=k_{t-1}(x,x)$ and the cross covariance by $c_{t-1}(x,x_t):=k_{t-1}(x,x_t)$. At step $t$ temper the likelihood by a factor $\alpha_t>0$. Since the likelihood is Gaussian, tempering is equivalent to replacing the noise variance by $\sigma^2/\alpha_t$. The standard conditioning formula then yields the one point update of the mean
\begin{equation}\label{eq:bo_update_gain}
\mu_t(x)
\,=\,
\mu_{t-1}(x)
\,+\,
\frac{c_{t-1}(x,x_t)}{v_{t-1}(x_t)+\sigma^2/\alpha_t}
\left(y_t-\mu_{t-1}(x_t)\right).
\end{equation}
It is convenient to factor the gain as a scalar times a direction,
\begin{equation}\label{eq:eta_direction}
\frac{c_{t-1}(x,x_t)}{v_{t-1}(x_t)+\sigma^2/\alpha_t}
\,=\,
\eta_t(\alpha_t)\,c_{t-1}(x,x_t),
\qquad
\eta_t(\alpha_t)
\,=\,
\frac{\alpha_t}{\sigma^2+\alpha_t\,v_{t-1}(x_t)}.
\end{equation}
Substituting \eqref{eq:eta_direction} into \eqref{eq:bo_update_gain} gives the compact form
\begin{equation}\label{eq:bo_update_compact}
\mu_t
\,=\,
\mu_{t-1}
\,+\,
\eta_t(\alpha_t)\,
\left(y_t-\mu_{t-1}(x_t)\right)\,
c_{t-1}(\cdot,x_t).
\end{equation}

Now consider a functional optimization view at step $t$ based on the one point objective
\begin{equation}\label{eq:one_point_objective}
\mathcal J_t(g)
\,=\,
\frac{\alpha_t}{2\sigma^2}\left(y_t-g(x_t)\right)^2
\,+\,
\frac{1}{2}\,\|g-\mu_{t-1}\|_{\mathcal H_{k_{t-1}}}^{2},
\end{equation}
where $\mathcal H_{k_{t-1}}$ is the reproducing kernel Hilbert space associated with $k_{t-1}$. The negative log of the tempered posterior is equal to $\mathcal J_t$ up to an additive constant, so the minimizer of \eqref{eq:one_point_objective} coincides with the tempered posterior mean. By the representer theorem the minimizer has the form $g=\mu_{t-1}+\beta\,c_{t-1}(\cdot,x_t)$. Differentiation with respect to $\beta$ and evaluation at zero derivative give
\begin{equation}\label{eq:beta_star}
\beta^\star
\,=\,
\frac{\alpha_t}{\sigma^2+\alpha_t v_{t-1}(x_t)}\,
\left(y_t-\mu_{t-1}(x_t)\right)
\,=\,
\eta_t(\alpha_t)\,
\left(y_t-\mu_{t-1}(x_t)\right).
\end{equation}
Substituting $g=\mu_{t-1}+\beta^\star c_{t-1}(\cdot,x_t)$ leads exactly to the update in \eqref{eq:bo_update_compact}. Hence the tempered Gaussian update with $g$-EI equals a single preconditioned stochastic gradient step with direction $c_{t-1}(\cdot,x_t)$ and learning rate $\eta_t(\alpha_t)$ applied to the loss \eqref{eq:one_point_objective}.

We summarize the per iteration correspondence for a fixed datum $(x_t,y_t)$ for the mean function $\mu_t$:
\begin{align}\label{eq:equivalence_summary}
\text{tempered update with parameter }\alpha_t
\quad\Longleftrightarrow\quad\\
\text{stochastic gradient step with learning rate }\eta_t(\alpha_t)
\,=\,
\frac{\alpha_t}{\sigma^2+\alpha_t v_{t-1}(x_t)}.
\end{align}
When $\alpha_t$ is small the step is conservative and the mean update is close to the previous iterate, while large $\alpha_t$ values produce aggressive updates that are limited by the local variance $v_{t-1}(x_t)$. 

This identity is an equality of model updates that applies once the location $x_t$ and the observation $y_t$ are given, and shares a very similar form of robust tempering \citep{holmes2017assigning}. 
If a stochastic gradient method is supplied with the same sequence of data pairs then \eqref{eq:bo_update_compact} ensures that the two procedures produce the same mean path at the granularity of single iterations.

\section{Linear Surrogate BO Algorithm}

\begin{algorithm}[H]
\caption{\textsc{Tempered-EI Bayesian Optimization (Linear Surrogate)}}
\begin{algorithmic}[1]
\Require Compact domain $\mathcal{X}\subset\mathbb{R}^p$; feature map $\psi:\mathcal{X}\to\mathbb{R}^d$ with $\|\psi(x)\|_2\le L$; prior precision $\lambda>0$; noise std.\ $\sigma>0$; 
\Statex \hspace{1.6em} budget $T\in\mathbb{N}$; initial design $\{(x_s,y_s)\}_{s=1}^{t_0}$ (possibly $t_0=0$); tempering policy $\alpha_t\in(0,1]$
\Ensure Sequence $\{x_t\}_{t=1}^T$ and incumbent $\widehat{x}^*\in\arg\max_{x\in\mathcal{X}}\mu_{T,\alpha_T}(x)$
\State $t\gets t_0$; $\mathcal{D}_{t}\gets\{(x_s,y_s)\}_{s=1}^{t}$; $X_t\gets [\psi(x_1),\dots,\psi(x_t)]^\top$; $y_{1:t}\gets (y_1,\dots,y_t)^\top$
\State \textbf{while} $t<T$ \textbf{do}
\State \quad Pick tempering $\alpha_{t+1}\in(0,1]$ via a policy (e.g.\ fixed $\alpha$, or $\textsc{TemperSchedule}$)
\State \quad \textbf{Posterior update} w.r.t.\ $\alpha_{t+1}$ (Alg.~\ref{alg:tempered-posterior}): compute
\[
V_{t+1,\alpha}=\lambda I_d+\tfrac{\alpha_{t+1}}{\sigma^2}X_t^{T}X_t,\quad
\Sigma_{t+1,\alpha}=V_{t+1,\alpha}^{-1},\quad
\mu_{t+1,\alpha}=\Sigma_{t+1,\alpha}\tfrac{\alpha_{t+1}}{\sigma^2}X_t^{T}y_{1:t}.
\]
\State \quad Define predictive mean/variance for any $x\in\mathcal{X}$:
\[
\mu_{t+1,\alpha}(x)=\psi(x)^\top\mu_{t+1,\alpha},\qquad
\sigma^2_{t,\alpha_{t+1}}(x)=\psi(x)^\top\Sigma_{t+1,\alpha}\psi(x).
\]
\State \quad Compute current mean-maximum $m_t:=\displaystyle\max_{x\in\mathcal{X}}\mu_{t,\alpha_{t+1}}(x)$ \hfill\textit{(global or over a candidate set)}
\State \quad \textbf{Acquisition:} define $\mathrm{EI}_{t,\alpha_{t+1}}(x)$ via Alg.~\ref{alg:gei} $g=1$; pick
\[
x_{t+1}\in\arg\max_{x\in\mathcal{X}}\ \mathrm{EI}_{t,\alpha_{t+1}}(x).
\]
\State \quad Query black-box: $y_{t+1}\gets f(x_{t+1})+\epsilon_{t+1}$ with $\epsilon_{t+1}\sim\mathcal{N}(0,\sigma^2)$
\State \quad Augment data: $\mathcal{D}_{t+1}\gets \mathcal{D}_t\cup\{(x_{t+1},y_{t+1})\}$; $X_{t+1}\gets \begin{bmatrix}X_t \\ \psi(x_{t+1})^\top\end{bmatrix}$; $y_{1:t+1}\gets (y_{1:t},y_{t+1})$
\State \quad $t\gets t+1$
\State \textbf{end while}
\State \textbf{return} $\widehat{x}^*\in\arg\max_{x\in\mathcal{X}}\mu_{T,\alpha_T}(x)$ \textit{(or the best observed $y_t$-incumbent)}
\end{algorithmic}
\end{algorithm}

\section{Linear Surrogate for Tempered BO}
\label{sec:linear-ei-alpha}


Although our main focus is on Gaussian process (GP) surrogates with nonlinear kernels, we begin with Bayesian linear regression as a more analytically clean baseline. In this finite-dimensional setting, the $\alpha$-tempered posterior has a closed form and the effect of tempering can be traced explicitly through both the posterior mean and variance. The resulting regret bound shows that, under correct linear specification, likelihood tempering does not improve or jeopardize the leading-order worst-case guarantee for expected improvement (EI). This ``baseline non-improvement'' helps isolate where tempering can matter, motivating the GP analysis in Section~\ref{sec:GP-ei-alpha}, where kernel mismatch and adaptive localized sampling can induce local overconfidence.

Throughout this section we consider a compact decision set $\mathcal X \subset \mathbb R^{p}$ and a feature map $\psi:\mathcal X \to \mathbb R^{d}$ satisfying $\|\psi(x)\|_{2} \le L$ for all $x \in \mathcal X$. We model the latent reward as linear in the features,
$f(x) \;=\; \psi(x)^{\top} \theta^{\star}$,
where the true parameter $\theta^{\star} \in \mathbb R^{d}$ is bounded as $\|\theta^{\star}\|_{2} \le S_{\theta}$. 
Let $X_t = [\psi(x_1),\dots,\psi(x_t)]^{\top} \in \mathbb R^{t\times d}$ denote the design matrix and $y_{1:t} = (y_1,\dots,y_t)^{\top}$ the vector of observations. We place a Gaussian prior $\theta \sim \mathcal N(0,\lambda^{-1} I_d)$ with $\lambda>0$ and introduce a tempering parameter $\alpha \in (0,1]$. 
Since $y_{1:t}\mid\theta\sim\mathcal{N}(X_t\theta,\sigma^2 I_t)$ and $\theta\sim\mathcal{N}(0,\lambda^{-1}I_d)$, the $\alpha$-tempered posterior according to \eqref{eq:alpha_post} is
\begin{align*}
\log p_\alpha(\theta\mid\mathcal{D}_t)
&=\mathrm{const}-\frac{1}{2}\left[\lambda\|\theta\|_2^2+\frac{\alpha}{\sigma^2}\|y_{1:t}-X_t\theta\|_2^2\right]\\
&=\mathrm{const}-\frac{1}{2}\Big[\theta^{T}V_{t,\alpha}\theta-2\,\theta^{T}\tfrac{\alpha}{\sigma^2}X_t^{T}y_{1:t}\Big].
\end{align*}
Completing the square gives
$\theta\mid\mathcal{D}_t,\alpha\sim\mathcal{N}\!\big(\mu_{t,\alpha},\,\Sigma_{t,\alpha}\big)$, where
\[
V_{t,\alpha} \;=\; \lambda I_d + \frac{\alpha}{\sigma^{2}} X_t^{\top} X_t,
\qquad
\Sigma_{t,\alpha} \;=\; V_{t,\alpha}^{-1},
\qquad \mu_{t,\alpha}
=\Sigma_{t,\alpha}\,\frac{\alpha}{\sigma^{2}} X_t^{\top} y_{1:t}.
\]
For any $x \in \mathcal X$ the tempered predictive mean and variance of the latent reward are $\mu_{t,\alpha}(x; \theta) = \psi(x)^{\top} \mu_{t,\alpha}$, and
$\sigma^2=\sigma_{t,\alpha}^{2}(x; \theta)= \psi(x)^{\top} \Sigma_{t,\alpha} \psi(x)$.

\subsection{Regret Analysis of Tempered Linear Surrogate}
\label{subsec:linear_fractional}
For a maximizer $x^{\star} \in \arg \max_{x \in \mathcal{X}} f(x)$, the instantaneous and cumulative regrets are: 
\begin{equation} \label{eq:regret_def}
r_t \;=\; f(x^{\star}) - f(x_t),
\qquad
R_T \;=\; \sum_{t=1}^{T} r_t.
\end{equation}
In the remaining part of Section \ref{sec:linear-ei-alpha}, we focus on expected improvement ($g=1$) and write $\mu_{t,\alpha}(x):=\mu_{t,\alpha}(x;\theta)$ and $\sigma_{t,\alpha}(x):=\sigma_{t,\alpha}(x;\theta)$ for convenience. We assume that 
\begin{assumption}
    \label{ass:radius}
    There exists $\theta^\star\in\mathbb{R}^d$ such that
$f(x)=\psi(x)^\top\theta^\star$ for all $x\in\mathcal{X}$ and
$\|\theta^\star\|_2\le S_\theta$ for a known constant $S_\theta>0$.
\end{assumption}

Write $m_{t-1}:=\mu_{\theta_{t-1},\alpha}^{+}= \max_{x \in \mathcal X} \mu_{t-1,\alpha}(x; \theta)$
as the running posterior mean maximum. At round $t$, the tempered predictive distribution follows $\mathcal{N}\big(\mu_{t-1,\alpha}(x),\sigma^2_{t-1,\alpha}(x)\big)$, and the expected improvement $\mathrm{EI}_{t-1,\alpha}(x)
:=\mathbb{E}\!\left[(f(x)-m_{t-1})_+\mid\mathcal{D}_{t-1}\right]$ can be written as
\begin{align*}
\mathrm{EI}_{t-1,\alpha}(x)
&=(\mu_{t-1,\alpha}(x)-m_{t-1})\,\Phi(z_{t-1}(x))
+\sigma_{t-1,\alpha}(x)\,\phi(z_{t-1}(x)),\\
z_{t-1}(x)&:=\frac{\mu_{t-1,\alpha}(x)-m_{t-1}}{\sigma_{t-1,\alpha}(x)}.
\end{align*}
The EI policy selects any maximizer
$x_t\in\arg\max_{x\in\mathcal{X}} \ \mathrm{EI}_{t-1,\alpha}(x)$

Our regret argument uses an \emph{EI-UCB alignment} condition below: it ensures that, over the attainable $(\mu,\sigma)$ pairs at round $t$, the ordering induced by EI is consistent with the linear scalarization $\mu+\kappa_t\sigma$. This is a rather strong technical condition (See Remark 8 of \citet{wang2025bayesian}); we impose it only in this warm-up for our intuition and 
in Section \ref{sec:GP-ei-alpha} we analyze generalized improvement directly and do not rely on such an alignment assumption.

\begin{assumption}
\label{ass:alignment}
There exists $\kappa_t\ge 0$ such that for all $x_1,x_2\in\mathcal{X}$,
\[
\mu_{t-1,\alpha}(x_1)+\kappa_t\,\sigma_{t-1,\alpha}(x_1)\ >\ \mu_{t-1,\alpha}(x_2)+\kappa_t\,\sigma_{t-1,\alpha}(x_2)
\ \Longrightarrow\
\mathrm{EI}_{t-1,\alpha}(x_1)\ >\ \mathrm{EI}_{t-1,\alpha}(x_2).
\]
Thus the alignment condition preserves strict score order; no conclusion is required when the two linear scores tie.
\end{assumption}

\begin{theorem}[EI regret bound (fixed $\alpha$)]
\label{thm:ei-regret}
Suppose Assumption \ref{ass:radius}-\ref{ass:alignment} holds at each round $t$ with:
\[
\kappa_t:=\beta_{t-1}(\alpha,\delta)=\sqrt{\lambda}S_\theta+\sqrt{\alpha}
\sqrt{\,\log\!\frac{\det V_{t-1,\alpha}}{\det(\lambda I_d)}+2\log\!\frac{1}{\delta}\,}.
\]
Then, with probability at least $1-\delta$, for all $T\ge 1$,
\begin{align}\label{eq:EI_regret}
R_T\ \le\ 2\,\beta_T(\alpha,\delta)\,
\sqrt{\,c_{\alpha,\lambda}\,T\cdot \log\!\frac{\det V_{T,\alpha}}{\det(\lambda I_d)}\,},
\qquad
c_{\alpha,\lambda}:=\frac{2\sigma^2}{\alpha}+\frac{L^2/\lambda}{\log 2}.
\end{align}
\end{theorem}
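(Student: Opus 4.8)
The plan is to run a UCB-style regret argument assembled from three ingredients: a uniform self-normalized confidence bound for the tempered predictive mean, a reduction from expected improvement to an upper-confidence selection rule via Assumption~\ref{ass:alignment}, and an elliptical-potential (information-gain) bound converting a sum of tempered posterior standard deviations into the log-determinant term. The tempering parameter $\alpha$ enters essentially only through the effective noise scale, and tracking its power correctly is what distinguishes this from the classical $\alpha=1$ analysis.

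First I would establish the confidence bound. Writing $y_{1:t}=X_t\theta^\star+\varepsilon_{1:t}$ and substituting into $\mu_{t,\alpha}=\tfrac{\alpha}{\sigma^2}\Sigma_{t,\alpha}X_t^\top y_{1:t}$, the identity $\tfrac{\alpha}{\sigma^2}X_t^\top X_t=V_{t,\alpha}-\lambda I_d$ yields
\[
\mu_{t,\alpha}-\theta^\star=-\lambda V_{t,\alpha}^{-1}\theta^\star+\tfrac{\alpha}{\sigma^2}V_{t,\alpha}^{-1}X_t^\top\varepsilon_{1:t}.
\]
For any $x$, Cauchy--Schwarz in the $V_{t,\alpha}^{-1}$-norm gives $|\psi(x)^\top(\mu_{t,\alpha}-\theta^\star)|\le\sigma_{t,\alpha}(x)\,\big(\lambda\|\theta^\star\|_{V_{t,\alpha}^{-1}}+\tfrac{\alpha}{\sigma^2}\|X_t^\top\varepsilon_{1:t}\|_{V_{t,\alpha}^{-1}}\big)$, since $\|\psi(x)\|_{V_{t,\alpha}^{-1}}=\sigma_{t,\alpha}(x)$. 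The bias term is controlled by $V_{t,\alpha}\succeq\lambda I_d$ and Assumption~\ref{ass:radius}, giving $\lambda\|\theta^\star\|_{V_{t,\alpha}^{-1}}\le\sqrt\lambda\,S_\theta$. For the stochastic term the crucial step is to apply the self-normalized martingale tail bound to the rescaled features $\eta_s=\tfrac{\sqrt\alpha}{\sigma}\psi_s$, so that $V_{t,\alpha}=\lambda I_d+\sum_{s\le t}\eta_s\eta_s^\top$ is exactly the associated Gram matrix and $\tfrac{\sqrt\alpha}{\sigma}\sum_s\psi_s\varepsilon_s$ the matched martingale. Because $\varepsilon_s$ is $\sigma$-sub-Gaussian and the determinant ratio is scale-invariant under this rescaling, one obtains $\tfrac{\alpha}{\sigma^2}\|X_t^\top\varepsilon_{1:t}\|_{V_{t,\alpha}^{-1}}\le\sqrt\alpha\,\sqrt{\log\tfrac{\det V_{t,\alpha}}{\det(\lambda I_d)}+2\log\tfrac1\delta}$. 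Summing the two pieces gives, uniformly in $x$ and $t$ with probability at least $1-\delta$, the bound $|\psi(x)^\top(\mu_{t,\alpha}-\theta^\star)|\le\beta_t(\alpha,\delta)\,\sigma_{t,\alpha}(x)$.

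Next I would reduce EI to UCB. On the confidence event $f(x)=\psi(x)^\top\theta^\star$ lies in $[\mu_{t-1,\alpha}(x)-\beta_{t-1}\sigma_{t-1,\alpha}(x),\,\mu_{t-1,\alpha}(x)+\beta_{t-1}\sigma_{t-1,\alpha}(x)]$ for every $x$. Taking $\kappa_t=\beta_{t-1}(\alpha,\delta)$ in Assumption~\ref{ass:alignment}, any maximizer of $\mu_{t-1,\alpha}+\beta_{t-1}\sigma_{t-1,\alpha}$ is also an EI maximizer, so the selected $x_t$ may be taken to satisfy $\mu_{t-1,\alpha}(x_t)+\beta_{t-1}\sigma_{t-1,\alpha}(x_t)\ge\mu_{t-1,\alpha}(x^\star)+\beta_{t-1}\sigma_{t-1,\alpha}(x^\star)$. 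Chaining the upper bound at $x^\star$, this UCB domination, and the lower bound at $x_t$ yields the per-round control $r_t=f(x^\star)-f(x_t)\le 2\beta_{t-1}\sigma_{t-1,\alpha}(x_t)$.

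Finally I would sum over $t$. Monotonicity of $\det V_{t,\alpha}$ gives $\beta_{t-1}\le\beta_T$, so $R_T\le 2\beta_T\sum_{t=1}^T\sigma_{t-1,\alpha}(x_t)\le 2\beta_T\sqrt{T\sum_{t=1}^T\sigma^2_{t-1,\alpha}(x_t)}$ by Cauchy--Schwarz. To bound the sum of variances, the matrix-determinant lemma applied to $V_{t,\alpha}=V_{t-1,\alpha}+\tfrac{\alpha}{\sigma^2}\psi_t\psi_t^\top$ telescopes into $\sum_t\log\!\big(1+\tfrac{\alpha}{\sigma^2}\sigma^2_{t-1,\alpha}(x_t)\big)=\log\tfrac{\det V_{T,\alpha}}{\det(\lambda I_d)}$, and the scalar inequality $\tfrac{a}{\log(1+a)}\le 2+\tfrac{a}{\log 2}$ evaluated at $a=\tfrac{\alpha}{\sigma^2}\sigma^2_{t-1,\alpha}(x_t)$, combined with the range bound $\sigma^2_{t-1,\alpha}(x_t)\le L^2/\lambda$ (from $V_{t-1,\alpha}\succeq\lambda I_d$), converts this into $\sum_t\sigma^2_{t-1,\alpha}(x_t)\le c_{\alpha,\lambda}\log\tfrac{\det V_{T,\alpha}}{\det(\lambda I_d)}$ with exactly $c_{\alpha,\lambda}=\tfrac{2\sigma^2}{\alpha}+\tfrac{L^2/\lambda}{\log 2}$; substituting gives the claim. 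I expect the main obstacle to be the concentration step, namely getting the tempered rescaling right so that $\sqrt\alpha$ (not some other power of $\alpha$) multiplies the information term while the determinant ratio remains scale-free; the EI-to-UCB direction under Assumption~\ref{ass:alignment} is a secondary subtlety, handled by selecting the UCB-maximizing EI maximizer.
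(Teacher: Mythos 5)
Your proposal is correct and follows essentially the same route as the paper's proof: the same self-normalized confidence bound with rescaled features $\tfrac{\sqrt{\alpha}}{\sigma}\psi(x_s)$ yielding $\beta_t(\alpha,\delta)$, the same reduction of EI to a UCB rule via Assumption~\ref{ass:alignment} giving $r_t\le 2\beta_{t-1}\sigma_{t-1,\alpha}(x_t)$, and the same Cauchy--Schwarz plus elliptical-potential step producing $c_{\alpha,\lambda}$. The only cosmetic deviations are that you package the variance-sum bound as the single scalar inequality $a/\log(1+a)\le 2+a/\log 2$ rather than the paper's explicit case split, and you invoke the alignment assumption in the ``UCB maximizer is an EI maximizer'' direction where the paper proves the converse (needed since the algorithm selects an arbitrary EI maximizer); both are immediate consequences of the assumption and lead to the identical bound.
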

To control the term $\log\!\frac{\det V_{T,\alpha}}{\det(\lambda I_d)}$ in $R_T$ of \eqref{eq:EI_regret}, we will need the following Lemma.

\begin{lemma}[Determinant growth]
\label{lem:det-growth}
For all $T\ge 1$,
\[
\log\!\frac{\det V_{T,\alpha}}{\det(\lambda I_d)}
\ \le\ d\,\log\!\Big(1+\frac{\alpha L^2 T}{\lambda\sigma^2\,d}\Big).
\]
\end{lemma}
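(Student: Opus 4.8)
The plan is to reduce the log-determinant ratio to a sum over eigenvalues, then combine concavity of the logarithm with an elementary trace bound. First I would note that $\frac{\alpha}{\sigma^2}X_T^\top X_T$ is symmetric positive semidefinite, so it has eigenvalues $\gamma_1,\dots,\gamma_d\ge 0$; since $V_{T,\alpha}=\lambda I_d+\frac{\alpha}{\sigma^2}X_T^\top X_T$ shares the same eigenbasis, its eigenvalues are exactly $\lambda+\gamma_i$. Because the determinant is the product of eigenvalues, this gives
\[
\log\frac{\det V_{T,\alpha}}{\det(\lambda I_d)}
=\sum_{i=1}^d\log\frac{\lambda+\gamma_i}{\lambda}
=\sum_{i=1}^d\log\!\Big(1+\frac{\gamma_i}{\lambda}\Big).
\]

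Next I would exploit that $t\mapsto\log(1+t)$ is concave on $[0,\infty)$. Applying Jensen's inequality to the uniform average over the $d$ eigenvalues yields
\[
\frac{1}{d}\sum_{i=1}^d\log\!\Big(1+\frac{\gamma_i}{\lambda}\Big)
\ \le\ \log\!\Big(1+\frac{1}{d\lambda}\sum_{i=1}^d\gamma_i\Big),
\]
which converts the sum of logarithms into a single logarithm of the averaged spectrum. The final ingredient is a trace bound on $\sum_i\gamma_i$: since the trace equals the sum of eigenvalues, $\sum_{i=1}^d\gamma_i=\tr\!\big(\tfrac{\alpha}{\sigma^2}X_T^\top X_T\big)=\tfrac{\alpha}{\sigma^2}\sum_{t=1}^T\|\psi(x_t)\|_2^2\le \tfrac{\alpha L^2 T}{\sigma^2}$, where the inequality uses the uniform feature bound $\|\psi(x)\|_2\le L$ from the standing assumptions of Section~\ref{sec:linear-ei-alpha}. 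Substituting this into the previous display and multiplying through by $d$ produces the claimed bound $d\log\!\big(1+\frac{\alpha L^2 T}{\lambda\sigma^2 d}\big)$.

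I do not anticipate a genuine obstacle here: this is the classical ``elliptical potential'' / information-gain estimate familiar from the linear bandit literature, and every step is elementary. The only points requiring mild care are (i) confirming that the concave averaging step is applied in the correct direction, so that Jensen produces an upper rather than a lower bound, and (ii) correctly identifying the trace of the feature Gram matrix with the sum of squared feature norms $\sum_{t}\|\psi(x_t)\|_2^2$ before invoking the bound $L$. Both are routine, so the lemma follows directly once the eigenvalue decomposition is written down.
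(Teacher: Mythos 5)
Your proof is correct and follows essentially the same route as the paper's: both reduce the log-determinant ratio to eigenvalues of the (rescaled) Gram matrix, bound their sum by the trace using $\|\psi(x)\|_2\le L$, and apply Jensen's inequality with the concavity of $\log(1+x)$. The only cosmetic difference is that the paper absorbs the factor $1/\lambda$ into the matrix whose eigenvalues it considers, which changes nothing substantive.
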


\begin{corollary}
\label{cor:ei-explicit}
Combining with Lemma~\ref{lem:det-growth},
\[
R_T\ \le\ 2\,\beta_T(\alpha,\delta)\,
\sqrt{\,c_{\alpha,\lambda}\,d\,T\,\log\!\Big(1+\frac{\alpha L^2 T}{\lambda\sigma^2\,d}\Big)\,},
\]
where
\[
\beta_T(\alpha,\delta)=\sqrt{\lambda}S_\theta+ \sqrt{\alpha}
\sqrt{\,d\log\!\Big(1+\frac{\alpha L^2 T}{\lambda\sigma^2\,d}\Big)+2\log\!\frac{1}{\delta}\,},
\qquad
c_{\alpha,\lambda}=\frac{2\sigma^2}{\alpha}+\frac{L^2/\lambda}{\log 2}.
\]
\end{corollary}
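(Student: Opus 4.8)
The plan is to obtain the Corollary by substituting the deterministic bound of Lemma~\ref{lem:det-growth} into the high-probability regret bound of Theorem~\ref{thm:ei-regret}; all of the analytic work has already been done in proving those two results, so what remains is to check that the substitution is legitimate at each point where the log-determinant ratio enters. To organize this, I would write $\Gamma_T := \log\frac{\det V_{T,\alpha}}{\det(\lambda I_d)}$ and note that $\Gamma_T$ appears in exactly two places in the Theorem bound: inside the confidence radius $\beta_T(\alpha,\delta) = \sqrt{\lambda}\,S_\theta + \sqrt{\alpha}\,\sqrt{\Gamma_T + 2\log(1/\delta)}$, and inside the leading factor $\sqrt{c_{\alpha,\lambda}\,T\,\Gamma_T}$. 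The constant $c_{\alpha,\lambda} = \frac{2\sigma^2}{\alpha} + \frac{L^2/\lambda}{\log 2}$ does not involve $\Gamma_T$ and is therefore carried through unchanged.

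The key step is a monotonicity observation. Both occurrences are nondecreasing functions of $\Gamma_T$: the map $\Gamma_T \mapsto \sqrt{\Gamma_T + 2\log(1/\delta)}$ is increasing because $\sqrt{\cdot}$ is increasing and $\log(1/\delta)\ge 0$, and likewise $\Gamma_T \mapsto \sqrt{c_{\alpha,\lambda}\,T\,\Gamma_T}$ is increasing since $c_{\alpha,\lambda}, T \ge 0$. Replacing $\Gamma_T$ by any upper bound can thus only enlarge each factor, and because both factors are nonnegative, their product—hence $R_T$—is bounded above after substitution. Since Lemma~\ref{lem:det-growth} is a purely deterministic linear-algebra bound, plugging it into the Theorem does not disturb the $1-\delta$ coverage event.

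Concretely, I would invoke Lemma~\ref{lem:det-growth} to get $\Gamma_T \le d\,\log\!\big(1 + \tfrac{\alpha L^2 T}{\lambda\sigma^2 d}\big) =: \bar\Gamma_T$, then substitute $\bar\Gamma_T$ for $\Gamma_T$ in both locations. This yields $\beta_T(\alpha,\delta) \le \sqrt{\lambda}\,S_\theta + \sqrt{\alpha}\,\sqrt{\bar\Gamma_T + 2\log(1/\delta)}$ and $\sqrt{c_{\alpha,\lambda}\,T\,\Gamma_T} \le \sqrt{c_{\alpha,\lambda}\,T\,\bar\Gamma_T}$; multiplying the two bounded factors reproduces verbatim the displayed expression in the Corollary.

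Because the argument is mechanical, I do not expect a genuine obstacle; the only point needing a line of care is that $\Gamma_T$ appears twice and must be bounded consistently, which the monotonicity-and-nonnegativity argument settles in a single stroke. If one instead traced the bound back through the per-round radii $\beta_{t-1}$ of the Theorem's $\kappa_t$, one would additionally observe that $t \mapsto \Gamma_t$ is nondecreasing—each update $V_{t,\alpha} = V_{t-1,\alpha} + \tfrac{\alpha}{\sigma^2}\psi(x_t)\psi(x_t)^\top$ adds a positive semidefinite rank-one term—so that every $\beta_{t-1}$ is dominated by $\beta_T$; but the Theorem is already stated in terms of $\beta_T$, so this monotonicity is subsumed and need not be re-derived here.
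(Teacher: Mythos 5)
Your proposal is correct and matches the paper's (implicit) argument exactly: the corollary is obtained by substituting the deterministic bound of Lemma~\ref{lem:det-growth} for the log-determinant ratio in both places it appears in Theorem~\ref{thm:ei-regret}, and your monotonicity-plus-nonnegativity check is precisely the justification needed. The paper states the corollary without further proof, treating this substitution as immediate, so your write-up simply makes explicit what the paper leaves to the reader.
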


Corollary \ref{cor:ei-explicit} makes the $\alpha$-dependence explicit. The factor $\sqrt{\alpha}$ in $\beta_T(\alpha, \delta)$ cancels the $\sqrt{1/\alpha}$ factor in $c_{\alpha,\lambda}$, leaving only a slowly varying logarithmic term. On the other hand, The prior component $$\sqrt{\lambda} S_{\theta} \sqrt{\, (\frac{2\sigma^2}{\alpha}+\frac{L^2/\lambda}{\log 2}) \,d\,T\,\log\!\Big(1+\frac{\alpha L^2 T}{\lambda\sigma^2\,d}\Big)\,}$$ typically favors larger $\alpha$ (closer to $1$), up to logarithmic factors. Thus, in the correctly specified linear model and under Assumption \ref{ass:radius}-\ref{ass:alignment}, tempering is not expected to improve worst-case cumulative regret for EI, and may slightly worsen constants.


This negative result is informative rather than discouraging: it suggests that any theoretical or empirical gains from tempering should be attributed to settings where the surrogate can become locally overconfident under adaptive design, a phenomenon that is substantially more plausible for nonlinear GP kernels. We turn to this regime in Section \ref{sec:GP-ei-alpha}, where we analyze tempered GP surrogates and generalized improvement acquisitions without requiring an EI-UCB alignment condition. We move beyond this linear parametric kernel to Mat\'ern and squared-exponential kernels, where the induced feature map is effectively infinite-dimensional and the interaction between acquisition-level choices and surrogate-level tempering, becomes nontrivial.


\begin{algorithm}[H]
\caption{\textsc{Tempered-EI Bayesian Optimization (Linear Surrogate)}}
\label{alg:tempered-ei-bo}
\begin{algorithmic}[1]
\Require Compact domain $\mathcal{X}\subset\mathbb{R}^p$; feature map $\psi:\mathcal{X}\to\mathbb{R}^d$ with $\|\psi(x)\|_2\le L$; prior precision $\lambda>0$; noise std.\ $\sigma>0$; 
\Statex \hspace{1.6em} budget $T\in\mathbb{N}$; initial design $\{(x_s,y_s)\}_{s=1}^{t_0}$ (possibly $t_0=0$); tempering policy $\alpha_t\in(0,1]$
\Ensure Sequence $\{x_t\}_{t=1}^T$ and incumbent $\widehat{x}^*\in\arg\max_{x\in\mathcal{X}}\mu_{T,\alpha_T}(x)$
\State $t\gets t_0$; $\mathcal{D}_{t}\gets\{(x_s,y_s)\}_{s=1}^{t}$; $X_t\gets [\psi(x_1),\dots,\psi(x_t)]^\top$; $y_{1:t}\gets (y_1,\dots,y_t)^\top$
\State \textbf{while} $t<T$ \textbf{do}
\State \quad Pick tempering $\alpha_{t+1}\in(0,1]$ via a policy (e.g.\ fixed $\alpha$, or $\textsc{TemperSchedule}$)
\State \quad \textbf{Posterior update} w.r.t.\ $\alpha_{t+1}$ (Alg.~\ref{alg:tempered-posterior}): compute
\[
V_{t+1,\alpha}=\lambda I_d+\tfrac{\alpha_{t+1}}{\sigma^2}X_t^{T}X_t,\quad
\Sigma_{t+1,\alpha}=V_{t+1,\alpha}^{-1},\quad
\mu_{t+1,\alpha}=\Sigma_{t+1,\alpha}\tfrac{\alpha_{t+1}}{\sigma^2}X_t^{T}y_{1:t}.
\]
\State \quad Define predictive mean/variance for any $x\in\mathcal{X}$:
\[
\mu_{t+1,\alpha}(x)=\psi(x)^\top\mu_{t+1,\alpha},\qquad
\sigma^2_{t,\alpha_{t+1}}(x)=\psi(x)^\top\Sigma_{t+1,\alpha}\psi(x).
\]
\State \quad Compute current mean-maximum $m_t:=\displaystyle\max_{x\in\mathcal{X}}\mu_{t,\alpha_{t+1}}(x)$ \hfill\textit{(global or over a candidate set)}
\State \quad \textbf{Acquisition:} define $\mathrm{EI}_{t,\alpha_{t+1}}(x)$ via Alg.~\ref{alg:gei} $g=1$; pick
\[
x_{t+1}\in\arg\max_{x\in\mathcal{X}}\ \mathrm{EI}_{t,\alpha_{t+1}}(x).
\]
\State \quad Query black-box: $y_{t+1}\gets f(x_{t+1})+\epsilon_{t+1}$ with $\epsilon_{t+1}\sim\mathcal{N}(0,\sigma^2)$
\State \quad Augment data: $\mathcal{D}_{t+1}\gets \mathcal{D}_t\cup\{(x_{t+1},y_{t+1})\}$; $X_{t+1}\gets \begin{bmatrix}X_t \\ \psi(x_{t+1})^\top\end{bmatrix}$; $y_{1:t+1}\gets (y_{1:t},y_{t+1})$
\State \quad $t\gets t+1$
\State \textbf{end while}
\State \textbf{return} $\widehat{x}^*\in\arg\max_{x\in\mathcal{X}}\mu_{T,\alpha_T}(x)$ \textit{(or the best observed $y_t$-incumbent)}
\end{algorithmic}
\end{algorithm}

\subsection{Proof of Lemma \ref{lem:det-growth}}
\begin{proof}
By Sylvester’s theorem,
\(
\frac{\det V_{T,\alpha}}{\det(\lambda I_d)}
=\det\!\big(I_d+\frac{\alpha}{\lambda\sigma^2}X_T^{T}X_T\big)
=\prod_{j=1}^d (1+\eta_j),
\)
where $\eta_j\ge 0$ are eigenvalues of $\frac{\alpha}{\lambda\sigma^2}X_T^{T}X_T$ and
$\sum_j\eta_j=\frac{\alpha}{\lambda\sigma^2}\operatorname{tr}(X_T^{T}X_T)\le \frac{\alpha L^2 T}{\lambda\sigma^2}$.
Concavity of $\log(1+x)$ and Jensen give the bound.
\end{proof}

\subsection{UCB Argument}
Due to Assumption \ref{ass:radius}, it suffices to study the regret bound of upper confidence bound acquisition by Lemma \ref{lem:argmax-eq}. 
Define the filtration $\mathcal{F}_t:=\sigma(x_1,\epsilon_1,\dots,x_t,\epsilon_t)$, the standardized noise
\[
\xi_t:=\epsilon_t/\sigma\sim\mathcal{N}(0,1)\ \text{i.i.d.},
\]
and the rescaled features $z_t:=\frac{\sqrt{\alpha}}{\sigma}\,\psi(x_t)\in\mathbb{R}^d$. Let $Z_t=[z_1,\dots,z_t]^\top$ and
\[
W_t:=\sum_{i=1}^t z_i\,\xi_i=Z_t^{T}\xi_{1:t}\in\mathbb{R}^d,
\qquad V_{t,\alpha}=\lambda I_d+Z_t^{T}Z_t.
\]
We begin by introducing two auxilliary lemmas:
\begin{lemma}[Coordinate-wise monotonicity of EI]
\label{lem:ei-monotone}
Let $\mathrm{EI}(\mu,\sigma;m)=(\mu-m)\Phi(\frac{\mu-m}{\sigma})+\sigma\phi(\frac{\mu-m}{\sigma})$. For all $\mu\in\mathbb{R}$, $\sigma>0$,
\[
\frac{\partial}{\partial \mu}\mathrm{EI}(\mu,\sigma;m)=\Phi\!\left(\frac{\mu-m}{\sigma}\right)\ge 0,\qquad
\frac{\partial}{\partial \sigma}\mathrm{EI}(\mu,\sigma;m)=\phi\!\left(\frac{\mu-m}{\sigma}\right)\ge 0.
\]
\end{lemma}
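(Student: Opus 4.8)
\textbf{Proof approach for Lemma \ref{lem:ei-monotone}.}
The plan is a direct differentiation using the chain rule together with the two elementary identities for the standard normal: $\Phi'(z)=\phi(z)$ and $\phi'(z)=-z\,\phi(z)$ (the latter since $\phi(z)=\tfrac{1}{\sqrt{2\pi}}e^{-z^2/2}$). Throughout I abbreviate the standardized gap by $z:=\frac{\mu-m}{\sigma}$, so that $\mathrm{EI}(\mu,\sigma;m)=(\mu-m)\Phi(z)+\sigma\phi(z)$ and, importantly, $\mu-m=\sigma z$; this last substitution is what produces the clean cancellations below.

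For the $\mu$-derivative, I would first record $\frac{\partial z}{\partial\mu}=\frac{1}{\sigma}$. Differentiating the first term by the product rule gives $\Phi(z)+(\mu-m)\phi(z)\cdot\tfrac{1}{\sigma}=\Phi(z)+z\phi(z)$, where I substituted $\mu-m=\sigma z$. Differentiating the second term gives $\sigma\,\phi'(z)\cdot\tfrac{1}{\sigma}=\phi'(z)=-z\phi(z)$. Summing, the two $z\phi(z)$ contributions cancel and I am left with $\frac{\partial}{\partial\mu}\mathrm{EI}=\Phi(z)=\Phi\!\left(\frac{\mu-m}{\sigma}\right)$, which is nonnegative since $\Phi$ is a CDF.

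For the $\sigma$-derivative I would use $\frac{\partial z}{\partial\sigma}=-\frac{\mu-m}{\sigma^2}=-\frac{z}{\sigma}$. The first term yields $(\mu-m)\phi(z)\cdot\bigl(-\tfrac{z}{\sigma}\bigr)=-z^2\phi(z)$, again via $\mu-m=\sigma z$. The second term yields $\phi(z)+\sigma\,\phi'(z)\cdot\bigl(-\tfrac{z}{\sigma}\bigr)=\phi(z)+z^2\phi(z)$, using $\phi'(z)=-z\phi(z)$. Adding these, the $z^2\phi(z)$ terms cancel and I obtain $\frac{\partial}{\partial\sigma}\mathrm{EI}=\phi(z)=\phi\!\left(\frac{\mu-m}{\sigma}\right)\ge 0$.

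\textbf{Main obstacle.} There is no genuine analytic difficulty here; the computation is routine once the two normal-density identities are in hand. The only point requiring care is bookkeeping: one must correctly apply the chain rule to both arguments of $\Phi$ and $\phi$, consistently substitute $\mu-m=\sigma z$, and verify that the cross terms ($z\phi(z)$ in the first case, $z^2\phi(z)$ in the second) cancel exactly. The resulting clean forms $\Phi(z)$ and $\phi(z)$ are precisely what make EI coordinatewise monotone in both the predictive mean and the predictive standard deviation, which is the property invoked in the subsequent UCB argument.
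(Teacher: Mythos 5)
Your computation is correct and is exactly the paper's argument: direct differentiation of the closed form using $\Phi'=\phi$ and $\phi'(z)=-z\phi(z)$, with the cross terms cancelling; the paper merely states this in one line while you have spelled out the bookkeeping. No gaps.
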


\begin{proof}
Differentiate the closed form using $\frac{\partial}{\partial \mu}(\mu-m)=1$,
$\frac{\partial}{\partial \mu}\big(\frac{\mu-m}{\sigma}\big)=\frac{1}{\sigma}$,
$\frac{\partial}{\partial \sigma}\big(\frac{\mu-m}{\sigma}\big)=-\frac{\mu-m}{\sigma^2}$,
$\Phi'=\phi$ and $\phi'=-z\phi$.
\end{proof}
 For example, if the set
$\{(\mu_{t-1,\alpha}(x),\sigma_{t-1,\alpha}(x)):x\in\mathcal{X}\}$ inside $\{\mu\le m_{t-1}\}$ lies on a nonincreasing curve $h(.)$ such that
$\sigma=h(\mu)$, then by Lemma~\ref{lem:ei-monotone} the order induced by any increasing function of $(\mu,\sigma)$ (such as EI or $\mu+\kappa\sigma$ with $\kappa\ge 0$) is the same along that curve.

\begin{lemma}[Argmax equivalence under alignment]
\label{lem:argmax-eq}
If Assumption~\ref{ass:alignment} holds, then every EI maximizer is also a maximizer of the linear score
\[
\mathrm{sc}_{t-1,\kappa_t}(x):=\mu_{t-1,\alpha}(x)+\kappa_t\,\sigma_{t-1,\alpha}(x).
\]
\end{lemma}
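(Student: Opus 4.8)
The plan is to prove that the argmax sets of $\mathrm{EI}_{t-1,\alpha}$ and of the linear score $\mathrm{sc}_{t-1,\kappa_t}(x)=\mu_{t-1,\alpha}(x)+\kappa_t\sigma_{t-1,\alpha}(x)$ coincide, of which the stated inclusion is one half. The starting point is to read Assumption~\ref{ass:alignment} structurally: the implication $\mathrm{sc}(x_1)\ge\mathrm{sc}(x_2)\Rightarrow \mathrm{EI}(x_1)\ge\mathrm{EI}(x_2)$ (applied in both directions when scores are equal) says exactly that $\mathrm{EI}_{t-1,\alpha}$ is a nondecreasing function of $\mathrm{sc}_{t-1,\kappa_t}$; that is, there is a nondecreasing $\Psi$ with $\mathrm{EI}_{t-1,\alpha}(x)=\Psi\!\big(\mathrm{sc}_{t-1,\kappa_t}(x)\big)$. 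I would first record that a score maximizer exists: $\mathcal{X}$ is compact and $x\mapsto(\mu_{t-1,\alpha}(x),\sigma_{t-1,\alpha}(x))$ is continuous (being built from the continuous feature map $\psi$ and the fixed matrices $\mu_{t-1,\alpha},\Sigma_{t-1,\alpha}$), so $\mathrm{sc}_{t-1,\kappa_t}$ attains its maximum $M_{\mathrm{sc}}:=\max_{x\in\mathcal X}\mathrm{sc}_{t-1,\kappa_t}(x)$.

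The core step is a contradiction argument. Let $x^\star$ be any EI maximizer and suppose it were not a score maximizer, so that $\mathrm{sc}_{t-1,\kappa_t}(\tilde x)>\mathrm{sc}_{t-1,\kappa_t}(x^\star)$ for some $\tilde x$ (e.g.\ a score maximizer). Applying Assumption~\ref{ass:alignment} with $x_1=\tilde x$ and $x_2=x^\star$ yields $\mathrm{EI}_{t-1,\alpha}(\tilde x)\ge \mathrm{EI}_{t-1,\alpha}(x^\star)$; since $x^\star$ maximizes EI this forces $\mathrm{EI}_{t-1,\alpha}(\tilde x)=\mathrm{EI}_{t-1,\alpha}(x^\star)$, i.e.\ $\tilde x$ is also an EI maximizer. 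In the language of $\Psi$, this says $\Psi$ is flat on the interval $[\mathrm{sc}(x^\star),M_{\mathrm{sc}}]$. The clean half of the equivalence follows immediately, and I would state it first: because $\Psi$ is nondecreasing, any global score maximizer $x^\dagger$ satisfies $\mathrm{EI}_{t-1,\alpha}(x^\dagger)\ge\mathrm{EI}_{t-1,\alpha}(x)$ for all $x$ by a single application of the assumption, so $\arg\max_{x}\mathrm{sc}_{t-1,\kappa_t}(x)\subseteq\arg\max_{x}\mathrm{EI}_{t-1,\alpha}(x)$.

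The main obstacle is closing the reverse inclusion, i.e.\ ruling out the ``flat top'' just exhibited, in which an EI maximizer carries strictly smaller score than $M_{\mathrm{sc}}$. I would resolve this by establishing that $\Psi$ is in fact \emph{strictly} increasing on the attainable frontier, so that equal EI forces equal score and the contradiction $\mathrm{sc}(\tilde x)=\mathrm{sc}(x^\star)$ against $\mathrm{sc}(\tilde x)>\mathrm{sc}(x^\star)$ closes the argument. Strictness is where Lemma~\ref{lem:ei-monotone} enters: since $\partial_\mu\mathrm{EI}=\Phi\big(\tfrac{\mu-m}{\sigma}\big)>0$ and $\partial_\sigma\mathrm{EI}=\phi\big(\tfrac{\mu-m}{\sigma}\big)>0$ strictly, $\mathrm{EI}$ strictly increases whenever $(\mu,\sigma)$ moves to strictly larger coordinates, which under the alignment geometry corresponds to strictly larger $\mathrm{sc}$. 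If one prefers not to assume strictness holds everywhere, the same conclusion is obtained by adopting the tie-breaking convention that, among EI maximizers, the BO rule selects one of maximal linear score; this selection is well defined by compactness and makes the selected EI maximizer a global score maximizer, which is all that the subsequent UCB regret argument (via $\mathrm{sc}(x_t)\ge\mathrm{sc}(x^\star)$) requires. I expect verifying strict monotonicity of $\Psi$ along the realized $(\mu,\sigma)$ frontier to be the only genuinely delicate point; everything else is a direct application of Assumption~\ref{ass:alignment} together with continuity and compactness.
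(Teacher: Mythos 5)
Your core argument is the same as the paper's: take an EI maximizer $x^\star$, suppose some $\tilde x$ has strictly larger linear score, apply Assumption~\ref{ass:alignment} with $x_1=\tilde x$, $x_2=x^\star$, and derive a contradiction. The paper's proof is a single sentence doing exactly this, except that it asserts the assumption yields the \emph{strict} inequality $\mathrm{EI}_{t-1,\alpha}(\tilde x)>\mathrm{EI}_{t-1,\alpha}(x^\star)$, whereas the assumption as stated only delivers $\ge$. You are right to flag this: with only the weak implication, the argument shows that $\tilde x$ is \emph{also} an EI maximizer (your ``flat top''), not that $x^\star$ already maximizes the score, and a constant-EI configuration shows the lemma is literally false under the weak reading alone. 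Your two remedies are both sensible. The tie-breaking convention (among EI maximizers, select one of maximal score) is clean, well defined by compactness, and is all that the proof of Theorem~\ref{thm:ei-regret} actually uses, since that proof only needs $\mathrm{sc}_{t-1,\kappa_t}(x_t)\ge\mathrm{sc}_{t-1,\kappa_t}(x^\star)$ for the \emph{selected} point. The strict-monotonicity route is less complete as you present it: Lemma~\ref{lem:ei-monotone} gives strict positivity of $\partial_\mu\mathrm{EI}$ and $\partial_\sigma\mathrm{EI}$, but a strictly larger score does not force a coordinatewise-larger $(\mu,\sigma)$ pair, so strictness of your $\Psi$ along the attainable frontier needs extra geometric structure (e.g.\ the nonincreasing frontier curve $\sigma=h(\mu)$ mentioned after Lemma~\ref{lem:ei-monotone}) or simply a strict version of Assumption~\ref{ass:alignment}. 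In short: same route as the paper, with a correct diagnosis of the one step the paper elides; adopting either the strict form of the assumption or the tie-breaking selection makes the lemma (and the paper's one-line proof) airtight.
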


\begin{proof}
Let $x_t\in\arg\max \mathrm{EI}_{t-1,\alpha}$. If some $x$ satisfies $\mathrm{sc}_{t-1,\kappa_t}(x)>\mathrm{sc}_{t-1,\kappa_t}(x_t)$ with both means $\le m_{t-1}$, Assumption~\ref{ass:alignment} implies $\mathrm{EI}_{t-1,\alpha}(x)>\mathrm{EI}_{t-1,\alpha}(x_t)$, a contradiction.
\end{proof}

\begin{lemma}[Self-normalized bound]
\label{lem:selfnorm}
For $\delta\in(0,1)$, with probability $\ge 1-\delta$, simultaneously for all $t\ge 0$,
\[
\|W_t\|_{V_{t,\alpha}^{-1}}^2\ \le\ 2\log\!\left(\frac{\det(V_{t,\alpha})^{1/2}}{\det(\lambda I_d)^{1/2}\,\delta}\right).
\]
\end{lemma}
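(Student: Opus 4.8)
The plan is to follow the self-normalized ``method of mixtures'' (pseudo-maximization) argument of Abbasi-Yadkori, P\'al and Szepesv\'ari, specialized here to Gaussian noise. The structural fact that makes everything work is that each query $x_t$ is selected from $\mathcal D_{t-1}$, so the rescaled feature $z_t=\frac{\sqrt\alpha}{\sigma}\psi(x_t)$ is $\mathcal F_{t-1}$-measurable (predictable), whereas $\xi_t\sim\mathcal N(0,1)$ is independent of $\mathcal F_{t-1}$. First I would fix an auxiliary vector $u\in\mathbb R^d$ and introduce the exponential process
\[
M_t^{u}:=\exp\!\Big(u^{\top}W_t-\tfrac12\,u^{\top}(Z_t^{\top}Z_t)\,u\Big)=\prod_{i=1}^{t}\exp\!\Big(u^{\top}z_i\,\xi_i-\tfrac12(u^{\top}z_i)^2\Big).
\]
Applying the Gaussian moment generating function $\mathbb E[e^{s\xi_t}\mid\mathcal F_{t-1}]=e^{s^2/2}$ with the predictable scalar $s=u^{\top}z_t$ shows each factor has conditional mean one, so $(M_t^{u})_{t\ge0}$ is a nonnegative martingale with $M_0^{u}=1$.

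The second step is to mix over $u$ to remove the dependence on a fixed direction. Let $h$ be the density of $\mathcal N(0,\lambda^{-1}I_d)$ and set $M_t:=\int_{\mathbb R^d}M_t^{u}\,h(u)\,du$. By Tonelli and the martingale property of each $M_t^{u}$, the mixture $M_t$ is a nonnegative supermartingale with $\mathbb E[M_t]\le 1$. The prior precision is chosen to equal $\lambda$ precisely so that the two quadratic forms in the integrand combine into $V_{t,\alpha}=\lambda I_d+Z_t^{\top}Z_t$; completing the square in $u$ and evaluating the resulting Gaussian integral gives the closed form
\[
M_t=\Big(\tfrac{\det(\lambda I_d)}{\det V_{t,\alpha}}\Big)^{1/2}\exp\!\Big(\tfrac12\|W_t\|_{V_{t,\alpha}^{-1}}^2\Big).
\]

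Finally I would invoke Ville's maximal inequality for nonnegative supermartingales, $\mathbb P\big(\sup_{t\ge0}M_t\ge 1/\delta\big)\le\delta\,\mathbb E[M_0]\le\delta$. On the complementary event we have $M_t<1/\delta$ \emph{simultaneously for all} $t$; substituting the closed form, taking logarithms and multiplying by two produces exactly $\|W_t\|_{V_{t,\alpha}^{-1}}^2\le 2\log\big(\det(V_{t,\alpha})^{1/2}/(\det(\lambda I_d)^{1/2}\delta)\big)$, as claimed. The main obstacle, and the reason the mixture device is essential rather than a fixed-$u$ Chernoff bound, is obtaining the uniform-in-$t$ guarantee: a union bound over $t$ would be lossy and a union over the continuum of directions $u$ is impossible, whereas integrating out $u$ \emph{before} applying a single maximal inequality delivers the ``for all $t$'' claim at no extra cost. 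The only routine points to verify are that $z_t$ is genuinely predictable under the stated filtration and that the prior precision matches $\lambda$ so the determinant ratio emerges in the advertised form.
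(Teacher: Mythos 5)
Your proposal is correct and follows essentially the same route as the paper: the fixed-direction exponential martingale, mixing against a $\mathcal N(0,\lambda^{-1}I_d)$ weight so the quadratic forms combine into $V_{t,\alpha}$, and a single application of Ville's inequality to get the uniform-in-$t$ bound. Your version is in fact slightly cleaner, since you penalize by $Z_t^{\top}Z_t$ so that $M_0^{u}=1$ exactly, whereas the paper's $L_t(\theta)$ carries the $\lambda I_d$ term inside the exponent and therefore starts at $e^{-\lambda\|\theta\|^2/2}$ rather than $1$.
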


\begin{proof}
Let $A_{t,\alpha}:=\sum_{i=1}^t z_i z_i^{T}=V_{t,\alpha}-\lambda I_d$ and consider
$L_t(\theta):=\exp(\theta^{T}W_t-\tfrac12\|\theta\|_{A_{t,\alpha}}^2)$. As in standard proofs,
$\{L_t(\theta)\}$ is a martingale for each fixed $\theta\in\mathbb{R}^d$ because conditionally on $\mathcal{F}_{t-1}$,
$W_t=W_{t-1}+z_t\xi_t$, $A_{t,\alpha}=A_{t-1,\alpha}+z_tz_t^T$, and $\mathbb{E}[e^{(\theta^{T}z_t)\xi_t-\frac12(\theta^{T}z_t)^2}]=1$ for $\xi_t\sim\mathcal{N}(0,1)$.
Mix with $\Theta\sim\mathcal{N}(0,\lambda^{-1}I_d)$ (independent) to obtain a nonnegative martingale
\[
M_t:=\frac{\det(\lambda I_d)^{1/2}}{\det(V_{t,\alpha})^{1/2}}\,
\exp\!\Big(\tfrac{1}{2}\|W_t\|_{V_{t,\alpha}^{-1}}^2\Big),
\]
with $\mathbb{E}M_t=1$. Ville’s inequality gives
\[
\mathbb{P}\!\left(\exists t:\ \tfrac12\|W_t\|_{V_{t,\alpha}^{-1}}^2
\ge \log\frac{\det(V_{t,\alpha})^{1/2}}{\det(\lambda I_d)^{1/2}\,\delta}\right)\le \delta,
\]
which is the claim.
\end{proof}

\begin{lemma}[Uniform linear $\alpha$-confidence]
\label{lem:alpha-confidence}
With probability $\ge 1-\delta$, for all $t\ge 0$ and $x\in\mathcal{X}$,
\[
|f(x)-\mu_{t,\alpha}(x)|\ \le\ \beta_t(\alpha,\delta)\,\sigma_{t,\alpha}(x),
\quad
\beta_t(\alpha,\delta):=\sqrt{\lambda}\,S_\theta+\sqrt{\alpha}
\sqrt{\,\log\!\frac{\det V_{t,\alpha}}{\det(\lambda I_d)}+2\log\!\frac{1}{\delta}\,}.
\]
\end{lemma}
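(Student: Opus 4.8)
The plan is to derive the bound via the standard self-normalized (Abbasi-Yadkori–Pál–Szepesvári) decomposition, carefully tracking the tempering factor $\sqrt{\alpha}$. First I would substitute $y_{1:t}=X_t\theta^\star+\epsilon_{1:t}$ into the tempered posterior mean $\mu_{t,\alpha}=V_{t,\alpha}^{-1}\frac{\alpha}{\sigma^2}X_t^\top y_{1:t}$. Using the rescaled features $Z_t=\frac{\sqrt\alpha}{\sigma}X_t$ introduced above, so that $\frac{\alpha}{\sigma^2}X_t^\top X_t=Z_t^\top Z_t=V_{t,\alpha}-\lambda I_d$ and $\frac{\alpha}{\sigma^2}X_t^\top\epsilon_{1:t}=\sqrt\alpha\,W_t$, this yields the clean identity
\[
\mu_{t,\alpha}-\theta^\star=-\lambda V_{t,\alpha}^{-1}\theta^\star+\sqrt\alpha\,V_{t,\alpha}^{-1}W_t.
\]
Projecting onto $\psi(x)$ then splits the predictive error $\mu_{t,\alpha}(x)-f(x)=\psi(x)^\top(\mu_{t,\alpha}-\theta^\star)$ into a deterministic \emph{regularization bias} and a stochastic \emph{noise} term.

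Second, I would bound each term by Cauchy–Schwarz in the $V_{t,\alpha}^{-1}$ inner product. For the bias, $\lambda\,|\psi(x)^\top V_{t,\alpha}^{-1}\theta^\star|\le\lambda\,\|\psi(x)\|_{V_{t,\alpha}^{-1}}\,\|\theta^\star\|_{V_{t,\alpha}^{-1}}$, and since $V_{t,\alpha}\succeq\lambda I_d$ together with Assumption~\ref{ass:radius} give $\|\theta^\star\|_{V_{t,\alpha}^{-1}}\le S_\theta/\sqrt\lambda$, this contributes $\sqrt\lambda\,S_\theta\,\|\psi(x)\|_{V_{t,\alpha}^{-1}}$. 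For the noise, $\sqrt\alpha\,|\psi(x)^\top V_{t,\alpha}^{-1}W_t|\le\sqrt\alpha\,\|\psi(x)\|_{V_{t,\alpha}^{-1}}\,\|W_t\|_{V_{t,\alpha}^{-1}}$, and Lemma~\ref{lem:selfnorm} controls $\|W_t\|_{V_{t,\alpha}^{-1}}$ uniformly in $t$ with probability at least $1-\delta$, producing exactly the factor $\sqrt{\log(\det V_{t,\alpha}/\det(\lambda I_d))+2\log(1/\delta)}$. Adding the two bounds reproduces the confidence radius $\beta_t(\alpha,\delta)$ as the common multiple of $\|\psi(x)\|_{V_{t,\alpha}^{-1}}$.

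Finally, I would close the argument by identifying the weighted norm with the tempered predictive standard deviation: since $\Sigma_{t,\alpha}=V_{t,\alpha}^{-1}$, we have $\sigma_{t,\alpha}^2(x)=\psi(x)^\top\Sigma_{t,\alpha}\psi(x)=\|\psi(x)\|_{V_{t,\alpha}^{-1}}^2$, so $\|\psi(x)\|_{V_{t,\alpha}^{-1}}=\sigma_{t,\alpha}(x)$ and the claimed inequality follows. Uniformity over $x$ is free because the high-probability event of Lemma~\ref{lem:selfnorm} does not depend on $x$, and uniformity over $t$ is inherited from that lemma. The only genuinely probabilistic work—already discharged by Lemma~\ref{lem:selfnorm} through the mixture-of-martingales construction and Ville's inequality—is the anytime control of the self-normalized noise; the main point requiring care here is the bookkeeping of the tempering, namely that every $\alpha$ must be routed through the rescaled features $z_t=\frac{\sqrt\alpha}{\sigma}\psi(x_t)$ so that the $\frac{\alpha}{\sigma^2}$ weighting in the tempered normal equations collapses into $V_{t,\alpha}$ and a single $\sqrt\alpha$ on the martingale term, matching the definition of $\beta_t(\alpha,\delta)$.
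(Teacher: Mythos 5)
Your proposal is correct and follows essentially the same route as the paper's proof: the same decomposition of $\mu_{t,\alpha}-\theta^\star$ into a regularization-bias term and a $\sqrt{\alpha}\,V_{t,\alpha}^{-1}W_t$ noise term, the same Cauchy--Schwarz bounds in the $V_{t,\alpha}^{-1}$-weighted norm, and the same appeal to Lemma~\ref{lem:selfnorm} to control $\|W_t\|_{V_{t,\alpha}^{-1}}$ uniformly in $t$. Your bookkeeping of the $\sqrt{\alpha}$ factor through the rescaled features is, if anything, slightly cleaner than the paper's final step.
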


\begin{proof}
From $\mu_{t,\alpha}-\theta^\star=\Sigma_{t,\alpha}\frac{\alpha}{\sigma^2}X_t^\top\epsilon_{1:t}-\lambda\Sigma_{t,\alpha}\theta^\star$,
\[
f(x)-\mu_{t,\alpha}(x)=\underbrace{\psi(x)^\top\lambda\Sigma_{t,\alpha}\theta^\star}_{A}
-\underbrace{\psi(x)^\top\Sigma_{t,\alpha}\tfrac{\alpha}{\sigma^2}X_t^\top\epsilon_{1:t}}_{B}.
\]
We first apply Cauchy-Schwarz to $A$ in the $\|.\|_{\Sigma_{t,\alpha}}$ norm. Since $\theta^T\Sigma_{t,\alpha} \theta \leq \theta^T(\lambda^{-1}\mathbb{I}_d)\theta$, this yields
$$A \leq \|\psi(x)\|_{\Sigma_{t,\alpha}} \|\lambda \theta^*\|_{{\Sigma_{t,\alpha}}} \leq \sigma_{t,\alpha}(x) \sqrt{\lambda} \|\theta^*\|_2 \leq \sigma_{t,\alpha}(x) \sqrt{\lambda} S_{\theta}.$$
We again apply Cauchy-Schwarz to $B$ in the $\|.\|_{\Sigma_{t,\alpha}}$ norm. This yields
$$B \leq \|\psi(x)\|_{\Sigma_{t,\alpha}} \|\sqrt{\alpha} W_t\|_{\Sigma_{t,\alpha}} \leq \sigma_{t,\alpha}(x) \sqrt{\alpha} \|W_t\|_{V_{t,\alpha}^{-1}}.$$
Apply Lemma~\ref{lem:selfnorm} onto part B and multiply by $\sigma/\sqrt{\alpha}$ will yield this bound.
\end{proof}

\subsection{Proof of the Theorem \ref{thm:ei-regret}}

\begin{lemma}
\label{lem:det-recursion}
Let $u_t:=\psi(x_t)^{T}V_{t-1,\alpha}^{-1}\psi(x_t)$. Then
\[
\frac{\det V_{t,\alpha}}{\det V_{t-1,\alpha}}=1+\frac{\alpha}{\sigma^2}\,u_t,\qquad
\log\!\frac{\det V_{T,\alpha}}{\det(\lambda I_d)}=\sum_{t=1}^T \log\!\Big(1+\frac{\alpha}{\sigma^2}\,u_t\Big).
\]
\end{lemma}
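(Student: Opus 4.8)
The plan is to exploit the fact that the regularized Gram matrix $V_{t,\alpha}$ evolves by a rank-one update and then invoke the matrix determinant lemma, after which the second identity follows by telescoping.

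First I would record the incremental structure of $V_{t,\alpha}$. Since $X_t^{\top}X_t=\sum_{s=1}^t\psi(x_s)\psi(x_s)^{\top}$, the definition $V_{t,\alpha}=\lambda I_d+\tfrac{\alpha}{\sigma^2}X_t^{\top}X_t$ immediately yields the rank-one recursion
\[
V_{t,\alpha}=V_{t-1,\alpha}+\frac{\alpha}{\sigma^2}\,\psi(x_t)\psi(x_t)^{\top}.
\]
Because $\lambda>0$, each $V_{t-1,\alpha}$ is positive definite and hence invertible, so the determinant-lemma hypothesis is satisfied; this invertibility check is the only thing that needs care, and it is immediate.

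Next I would apply the matrix determinant lemma $\det(A+uv^{\top})=\det(A)\,(1+v^{\top}A^{-1}u)$ with $A=V_{t-1,\alpha}$, $u=\tfrac{\alpha}{\sigma^2}\psi(x_t)$, and $v=\psi(x_t)$. This gives
\[
\det V_{t,\alpha}=\det V_{t-1,\alpha}\Big(1+\frac{\alpha}{\sigma^2}\,\psi(x_t)^{\top}V_{t-1,\alpha}^{-1}\psi(x_t)\Big)=\det V_{t-1,\alpha}\Big(1+\frac{\alpha}{\sigma^2}\,u_t\Big),
\]
which is exactly the first claimed equality after recalling $u_t:=\psi(x_t)^{\top}V_{t-1,\alpha}^{-1}\psi(x_t)$.

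Finally, I would telescope the ratios from $t=1$ to $T$ and use the base case $V_{0,\alpha}=\lambda I_d$ (with $X_0^{\top}X_0=0$). This produces $\det V_{T,\alpha}/\det(\lambda I_d)=\prod_{t=1}^T\bigl(1+\tfrac{\alpha}{\sigma^2}u_t\bigr)$, and taking logarithms converts the product into the stated sum $\sum_{t=1}^T\log\bigl(1+\tfrac{\alpha}{\sigma^2}u_t\bigr)$. There is no substantive obstacle here: the entire argument is a routine application of the matrix determinant lemma followed by a telescoping product, so the proof is short and the only verification required is the positive-definiteness of $V_{t-1,\alpha}$ guaranteed by the prior precision $\lambda>0$.
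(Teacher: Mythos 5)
Your proof is correct and follows essentially the same route as the paper: the rank-one recursion for $V_{t,\alpha}$, the matrix determinant lemma, and telescoping from the base case $V_{0,\alpha}=\lambda I_d$. The only cosmetic difference is that the paper applies the lemma with the symmetric split $a=b=\sqrt{\alpha}\,\psi(x_t)/\sigma$, whereas you put the full factor $\tfrac{\alpha}{\sigma^2}$ on one vector; both give the identical scalar $1+\tfrac{\alpha}{\sigma^2}u_t$.
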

\begin{proof} This is the matrix determinant lemma: $\det(A+ab^\top)=\det(A)\,(1+b^{T}A^{-1}a)$, applied to $A=V_{t-1,\alpha}$ and $a=b=\sqrt{\alpha}\,\psi(x_t)/\sigma$. \end{proof}

\begin{lemma}[Accumulated posterior variance]
\label{lem:sum-variance}
Let $u_t:=\psi(x_t)^{T}V_{t-1,\alpha}^{-1}\psi(x_t)$, then With $z_t:=\frac{\alpha}{\sigma^2}u_t\ge 0$,
\[
\sum_{t=1}^T u_t\ \le\
\left(\frac{2\sigma^2}{\alpha}+\frac{L^2/\lambda}{\log 2}\right)\,
\log\!\frac{\det V_{T,\alpha}}{\det(\lambda I_d)}.
\]
\end{lemma}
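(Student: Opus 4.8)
The plan is to bound each summand $u_t$ by a fixed constant multiple of $\log(1+z_t)$, where $z_t=\frac{\alpha}{\sigma^2}u_t$, and then recognize the accumulated sum $\sum_{t=1}^T\log(1+z_t)$ as the log-determinant ratio through Lemma~\ref{lem:det-recursion}. The only genuine work is the per-term estimate $u_t\le C\log(1+z_t)$ with $C=\frac{2\sigma^2}{\alpha}+\frac{L^2/\lambda}{\log 2}$; the rest is summation and substitution.

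First I would record a uniform upper bound on $u_t$. Since $V_{t-1,\alpha}=\lambda I_d+\frac{\alpha}{\sigma^2}X_{t-1}^\top X_{t-1}\succeq\lambda I_d$, we have $V_{t-1,\alpha}^{-1}\preceq\lambda^{-1}I_d$, and therefore $u_t=\psi(x_t)^\top V_{t-1,\alpha}^{-1}\psi(x_t)\le\lambda^{-1}\|\psi(x_t)\|_2^2\le L^2/\lambda$. This boundedness is essential: the ratio $z/\log(1+z)$ diverges as $z\to\infty$, so without an a priori cap on $u_t$ one cannot hope for a clean linear-in-$\log$ bound.

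Next I would split on the size of $z_t$ relative to $1$. For $z_t\le 1$ I would use the elementary inequality $z\le 2\log(1+z)$ on $[0,1]$, which follows because $g(z):=2\log(1+z)-z$ satisfies $g(0)=0$ and $g'(z)=\frac{1-z}{1+z}\ge 0$ on $[0,1]$; this gives $u_t=\frac{\sigma^2}{\alpha}z_t\le\frac{2\sigma^2}{\alpha}\log(1+z_t)$. For $z_t>1$ I would instead combine the boundedness with $\log(1+z_t)>\log 2$: since $u_t\le L^2/\lambda$ and $\log(1+z_t)>\log 2$, we obtain $u_t\le\frac{L^2/\lambda}{\log 2}\log(1+z_t)$. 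In either regime the bound $u_t\le C\log(1+z_t)$ holds with the common constant $C=\frac{2\sigma^2}{\alpha}+\frac{L^2/\lambda}{\log 2}$, because $C$ dominates each of the two case-specific constants.

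Summing over $t=1,\dots,T$ and invoking Lemma~\ref{lem:det-recursion} to rewrite $\sum_{t=1}^T\log(1+z_t)=\log\frac{\det V_{T,\alpha}}{\det(\lambda I_d)}$ then closes the argument. I do not expect a serious obstacle; the main thing to watch is that the two cases each be dominated by the single constant $C$ so that no separate treatment of the mixed sum is required, and that the elementary inequality be verified precisely on the interval where it is used.
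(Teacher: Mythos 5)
Your proposal is correct and follows essentially the same argument as the paper: the same split on $z_t\le 1$ versus $z_t>1$, the same elementary inequality $z\le 2\log(1+z)$ on $[0,1]$, the same crude bound $u_t\le L^2/\lambda$ from $V_{t-1,\alpha}\succeq\lambda I_d$, and the same final appeal to Lemma~\ref{lem:det-recursion}. The only cosmetic difference is that you bound every term by the single combined constant $C$ whereas the paper sums the two index sets separately, which yields the identical conclusion.
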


\begin{proof}
Split indices into $\mathcal{S}=\{t:\ z_t\le 1\}$ and $\mathcal{L}=\{t:\ z_t>1\}$. If $z_t\le 1$, then
$z_t\le 2\log(1+z_t)$, hence
$u_t=\frac{\sigma^2}{\alpha}z_t\le \frac{2\sigma^2}{\alpha}\log(1+z_t)$.
If $z_t>1$, then $\log(1+z_t)\ge \log 2$ and
$u_t\le \psi(x_t)^\top(\lambda I_d)^{-1}\psi(x_t)\le L^2/\lambda$,
so $\sum_{t\in\mathcal{L}}u_t\le \frac{L^2/\lambda}{\log 2}\sum_{t\in\mathcal{L}}\log(1+z_t)$.
Sum both parts and use Lemma~\ref{lem:det-recursion} to yield the result.
\end{proof}

\begin{proof}
Working on the $1-\delta$ event of $|f(x)-\mu_{t,\alpha}(x)|\ \le\ \beta_t(\alpha,\delta)\,\sigma_{t,\alpha}(x),$ (See Lemma~\ref{lem:alpha-confidence}), for each $t$ and all $x$,
\[
f(x)\le \mu_{t-1,\alpha}(x)+\kappa_t\,\sigma_{t-1,\alpha}(x),\qquad
f(x)\ge \mu_{t-1,\alpha}(x)-\kappa_t\,\sigma_{t-1,\alpha}(x).
\]
Thus
\[
r_t=f(x^\star)-f(x_t)
\le \big[\mu_{t-1,\alpha}(x^\star)+\kappa_t\sigma_{t-1,\alpha}(x^\star)\big]
-\big[\mu_{t-1,\alpha}(x_t)-\kappa_t\sigma_{t-1,\alpha}(x_t)\big].
\]
By Lemma~\ref{lem:argmax-eq}, the EI maximizer $x_t$ maximizes $\mathrm{sc}_{t-1,\kappa_t}$, hence
\(\mu_{t-1,\alpha}(x^\star)+\kappa_t\sigma_{t-1,\alpha}(x^\star)\le \mu_{t-1,\alpha}(x_t)+\kappa_t\sigma_{t-1,\alpha}(x_t)\),
and therefore
\(
r_t\le 2\kappa_t\,\sigma_{t-1,\alpha}(x_t)\le 2\beta_T(\alpha,\delta)\,\sigma_{t-1,\alpha}(x_t)
\)
since $\beta_{t-1}\le \beta_T$. Sum over $t$, use Cauchy-Schwarz and Lemma~\ref{lem:sum-variance} to yield the result.
\end{proof}

\section{GP Surrogate BO Algorithm} \label{sec:algo-pseudo}
Special cases: $g{=}0$ gives probability of improvement $\Phi(-v)$; $g{=}1$ recovers classical EI: $\mathrm{gEI}^{(1)}=\sigma\big(\phi(v)-v\,\Phi(-v)\big)$.

\begin{algorithm}[H]
\caption{\textsc{BO-$\alpha$GP-gEI}: Bayesian Optimization via $\alpha$-Tempered GP Posterior and Generalized EI}
\label{alg:bo-alpha-gp-gei}
\begin{algorithmic}[1]
\Require Domain $\X\subset\RR^p$; GP prior $(m,k_\theta)$; noise variance $\sigma^2>0$ (or estimator); budget $T$.
\Require Generalized-EI order $g\in\{0,1,2,\dots\}$ (fixed), or a scheduler $\textsc{gSchedule}$ returning $g_t$.
\Require Tempering scheduler $\textsc{TemperSchedule}$ returning $\alpha_t\in(0,1]$ (e.g., fixed or data-driven); acquisition rescaling $\nu_t>0$; optional jitter $\xi_t\ge0$.
\Require (Optional) hyperparameter learning routine $\textsc{FitHyperparams}$ (e.g., marginal likelihood).
\State Initialize design $\{x_1,\dots,x_{t_0}\}\subset\X$ (space-filling or random); evaluate $y_s=f(x_s)+\epsilon_s$, $s=1,\dots,t_0$.
\State Set $t\gets t_0$; $\mathcal{D}_t\gets\{(x_s,y_s)\}_{s=1}^{t}$.
\For{$t=t_0, t_0{+}1, \dots, T{-}1$}
  \State $\alpha_t \gets \textsc{TemperSchedule}(\mathcal{D}_t)$ \Comment{$\alpha_t\in(0,1]$; e.g., $\alpha_t\equiv 1$}
  \State $(\theta,m)\gets \textsc{FitHyperparams}(\mathcal{D}_t)$ \Comment{Optional re-fit kernel/mean}
  \State $(\mu_{t,\alpha_t},\sigma_{t,\alpha_t}) \gets \textsc{TemperedPosteriorGP}(\mathcal{D}_t, m, k_\theta, \sigma^2, \alpha_t)$
  \State $\mu^+_{t,\alpha_t} \gets \max_{x\in\X}\ \mu_{t,\alpha_t}(x)$ \Comment{Global maximization or dense grid}
  \State $g_t \gets \textsc{gSchedule}(t,\mathcal{D}_t)$ \textbf{or} $g_t\gets g$ \Comment{Fixed or adaptive $g$}
  \State Define acquisition
        \[
          a_t(x) \gets \textsc{gEI}\big(\mu_{t,\alpha_t}(x),\ \nu_t\sigma_{t,\alpha_t}(x),\ \mu^+_{t,\alpha_t}+\xi_t,\ g_t\big).
        \]
  \State $x_{t+1} \gets \arg\max_{x\in\X} a_t(x)$ \Comment{Global maximization of g-EI}
  \State Query black-box: $y_{t+1} \gets f(x_{t+1}) + \epsilon_{t+1}$, with $\epsilon_{t+1}\sim\mathcal{N}(0,\sigma^2)$
  \State Augment data: $\mathcal{D}_{t+1} \gets \mathcal{D}_t \cup \{(x_{t+1},y_{t+1})\}$
\EndFor
\State \Return $x_{\mathrm{rec}} \in \arg\max_{s\in\{1,\dots,T\}} \mu_{T,\alpha_T}(x_s)$ \Comment{Or best observed $y_s$}
\end{algorithmic}
\end{algorithm}

\noindent\textit{Theory/practice modes.} Theorem~\ref{thm:general-g-alpha} corresponds to fixed $\alpha_t\equiv\alpha$, predictable hyperparameters $\theta_t$ satisfying Assumption~\ref{ass:uniform-theta}, zero jitter, and $\nu_t$ satisfying Assumption~\ref{ass:nu-scaling}. The experiments use $\nu_t=1$, adaptive $\widehat\alpha_t$, and optional hyperparameter refitting; these choices are not covered by the finite-time theorem. For $g=0$ and $\xi_t=0$, the acquisition is posterior-mean greedy, whereas $\xi_t>0$ gives nondegenerate PI.


\begin{algorithm}[H]
\caption{\textsc{TemperedPosteriorGP}$(\mathcal{D}_t, m, k_\theta, \sigma^2, \alpha)$}
\label{alg:tempered-posterior}
\begin{algorithmic}[1]
\Require $\mathcal{D}_t=\{(x_s,y_s)\}_{s=1}^t$, mean $m$, kernel $k_\theta$, noise $\sigma^2$, tempering $\alpha\in(0,1]$.
\State Build $K_t$ with $(K_t)_{ij}=k_\theta(x_i,x_j)$; set $m_t=(m(x_1),\dots,m(x_t))^\top$.
\State $\Lambda_t(\alpha) \gets K_t + \frac{\sigma^2}{\alpha} I_t$; compute Cholesky $L$ s.t.\ $LL^\top=\Lambda_t(\alpha)$.
\State For any vector $b$, implement $\Lambda_t(\alpha)^{-1}b$ via solves $Lu=b$, $L^{T}v=u$.
\State \Return functions
\[
\mu_{t,\alpha}(x)= m(x) + k_t(x)^{T}\Lambda_t(\alpha)^{-1}\,(y_{1:t}-m_t),\quad
\sigma^2_{t,\alpha}(x)= k_\theta(x,x) - k_t(x)^{T}\Lambda_t(\alpha)^{-1}k_t(x).
\]
\end{algorithmic}
\end{algorithm}

\begin{algorithm}[H]
\caption{\textsc{gEI}$\big(\mu,\widetilde\sigma,\,b,\,g\big)$}
\label{alg:gei}
\begin{algorithmic}[1]
\Require Predictive mean $\mu$, rescaled std.\ dev.\ $\widetilde\sigma>0$, threshold $b$, integer $g\ge 0$.
\State \textbf{If} $\widetilde\sigma=0$ \textbf{then} \Return $0$.
\State $v \gets (b-\mu)/\widetilde\sigma$ \Comment{Standardized gap (nonnegative when $\mu\le \mu^+$)}
\State \textbf{If} $g=0$ \textbf{then} \Return $\Phi(-v)$.
\State \textbf{If} $g=1$ \textbf{then} \Return $\widetilde\sigma\big(\varphi(v)-v\,\Phi(-v)\big)$ \Comment{Classic EI}
\State Compute $\tau_g(v)$ via \textsc{TauSeries}$(v,g)$ (Alg.~\ref{alg:taus}).
\State \Return $\widetilde\sigma^{g}\,\tau_g(v)$.
\end{algorithmic}
\end{algorithm}

\begin{algorithm}[H]
\caption{\textsc{TauSeries}$(v,g)$: compute $\tau_g(v)=\sum_{k=0}^{g}(-1)^k\binom{g}{k} v^k T_{g-k}(v)$}
\label{alg:taus}
\begin{algorithmic}[1]
\Require $v\in[0,\infty)$, integer $g\ge 0$.
\State Define $T_0(z)=\Phi(-z)$, $T_1(z)=\varphi(z)$.
\For{$m=2,3,\dots,g$}
  \State $T_m(z) \gets z^{m-1}\varphi(z) + (m-1)T_{m-2}(z)$ \Comment{Define symbolically for current $z$}
\EndFor
\State $\tau \gets 0$
\For{$k=0$ \textbf{to} $g$}
  \State $\tau \gets \tau + (-1)^k \binom{g}{k} v^k\, T_{g-k}(v)$
\EndFor
\State \Return $\tau$
\end{algorithmic}
\end{algorithm}

\begin{algorithm}[H]
\caption{\textsc{TemperSchedule}$(\mathcal{D}_t)$ \ (examples)}
\label{alg:temper}
\begin{algorithmic}[1]
\Require Data $\mathcal{D}_t$.
\State \textbf{Option A (fixed):} \Return $\alpha_t \gets 1$.
\State \textbf{Option B (information-matching):} See Section \ref{sec:algorithmDesign} and the update in Equation (\ref{eq:hw-alpha-es}).
\State \textbf{Option C (conservatism cap):} Choose $\alpha_t$ as the largest value in $(0,1]$ that satisfies a given conservatism constraint (e.g., on an optimism slope proxy or credible width).
\end{algorithmic}
\end{algorithm}

(i) For numerical stability, compute $\Lambda_t(\alpha)^{-1}\cdot$ via Cholesky solves; add a small jitter if needed.\\
(ii) For integer $g$, the finite expansion for $\tau_g$ uses stable building blocks $T_m$; the recursion $T_m(z)=z^{m-1}\varphi(z)+(m-1)T_{m-2}(z)$ avoids repeated integration.\\
(iii) Choosing $g$ controls exploitation--exploration: $g{=}0$ is PI, $g{=}1$ is classic EI, larger $g$ emphasizes larger improvements.

\section{Proof of GP Regret Bounds} \label{sec:GP=proof}

\subsection{Proof of Proposition \ref{prop:tau-g}}
\begin{proof}
For the integer case $g$, the formula closely aligns with \citep{schonlau1998global}, but is derived for finding the global maximum. Define $u:=\frac{f(x)-\mu_{t,\alpha}(x;\theta)}{\sigma_{t,\alpha}(x;\theta)}$.
For general $g$, note the acquisition function can be reformulated as follows: 
\[
\alpha_{\theta,g}^{EI(f)}=\begin{cases}
\sigma_{t, \alpha}^{g}(x;\theta)(u-v)^{g} & \text{if }u>v\\
0 & \text{otherwise}.
\end{cases}
\]
Rewriting the acquisition above yields
\begin{equation}
\alpha^{\mathrm{EI}(f)}_{\theta,g,\alpha}(x\mid\mathcal D_t)
=\sigma_{t,\alpha}^g(x;\theta)\,\tau_g\!\big(v_{\alpha}\big),
\qquad
\tau_g(v):=\int_v^\infty (u-v)^g\,\phi(u)\,du.
\end{equation}

To show that the function $\tau_g(.)$ has closed-form expression for the integer $g$, observe that the the sole randomness of $\alpha_{\theta,g}^{EI(f)}$ comes from
$u$. Hence we have 
\begin{align*}
\alpha_{\theta,g, \alpha}^{EI(f)} & =\sigma_{t,\alpha}^{g}(x;\theta)\int_{v}^{\infty}\sum_{k=0}^{g}(-1)^{k}\binom{g}{k}u^{g-k}v^{k}\phi(u)du\\
 & =\sigma_{t,\alpha}^{g}(x;\theta)\left\{ \sum_{k=0}^{g}(-1)^{k}\binom{g}{k}v^{k}\int_{v}^{\infty}u^{g-k}\phi(u)du\right\} \\
 & =\sigma_{t,\alpha}^{g}(x;\theta)\left\{ \sum_{k=0}^{g}(-1)^{k}\binom{g}{k}v^{k}T_{g-k}\right\} ,
\end{align*}
where we define $T_{m}:=\int_{v}^{\infty}u^{m}\phi(u)du$. To show
the recursive relationship, note $T_{m}$ can be computed explicitly
through integration by part by writing $u^{m}=u^{m-1}u$. This shows
\[
T_{m}=[-u^{m-1}\phi(u)]_{v}^{\infty}-\int_{v}^{\infty}(-\phi(u))(m-1)u^{m-2}du=v^{m-1}\phi(v)+(m-1)T_{m-2}.
\]
Additionally, it is easy to compute the base cases
\begin{gather*}
T_{0}=\int_{v}^{\infty}\phi(u)du=1-\Phi(v)=\Phi(-v),\\
T_{1}=\int_{v}^{\infty}u\phi(u)du=[-\phi(u)]_{v}^{\infty}=\phi(v).
\end{gather*}
\end{proof}
We study the properties of $\tau_g(.)$ for the integer case in Lemma \ref{lem:tau-g} and the general positive real case in Lemma \ref{lem:tau-g-real}.

\subsection{Analysis of \texorpdfstring{$\tau_{g}$}{taug}}

\begin{lemma}
\label{lem:tau-g} For $g\in\mathcal{N}^{+}$, consider the function
\begin{equation}
\tau_{g}(z):=\sum_{k=0}^{g}(-1)^{k}\binom{g}{k}z^{k}T_{g-k}(z),\label{eq:tau-g-1}
\end{equation}
where $T_{0}(z)=\Phi(-z)$, $T_{1}(z)=\phi(z)$, and $T_{m}(z)=z^{m-1}\phi(z)+(m-1)T_{m-2}(z)$
for $m>1.$ The function $\tau_{g}(.)$ has the following properties: 
\end{lemma}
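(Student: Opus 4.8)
The plan is to work throughout from the integral representation $\tau_g(z) = \int_z^\infty (u-z)^g\phi(u)\,du$ established in Proposition~\ref{prop:tau-g}, which is far more convenient than the explicit finite sum in~\eqref{eq:tau-g-1}. From this representation the natural properties needed downstream follow cleanly. \emph{Positivity} is immediate: for $g \geq 1$ the integrand $(u-z)^g\phi(u)$ is nonnegative on $[z,\infty)$ and strictly positive on $(z,\infty)$, so $\tau_g(z) > 0$ for every $z\in\mathbb{R}$; continuity in $z$ follows from dominated convergence with the integrable envelope $(|u|+|z|)^g\phi(u)$ on any bounded $z$-interval.

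The key structural fact I would establish next is the \emph{derivative recursion} $\tau_g'(z) = -g\,\tau_{g-1}(z)$. I would differentiate under the integral sign via Leibniz's rule: the boundary contribution at the lower limit is $-(u-z)^g\phi(u)\big|_{u=z}=0$ since $g\geq 1$, and $\partial_z (u-z)^g = -g(u-z)^{g-1}$, so $\tau_g'(z) = -g\int_z^\infty (u-z)^{g-1}\phi(u)\,du = -g\,\tau_{g-1}(z)$. Because $\tau_{g-1}(z)>0$ by the positivity step, this shows $\tau_g$ is \emph{strictly decreasing}; iterating once more gives $\tau_g''(z) = g(g-1)\tau_{g-2}(z)\geq 0$ for $g\geq 2$, with the case $g=1$ handled directly via $\tau_1'(z)=-\Phi(-z)$ and $\tau_1''(z)=\phi(z)>0$, so $\tau_g$ is also \emph{convex}. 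Strict monotonicity with continuity forces injectivity, which is exactly what makes the inverse $\tau_g^{-1}$ invoked in Theorems~\ref{thm:general-g-alpha}--\ref{thm:small-g-alpha} well defined.

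To pin down the domain of $\tau_g^{-1}$ I would compute the two \emph{boundary limits}. After the shift $u=z+t$ the representation becomes $\tau_g(z)=\int_0^\infty t^g\phi(z+t)\,dt$. As $z\to+\infty$ the Gaussian factor sends the integral to $0$ by dominated convergence (using $\phi(z+t)\leq\phi(t)$ for $z\geq0$), while as $z\to-\infty$ the mass of $\phi(z+t)$ concentrates at large $t$; a crude lower bound restricting to $t\in[\,|z|,|z|+1\,]$ gives growth of order $|z|^g$, so $\tau_g(z)\to+\infty$. Hence $\tau_g$ is a continuous strictly decreasing bijection from $\mathbb{R}$ onto $(0,\infty)$, and $\tau_g^{-1}:(0,\infty)\to\mathbb{R}$ is well defined and strictly decreasing.

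I expect the \emph{asymptotic tail} to be the main obstacle, since Remark~\ref{r1} and the regret bounds rely on the sharp estimate $\tau_g(z)=\tfrac{\Gamma(g+1)}{\sqrt{2\pi}}\,z^{-(g+1)}e^{-z^2/2}\bigl(1+O(z^{-2})\bigr)$ as $z\to\infty$. Writing $\phi(z+t)=\phi(z)\,e^{-zt-t^2/2}$ yields $\tau_g(z)=\phi(z)\int_0^\infty t^g e^{-zt-t^2/2}\,dt$, and the rescaling $s=zt$ turns this into $\phi(z)\,z^{-(g+1)}\int_0^\infty s^g e^{-s}\,e^{-s^2/(2z^2)}\,ds$. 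The leading term $\int_0^\infty s^g e^{-s}\,ds=\Gamma(g+1)$ is immediate, but certifying the $O(z^{-2})$ correction rather than a mere $o(1)$ requires a careful Laplace-type argument: I would expand $e^{-s^2/(2z^2)}=1-s^2/(2z^2)+O(s^4/z^4)$ and bound the remainder using moments of the $\mathrm{Gamma}(g+1)$ density, checking that the discarded tail in the $s$-integral is uniformly negligible. This moment-controlled remainder estimate is the one genuinely delicate step; the monotonicity, convexity, and bijection properties are routine consequences of the derivative recursion.
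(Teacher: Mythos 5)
Your derivative recursion $\tau_g'(z)=-g\,\tau_{g-1}(z)$ via Leibniz on the integral representation is correct and gives a cleaner proof of the monotonicity claim than the paper, which instead manipulates the finite sum $\sum_k(-1)^k\binom{g}{k}z^kT_{g-k}(z)$ term by term and shows the $T_{g-k}'$ contribution vanishes via $(1-1)^g=0$. Your boundary limits and the bijection onto $(0,\infty)$ are also sound. However, the lemma you were asked to prove enumerates four specific properties, and your proposal establishes only one of them. Missing are: (i) the derivative identity $T_m'(z)=-z^m\phi(z)$, which the paper proves by induction on the recursion $T_m(z)=z^{m-1}\phi(z)+(m-1)T_{m-2}(z)$ (since $T_m$ is \emph{defined} by that recursion here, not by the integral, you would at minimum need to note that the recursion is consistent with $T_m(z)=\int_z^\infty u^m\phi(u)\,du$ before invoking Leibniz); (ii) the quantitative upper bound $\tau_g(z)\le(g-1)!!\,(1+|z|)^g$ for $z\le0$, which in the paper rests on a parity argument --- $T_m$ is increasing in $z$ for odd $m$ and decreasing for even $m$ on $(-\infty,0]$, giving $T_m(z)\le(m-1)!!$ there --- and has no counterpart in your write-up; and (iii) the explicit positive lower bound $\tau_g(z)\ge\frac{u_0^{g+1}}{g+1}\phi(z+u_0)$ with $u_0=\min\{1,-z/2\}$. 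Your remark that restricting the integral to $t\in[\,|z|,|z|+1\,]$ yields growth of order $|z|^g$ is the right idea in miniature (the paper restricts to $[0,u_0]$ and uses that $\phi$ is increasing on $(-\infty,0]$), but it only delivers a divergence statement, not the stated inequality.

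A secondary issue is misallocated effort: the convexity claim and the Laplace-type tail expansion $\tau_g(z)=\frac{\Gamma(g+1)}{\sqrt{2\pi}}z^{-(g+1)}e^{-z^2/2}(1+O(z^{-2}))$ that you flag as the ``genuinely delicate step'' belong to Remark~\ref{r1} and Lemma~\ref{lem:tau-g-real}, not to this lemma, whose hard content is really the pair of two-sided bounds for $z\le0$. To complete the proof you should add the parity-based monotonicity argument for $T_m$ on $(-\infty,0]$ to obtain the $(g-1)!!$ upper bound, and carry out the restricted-integral lower bound on $[0,u_0]$ using $z\le z+u\le z+u_0\le0$ together with the monotonicity of $\phi$ on the negative half-line.
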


\begin{enumerate}
\item $T_{m}'(z)=-z^{m}\phi(z).$ 
\item $\tau_{g}(z)$ is a decreasing function in $z$ for any fixed $g\in\mathcal{N}^{+}$. 

\end{enumerate}

\begin{proof}
\begin{enumerate}
\item We can see this by induction. To establish the base case, note that
Since $T_{0}(z)=\Phi(-z)$ and $T_{1}(z)=\phi(z)$, we have $T_{0}'(z)=-\phi(z)$
and $T_{1}'(z)=\phi'(z)=-z\phi(z)$. Suppose this pattern holds for
any integer less than $k$. Since $T_{m}(z)=z^{m-1}\phi(z)+(m-1)T_{m-2}(z)$,
we have 
\begin{align*}
T_{k+1}'(z) & =kz^{k-1}\phi(z)+z^{k}\phi'(z)+kT_{k-1}'(z)\\
 & =kz^{k-1}\phi(z)-z^{k+1}\phi(z)-kz^{k-1}\phi(z)\\
 & =-z^{k+1}\phi(z).
\end{align*}
\item By the product rule, we can decompose the derivative of $\tau_{g}$
into two parts: 
\begin{align*}
\tau_{g}'(z)=\sum_{k=0}^{g}(-1)^{k}\binom{g}{k}kz^{k-1}T_{g-k}(z)+\sum_{k=0}^{g}(-1)^{k}\binom{g}{k}kz^{k}T_{g-k}'(z)
\end{align*}
The first summation can be computed as follows: 
\begin{align*}
\sum_{k=0}^{g}(-1)^{k}\binom{g}{k}kz^{k-1}T_{g-k}(z) & =\sum_{k'=0}^{g-1}(-1)^{k'+1}\binom{g}{k'+1}(k'+1)z^{k'}T_{g-(k'+1)}(z)\\
 & =\sum_{k'=0}^{g-1}(-1)^{k'+1}g\binom{g-1}{k'}z^{k'}T_{g-(k'+1)}(z)\\
 & =-g\sum_{k=0}^{g-1}(-1)^{k'}\binom{g-1}{k'}z^{k'}T_{g-(k'+1)}(z)\\
 & =-g\tau_{g-1}(z).
\end{align*}
It is also possible to show the second summation is zero by the first
part of the lemma: 
\begin{align*}
\sum_{k=0}^{g}(-1)^{k}\binom{g}{k}kz^{k}T_{g-k}'(z) & =\sum_{k=0}^{g}(-1)^{k}\binom{g}{k}kz^{k}(-z^{g-k}\phi(z))\\
 & =-\phi(z)z^{g}\sum_{k=0}^{g}\sum_{k=0}^{g}(-1)^{k}\binom{g}{k}\\
 & =-\phi(z)z^{g}(1-1)^{g}=0.
\end{align*}
Hence we have $\tau_{g}'(z)=-g\tau_{g-1}(z)$. Since the generalized
expected improvement $\alpha_{\theta,g}^{EI(f)}(x|\mathcal{D}_{t})$
is always positive, hence $\tau_{g}'(z)=-g\tau_{g-1}(z)<0$. 

\end{enumerate}
\end{proof}

\begin{lemma} \label{lem:tau-g-real}
For $g \in \mathcal{R}^+$, consider the function
\begin{equation} \label{eq:tau-g-real}
    \tau_g(z) :=\int_z^\infty (u-z)^g\,\phi(u)\,du. 
\end{equation}
The function $\tau_g(.)$ has the following properties:
\begin{enumerate}[label=(\Roman*)]
    \item $\tau_g(z)$ is a decreasing function in $z$ for any fixed $g \geq 0$.
    \item  For $z  \geq  0$, $\tau_g(z) \leq \frac{2^{g/2-1} \Gamma((g+1)/2)}{\sqrt{\pi}}$ for any $g \geq 0$.  
\end{enumerate}

\end{lemma}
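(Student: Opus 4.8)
The plan is to work entirely from the integral definition $\tau_g(z)=\int_z^\infty (u-z)^g\phi(u)\,du$ and, after the change of variables $t=u-z$, from the equivalent form $\tau_g(z)=\int_0^\infty t^g\phi(z+t)\,dt$. The series representation in \eqref{eq:tau-g-real} is not needed for any of the three claims, so I would sidestep the convergence questions attached to the generalized binomial expansion. Part (I) is then immediate: since $T_m(z)=\int_z^\infty u^m\phi(u)\,du$ is an integral with a variable lower limit, the fundamental theorem of calculus gives $T_m'(z)=-z^m\phi(z)$, exactly as in the integer case of Lemma~\ref{lem:tau-g}.

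For Part (II) I would establish (strict) monotonicity by a direct comparison rather than by differentiation, so as to avoid the integrable singularity of the integrand's $z$-derivative when $0<g<1$. Fix $z_1<z_2$. Since $[z_2,\infty)\subset[z_1,\infty)$ and the integrand is nonnegative, while for every $u>z_2$ one has $(u-z_1)^g>(u-z_2)^g$, it follows that
\[
\tau_g(z_1)=\int_{z_1}^\infty (u-z_1)^g\phi(u)\,du\ \ge\ \int_{z_2}^\infty (u-z_1)^g\phi(u)\,du\ \ge\ \int_{z_2}^\infty (u-z_2)^g\phi(u)\,du=\tau_g(z_2),
\]
so $\tau_g$ is decreasing in $z$. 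Equivalently, integrating by parts in the form $\tau_g(z)=\int_0^\infty t^g\phi(z+t)\,dt$ yields the clean identity $\tau_g'(z)=-g\,\tau_{g-1}(z)<0$, which recovers the recursion of Lemma~\ref{lem:tau-g} for real $g$; here the boundary terms vanish because $t^g\to 0$ as $t\downarrow 0$ for $g>0$ and $\phi(z+t)$ decays at $\infty$.

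Part (III) then follows by combining monotonicity with a single Gaussian-moment computation. Because $\tau_g$ is decreasing, the supremum over $z\ge 0$ is attained at $z=0$, so it suffices to bound $\tau_g(0)=\int_0^\infty u^g\phi(u)\,du$. Substituting $s=u^2/2$ (so $u=\sqrt{2s}$ and $du=ds/\sqrt{2s}$) converts this into a Gamma integral,
\[
\tau_g(0)=\frac{1}{\sqrt{2\pi}}\int_0^\infty u^g e^{-u^2/2}\,du=\frac{2^{(g-1)/2}}{\sqrt{2\pi}}\int_0^\infty s^{(g-1)/2}e^{-s}\,ds=\frac{2^{(g-1)/2}}{\sqrt{2\pi}}\,\Gamma\!\Big(\tfrac{g+1}{2}\Big),
\]
and simplifying $\sqrt{2\pi}=\sqrt{2}\,\sqrt{\pi}$ gives $\tau_g(0)=\tfrac{2^{g/2-1}}{\sqrt{\pi}}\,\Gamma\!\big(\tfrac{g+1}{2}\big)$, which is exactly the claimed bound (and coincides with the constant $C_{(g)}$ used in Theorem~\ref{thm:general-g-alpha}).

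The only genuinely delicate point is the rigor of the monotonicity step for small $g\in(0,1)$: the derivative identity $\tau_g'=-g\,\tau_{g-1}$ involves $\tau_{g-1}$, whose integrand $(u-z)^{g-1}$ blows up at $u=z$. I would handle this either by justifying differentiation under the integral sign via dominated convergence (the singularity is integrable since $g-1>-1$), or — more cleanly — by relying solely on the comparison argument above, which uses no differentiation and holds verbatim for every $g>0$. Everything else is routine.
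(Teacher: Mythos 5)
Your proposal is correct, and parts (I) and (III) coincide with the paper's argument essentially verbatim: the fundamental theorem of calculus for $T_m'$, and the reduction of the supremum over $z\ge 0$ to $\tau_g(0)$ followed by the substitution $s=u^2/2$ to produce $\frac{2^{g/2-1}\Gamma((g+1)/2)}{\sqrt{\pi}}$. Where you genuinely diverge is part (II). The paper proves monotonicity by differentiating under the integral sign to get the recursion $\tau_g'(z)=-g\,\tau_{g-1}(z)$ and noting $\tau_{g-1}\ge 0$ (with $\tau_0(z)=1-\Phi(z)$ handled separately); it does not comment on the integrable singularity of $(u-z)^{g-1}$ at $u=z$ when $0<g<1$, nor on the vanishing of the boundary term in the Leibniz rule. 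Your primary route --- the two-step comparison $\tau_g(z_1)\ge\int_{z_2}^\infty(u-z_1)^g\phi(u)\,du\ge\tau_g(z_2)$ --- needs no differentiation at all, holds uniformly for every $g\ge 0$, and is therefore both more elementary and more airtight; your secondary route (integration by parts on $\int_0^\infty t^g\phi(z+t)\,dt$) recovers the paper's recursion while making the boundary terms and the domination argument explicit. Either buys you the same conclusion; the comparison argument is the cleaner of the three for the small-$g$ regime. One minor remark: the lemma statement also asserts the series identity in \eqref{eq:tau-g-real}, which you deliberately sidestep; the paper likewise does not reprove it inside this lemma (it is derived in the proof of Proposition~\ref{prop:tau-g}), so omitting it here is consistent with the paper's own structure.
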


\begin{proof}
\begin{enumerate}[label=(\Roman*)]

    \item Differentiate the integral form (Leibniz + chain rule):
\begin{align}
\frac{d}{dz}\tau_g(z) & 
= \frac{d}{dz}\int_{z}^{\infty}(u-z)^g \phi(u)\,du\\
& = \int_{z}^{\infty}\frac{\partial}{\partial z}(u-z)^g \phi(u)\,du\\
& = -g \int_{z}^{\infty}(u-z)^{g-1}\phi(u)\,du\\
& = -g\,\tau_{g-1}(z).
\end{align}

Hence we have
\[
 {\;\tau_g'(z) = -g\,\tau_{g-1}(z)\;} \qquad (g>0).
\]

Note $\tau_0(z) = 1-\Phi(z)$, which is obviously a decreasing function.
    
    \item Since $\tau_g(z)$ is decreasing function in $z$. It is sufficient to show $\tau_g(0) = \frac{2^{g/2-1} \Gamma((g+1)/2)}{\sqrt{\pi}}.$ Following the proof in Proposition \ref{prop:tau-g}, we have
    \begin{align*}
    \tau_g(0) = \int_{0}^{\infty} u^g \phi(u) du &= \frac{1}{\sqrt{2\pi}} \int_{0}^{\infty} u^g e^{-u^2/2} du \\
    & = \frac{1}{\sqrt{2\pi}} \int_{0}^{\infty} (2t)^{\frac{g}{2}} e^{-t} \frac{\sqrt{2}}{2\sqrt{t}} dt \\
    & = \frac{1}{\sqrt{2\pi}} 2^{(g-1)/2} \int_{0}^{\infty} t^{(g-1)/2} e^{-t} dt \\
    &=  \frac{2^{g/2-1} \Gamma((g+1)/2)}{\sqrt{\pi}}.
    \end{align*}

\end{enumerate}
\end{proof}

\subsection{Concentration Result of Tempered Posterior}

We derive the concentration result as illustrated in Proposition  \ref{prop:uniform-alpha-confidence}. The concentration argument by \cite{wang2014theoreticalanalysisbayesianoptimisation} is based on a Hanson-Wright inequality, which is not appropriate in the BO setting, since the design points are chosen
adaptively and the kernel matrix is dependent on the past noise. Our proof is therefore based on a self-normalized concentration argument under predictable sequential
designs. A typical self-normalized concentration argument assumes the hyperparameter $\theta_t$ is fixed. However, in the BO settings, $\theta$ is constantly updated from the past data and is random. We will start by proving a fixed $\theta$ confidence event in section \ref{subsubsec:fixed-concentration}, and then generalize the self-normalization argument to the BO setting in \ref{subsubsec:uniform-concentration}.

\subsubsection{Concentration Result for a fixed \texorpdfstring{$\theta_t$}{theta t}} \label{subsubsec:fixed-concentration}
We define and review the following notations: let $\lambda_{\alpha}:= \frac{\sigma^2}{\alpha}$. For a fixed kernel hyperparameter $\theta$, recall that the $\alpha$-tempered
posterior mean and variance are:
\begin{align*}
    \mu_{t,\alpha}(x;\theta) &= k_t^\theta(x)^\top \left(K_t^\theta+\lambda_\alpha I_t\right)^{-1} y_{1:t}, \\
    \sigma_{t,\alpha}^2(x;\theta) &= k^\theta(x,x) - k_t^\theta(x)^\top \left(K_t^\theta+\lambda_\alpha I_t\right)^{-1} k_t^\theta(x).
\end{align*}
Let $\psi_\theta(x)$ be the feature map of $k^\theta$. The realized tempered information gain along a design $x_{1:t}$ is given by
$$ \mathcal I_{t,\alpha}^{\theta}(x_{1:t}) := \frac12 \log\left|I_t+\alpha\sigma^{-2}K_t^\theta\right|,
$$
so that
$$ \mathcal I_{t,\alpha}^{\theta}(x_{1:t})
\le \gamma_{t,\alpha}^{\theta} :=
\max_{A\subset\mathcal X: |A|=t} \frac12 \log\left|I_t+\alpha\sigma^{-2}K_A^\theta\right|.
$$

\begin{lemma}[Self-normalized noise bound] \label{lem:self-normalized--noise}
Define
\begin{equation} \label{eq:vnsn}
    V_{n,\alpha}^{\theta} := \lambda_\alpha I + \sum_{t=1}^n \psi_\theta(x_t) \psi_\theta(x_t)', \qquad S_n^\theta := \sum_{t=1}^n \varepsilon_t\psi_\theta(x_t).
\end{equation}
Assume that $x_t$ is $\mathcal F_{t-1}$-measurable and that $\varepsilon_t$ is conditionally $\sigma$-subGaussian given $\mathcal F_{t-1}$. Then, for every fixed $n\ge1$ and every
$\delta_n\in(0,1)$, with probability at least $1-\delta_n$,
$$
\left\|S_n^\theta\right\|_{(V_{n,\alpha}^{\theta})^{-1}} \le
\sigma \sqrt{ 2\left( \mathcal I_{n,\alpha}^{\theta}(x_{1:n}) + 1 + \log\frac{1}{\delta_n} \right)
}.
$$
\end{lemma}

\begin{proof}
Let $\widetilde K_n^\theta=\lambda_\alpha^{-1}K_n^\theta$ and define
\begin{align*}
&\widetilde\psi_\theta(x):= \lambda_\alpha^{-1/2}\psi_\theta(x),
\qquad
\widetilde k^\theta(x,x') := \lambda_\alpha^{-1}k^\theta(x,x') = \alpha\sigma^{-2}k^\theta(x,x'), \\
&\widetilde S_n^\theta := \sum_{s=1}^n\varepsilon_s\widetilde\psi_\theta(x_s)
= \lambda_\alpha^{-1/2}S_n^\theta,
\qquad
\widetilde V_n^\theta := I+ \sum_{s=1}^n \widetilde\psi_\theta(x_s)\otimes\widetilde\psi_\theta(x_s) =\lambda_\alpha^{-1}V_{n,\alpha}^{\theta}.
\end{align*}
Observe that $\left\|\widetilde S_n^\theta\right\|_{(\widetilde V_n^\theta)^{-1}}
= \left\|S_n^\theta\right\|_{(V_{n,\alpha}^{\theta})^{-1}}$.

By the finite-dimensional representation of the RKHS norm in Lemma 1 of \citet{Chowdhury17}, we have 
$$
\left\|\widetilde S_n^\theta\right\|_{(\widetilde V_n^\theta)^{-1}}^2 =
\varepsilon_{1:n}^\top \widetilde K_n^\theta \left(\widetilde K_n^\theta+I_n\right)^{-1} \varepsilon_{1:n}.
$$
We further apply the self-normalized concentration inequality in Theorem 1 of \citet{Chowdhury17} to the scaled kernel $\widetilde k^\theta$: set $\eta_n:=\frac{2}{n}$, then with probability at least $1-\delta_n$, we have
$$\varepsilon_{1:n}^\top \left( \left(\widetilde K_n^\theta+\eta_n I_n\right)^{-1}+I_n \right)^{-1} \varepsilon_{1:n}
\le 2\sigma^2 \log \frac{ \sqrt{\left|(1+\eta_n)I_n+\widetilde K_n^\theta\right|}}{
\delta_n}.$$
Given that for every eigenvalue $z\ge0$ and every $\eta_n>0$, $\frac{z}{z+1} \le
\frac{z+\eta_n}{z+\eta_n+1}.$, we see that $\widetilde K_n^\theta
\left(\widetilde K_n^\theta+I_n\right)^{-1} \preceq \left( \left(\widetilde K_n^\theta+\eta_n I_n\right)^{-1}+I_n \right)^{-1}.$ It follows that
$$
\left\|S_n^\theta\right\|_{(V_{n,\alpha}^{\theta})^{-1}}^2 =
\left\|\widetilde S_n^\theta\right\|_{(\widetilde V_n^\theta)^{-1}}^2 \le
2\sigma^2 \log \frac{\sqrt{\left|(1+\eta_n)I_n+\widetilde K_n^\theta\right|}}{
\delta_n}.
$$

The result follows once we simplify the determinant term. Since $\widetilde K_n^\theta =\alpha\sigma^{-2}K_n^\theta$,
we have
\begin{align*}
\log\sqrt{\left|(1+\eta_n)I_n+\widetilde K_n^\theta\right|}
&= \frac12 \log\left|(1+\eta_n)I_n+\widetilde K_n^\theta\right| \\
&= \frac n2\log(1+\eta_n) + \frac12 \log \left|I_n+\frac{1}{1+\eta_n}\widetilde K_n^\theta \right| \\
&\le \frac n2\log\left(1+\frac{2}{n}\right) + \frac12 \log \left|I_n+\widetilde K_n^\theta\right| \\
&\le 1+ \mathcal I_{n,\alpha}^{\theta}(x_{1:n}).
\end{align*}
The last inequality follows since $\frac n2\log\left(1+\frac{2}{n}\right)\le1.$
Therefore,
$$
\left\|S_n^\theta\right\|_{(V_{n,\alpha}^{\theta})^{-1}}^2
\le 2\sigma^2 \left(
\mathcal I_{n,\alpha}^{\theta}(x_{1:n}) + 1 + \log\frac{1}{\delta_n} \right),
$$
and taking square roots gives the claim.
\end{proof}

\begin{theorem}[Posterior concentration for a fixed hyperparameter]
\label{thm:alpha-concentration-fixed}
Fix $\theta$ and assume $k^\theta(x,x)\le 1$ for all $x\in\mathcal X$.
Let $\{\mathcal F_t\}_{t\ge0}$ denote the filtration generated by the BO
history, assume that $x_t$ is $\mathcal F_{t-1}$-measurable, and suppose $y_t=f(x_t)+\varepsilon_t$, where $\varepsilon_t$ is conditionally $\sigma$-subGaussian given
$\mathcal F_{t-1}$. For every fixed $n\geq0$ and every $\eta\in(0,1)$, with probability at least $1-\eta$, simultaneously for all $x\in\mathcal X$,
$$
\left| \mu_{n,\alpha}(x;\theta)-f(x) \right| \le
\varphi_{n,\alpha}^{\theta}(\eta) \sigma_{n,\alpha}(x;\theta),$$
where
$$
\varphi_{n,\alpha}^{\theta}(\eta) :=
\|f\|_{\mathcal H_\theta(\mathcal X)} + \sqrt{2\alpha \left( \gamma_{n,\alpha}^{\theta} + 1 + \log\frac{1}{\eta} \right)}.
$$
\end{theorem}

\begin{proof}
For $n=0$, the claim follows from the RKHS Cauchy--Schwarz inequality, since $\mu_{0,\alpha}=0$ and $|f(x)|\leq\|f\|_{\mathcal H_\theta}\sqrt{k^\theta(x,x)}=\|f\|_{\mathcal H_\theta}\sigma_{0,\alpha}(x;\theta)$. Hence assume $n\geq1$. Recall the definitions of $V_{n,\alpha}^{\theta}$ and $S_n^{\theta}$ as defined in \eqref{eq:vnsn}. The tempered GP posterior mean can be viewed as the kernel ridge regression estimator $\widehat f_{n,\alpha}^{\theta} = \left(V_{n,\alpha}^{\theta}\right)^{-1} \sum_{t=1}^n y_t\psi_\theta(x_t),$ such that $\widehat f_{n,\alpha}^{\theta}(x) = \mu_{n,\alpha}(x;\theta).$

Using $y_s=f(x_s)+\varepsilon_s$, we have the error decomposition

$$
\widehat f_{n,\alpha}^{\theta}-f =  \left(V_{n,\alpha}^{\theta}\right)^{-1} \sum_{t=1}^n y_t\psi_\theta(x_t)-f= \left(V_{n,\alpha}^{\theta}\right)^{-1}S_n^\theta -
\lambda_\alpha \left(V_{n,\alpha}^{\theta}\right)^{-1}f.
$$
Hence, we have

\begin{align*}
\left|\mu_{n,\alpha}(x;\theta)-f(x) \right| &=  \left| \left\langle \psi_\theta(x),
\mu_{n,\alpha}-f \right\rangle_{\mathcal H_\theta} \right| =  \left| \left\langle \psi_\theta(x),
\widehat f_{n,\alpha}^{\theta}-f \right\rangle_{\mathcal H_\theta} \right| \\
& \le \left| \left\langle \psi_\theta(x),  \left(V_{n,\alpha}^{\theta}\right)^{-1}S_n^\theta
 \right\rangle_{\mathcal H_\theta} \right| + \lambda_{\alpha} \left| \left\langle \psi_\theta(x), \left(V_{n,\alpha}^{\theta}\right)^{-1}f
 \right\rangle_{\mathcal H_\theta} \right|    \\
&\le
\left( \|S_n^\theta\|_{(V_{n,\alpha}^{\theta})^{-1}} + \sqrt{\lambda_\alpha}\|f\|_{\mathcal H_\theta(\mathcal X)} \right) \|\psi_\theta(x)\|_{(V_{n,\alpha}^{\theta})^{-1}},
\end{align*}
where the last inequality is due to Cauchy-schwartz, and the fact that $\|f\|_{(V_{n,\alpha}^{\theta})^{-1}} \leq \frac{1}{\sqrt{\lambda_\alpha}} \|f\|_{\mathcal{H}_{\theta}}$.

Since the posterior variance can be represented as the feature-space identity
$\sigma_{n,\alpha}^2(x;\theta) = \lambda_\alpha \|\psi_\theta(x)\|_{(V_{n,\alpha}^{\theta})^{-1}}^2$, we have
$$
\left| \mu_{n,\alpha}(x;\theta)-f(x) \right| \le
\left( \|f\|_{\mathcal H_\theta(\mathcal X)} + \lambda_\alpha^{-1/2} \|S_n^\theta\|_{(V_{n,\alpha}^{\theta})^{-1}} \right) \sigma_{n,\alpha}(x;\theta).
$$
 
By Lemma~\ref{lem:self-normalized--noise}, for any fixed $\delta_n\in(0,1)$, with probability at least $1-\delta_n$, we have 
$$
\left\|S_n^\theta\right\|_{(V_{n,\alpha}^{\theta})^{-1}} \le \sigma
\sqrt{ 2\left( \mathcal I_{n,\alpha}^{\theta}(x_{1:n}) + 1 + \log\frac{1}{\delta_n} \right)}.
$$
Given that $\mathcal I_{n,\alpha}^{\theta}(x_{1:n}) \le \gamma_{n,\alpha}^{\theta}$, and $\lambda_\alpha^{-1/2}\sigma = \sqrt{\alpha}$, we have
$$
\left| \mu_{n,\alpha}(x;\theta)-f(x) \right| \le
\left[ \|f\|_{\mathcal H_\theta(\mathcal X)} + \sqrt{ 2\alpha \left( \gamma_{n,\alpha}^{\theta} + 1 + \log\frac{1}{\delta_n} \right)} \right]
\sigma_{n,\alpha}(x;\theta).
$$

\end{proof}

\subsubsection{Proof of Proposition \ref{prop:uniform-alpha-confidence}} \label{subsubsec:uniform-concentration}

\begin{proof}
Fix a finite horizon $T\ge1$ and $\delta\in(0,1)$. For each $1\le t\le T$, let $N_\Theta(a_t)$ be the covering number of $\Theta$ by balls of radius $a_t$. We will show that, with probability at least $1-\delta$, simultaneously for any $1\le t\le T$, $x\in\mathcal X$, and $\theta\in\Theta$,
$$
\left|\mu_{t-1,\alpha}(x;\theta)-f(x)\right| \le \Phi_{t,\alpha}^{\Theta}(a_t) \sigma_{t-1,\alpha}(x;\theta),
$$
where

\begin{equation} \label{eq:Phi-expression}
\begin{aligned}
  &\Phi_{t,\alpha}^{\Theta}(a_t) := \overline\varphi_{t,\alpha}^{\Theta}(a_t) + \frac{ \operatorname{err}_{t,\alpha}^{\Theta}(a_t)}{s_{t,\alpha}},\\
  & \overline\varphi_{t,\alpha}^{\Theta}(a_t) := B_\Theta + \sqrt{ 2\alpha \left( \overline\gamma_{t-1,\alpha}^{\Theta} + 1 + \log \frac{\pi^2t^2N_\Theta(a_t)}{3\delta} \right)}, \\
  & \operatorname{err}_{t,\alpha}^{\Theta}(a_t) := a_t \left( L_{t-1,\alpha}^{\mu} (\delta/2) + \overline\varphi_{t,\alpha}^{\Theta}(a_t) L_{t-1,\alpha}^{\sigma}(\delta/2) \right), \qquad s_{t,\alpha} := \sqrt{\frac{\sigma^2}{\alpha(t-1)+\sigma^2}}.
\end{aligned}
\end{equation}

For $1\le t\le T$, let $\Theta_t$ be an $a_t$-net of $\Theta$ with cardinality $|\Theta_t|=N_\Theta(a_t)$. With the choice of $\delta_{t,\vartheta} := \frac{3\delta}{\pi^2t^2N_\Theta(a_t)}$, we apply Theorem \ref{thm:alpha-concentration-fixed} with $n=t-1$ and a fixed $\vartheta\in\Theta_t$:
$$
\left| \mu_{t-1,\alpha}(x;\vartheta)-f(x) \right| \le
\left[ \|f\|_{\mathcal H_\vartheta(\mathcal X)} + \sqrt{ 2\alpha \left( \gamma_{t-1,\alpha}^{\vartheta} + 1 + \log\frac{1}{\delta_{t,\vartheta}}\right)} \right] \sigma_{t-1,\alpha}(x;\vartheta)
$$
for all $x\in\mathcal X$.

Since $\|f\|_{\mathcal H_\vartheta(\mathcal X)}\le B_\Theta$, $\gamma_{t-1,\alpha}^{\vartheta}
\le\overline\gamma_{t-1,\alpha}^{\Theta}$, we have 
$$
\left| \mu_{t-1,\alpha}(x;\vartheta)-f(x) \right| \le \overline\varphi_{t,\alpha}^{\Theta}(a_t)\sigma_{t-1,\alpha}(x;\vartheta).
$$
Taking a union bound over all $\vartheta\in\Theta_t$ and $1\le t\le T$, the total failure probability is at most
$$\sum_{t=1}^T N_\Theta(a_t) \frac{3\delta}{\pi^2t^2N_\Theta(a_t)} \le
\sum_{t=1}^\infty \frac{3\delta}{\pi^2t^2} = \frac{\delta}{2}.
$$
It follows that, with probability at least $1-\delta/2$, for any $1\le t\le T$, $\vartheta\in\Theta_t$, and $x\in\mathcal X$, we have
$$
\left|\mu_{t-1,\alpha}(x;\vartheta)-f(x) \right|
\le \overline\varphi_{t,\alpha}^{\Theta}(a_t) \sigma_{t-1,\alpha}(x;\vartheta).
$$
Intersect this event with $\mathcal E_{T,\delta/2}^{\mathrm{Lip}}$ from Assumption \ref{ass:uniform-theta}. The intersection has probability at least $1-\delta$.

Now fix $1\le t\le T$, $x\in\mathcal X$, and an arbitrary $\theta\in\Theta$. Choose $\vartheta_t(\theta)\in\Theta_t$ such that $\|\theta-\vartheta_t(\theta)\|\le a_t$
On the intersection event,
\begin{align*}
\left| \mu_{t-1,\alpha}(x;\theta)-f(x) \right| &\le
\left| \mu_{t-1,\alpha}(x;\theta) - \mu_{t-1,\alpha}(x;\vartheta_t(\theta))
\right| + \left| \mu_{t-1,\alpha}(x;\vartheta_t(\theta)) - f(x) \right| \\
&\le L_{t-1,\alpha}^{\mu}(\delta/2)a_t + \overline\varphi_{t,\alpha}^{\Theta}(a_t) \sigma_{t-1,\alpha}(x;\vartheta_t(\theta)).
\end{align*}
Note that the Lipschitz bound for the posterior standard deviation gives $\sigma_{t-1,\alpha}(x;\vartheta_t(\theta)) \le \sigma_{t-1,\alpha}(x;\theta) + L_{t-1,\alpha}^{\sigma}(\delta/2)a_t$. Hence we have 
\begin{align*}
    \left|\mu_{t-1,\alpha}(x;\theta)-f(x) \right| & \le \overline\varphi_{t,\alpha}^{\Theta}(a_t)
\sigma_{t-1,\alpha}(x;\theta) + a_t \left( L_{t-1,\alpha}^{\mu}(\delta/2) + \overline\varphi_{t,\alpha}^{\Theta}(a_t) L_{t-1,\alpha}^{\sigma}(\delta/2) \right) \\
&= \overline\varphi_{t,\alpha}^{\Theta}(a_t) \sigma_{t-1,\alpha}(x;\theta) + \operatorname{err}_{t,\alpha}^{\Theta}(a_t).
\end{align*}

By the argument in Lemma \ref{lem:mean-concentration} (equation \eqref{eq:sigma-lb}), $\sigma_{t-1,\alpha}(x;\theta) \ge s_{t,\alpha}.$ Hence 
$$
\left| \mu_{t-1,\alpha}(x;\theta)-f(x) \right| \le
\left( \overline\varphi_{t,\alpha}^{\Theta}(a_t) + \frac{\operatorname{err}_{t,\alpha}^{\Theta}(a_t)}{s_{t,\alpha}} \right) \sigma_{t-1,\alpha}(x;\theta).
$$

\end{proof}

\subsection{Auxiliary Lemmas}

\begin{lemma}
\label{lem:mean-concentration} Let $\mu_{\theta_{t},\alpha}^{+}:=\max_{x\in\mathcal{X}}\mu_{t-1,\alpha}(x;\theta_{t})$, then
\begin{equation}
\mu_{\theta,\alpha}^{+}-\mu_{t-1, \alpha}(x_{t})\leq\tau_{g}^{-1}\left(\frac{\Gamma(\frac{g+1}{2})}{2\sqrt{\pi}}\left(\frac{2\sigma^{2}}{\alpha(t-1)+\sigma^{2}}\right)^{g/2} \right)\nu\sigma_{t-1,\alpha}(x_{t};\theta_{t})
\end{equation}
\end{lemma}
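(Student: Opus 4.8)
The plan is to read the bound off the optimality of $x_t$ for the rescaled acquisition \eqref{eq:scaled-g-ac-alpha}, comparing its value against the value at the posterior-mean maximizer. Write $v_\alpha(x):=(\mu_\theta^+-\mu_{t-1,\alpha}(x))/\sigma_{t-1,\alpha}(x)$ and $z_t:=v_\alpha(x_t)/\nu_{\theta_t}$. After multiplying through by $\nu_{\theta_t}\sigma_{t-1,\alpha}(x_t)$, the claimed inequality is exactly the scalar statement $z_t\le\tau_g^{-1}(c_t)$ with $c_t:=\frac{\Gamma((g+1)/2)}{2\sqrt\pi}\big(\tfrac{2\sigma^2}{\alpha(t-1)+\sigma^2}\big)^{g/2}$. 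Since Lemma~\ref{lem:tau-g-real}(II) shows $\tau_g$ is strictly decreasing on $[0,\infty)$, the inverse $\tau_g^{-1}$ is well defined and decreasing, so $z_t\le\tau_g^{-1}(c_t)$ is equivalent to $\tau_g(z_t)\ge c_t$; moreover $\mu_\theta^+\ge\mu_{t-1,\alpha}(x_t)$ forces $z_t\ge0$ and $c_t\le\tau_g(0)$, so the inverse is applied within its range.

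First I would produce a lower bound on $\tau_g(z_t)$ via a competitor point. Take $x^+\in\arg\max_x\mu_{t-1,\alpha}(x)$, at which $v_\alpha(x^+)=0$, so the acquisition there equals $\nu_{\theta_t}^g\sigma_{t-1,\alpha}^g(x^+)\tau_g(0)$. Optimality of $x_t$ gives $\nu_{\theta_t}^g\sigma_{t-1,\alpha}^g(x_t)\tau_g(z_t)\ge\nu_{\theta_t}^g\sigma_{t-1,\alpha}^g(x^+)\tau_g(0)$, hence
\[
\tau_g(z_t)\ \ge\ \tau_g(0)\Big(\frac{\sigma_{t-1,\alpha}(x^+)}{\sigma_{t-1,\alpha}(x_t)}\Big)^{g}.
\]
By Lemma~\ref{lem:tau-g-real}(III), $\tau_g(0)=2^{g/2-1}\Gamma((g+1)/2)/\sqrt\pi$, so once the variance ratio is controlled the constant will collapse to $c_t$.

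The main work is the two-sided control of the posterior standard deviations. The denominator is bounded by $\sigma_{t-1,\alpha}(x_t)\le\sqrt{k(x_t,x_t)}=1$, since conditioning cannot increase variance under the normalization $k(x,x)\equiv1$. For the numerator I would establish the uniform floor $\sigma_{t-1,\alpha}^2(x)\ge\sigma^2/(\sigma^2+\alpha(t-1))$ for all $x$: in the weight/RKHS representation the tempered posterior precision is $I+\tfrac{\alpha}{\sigma^2}\sum_{s\le t-1}\phi(x_s)\phi(x_s)^\top$, a PSD perturbation of the identity whose top eigenvalue is at most $1+\tfrac{\alpha}{\sigma^2}\sum_s\|\phi(x_s)\|^2=1+\tfrac{\alpha}{\sigma^2}(t-1)$ because $\|\phi(x_s)\|^2=k(x_s,x_s)=1$; inverting, the posterior covariance has smallest eigenvalue at least $\sigma^2/(\sigma^2+\alpha(t-1))$, and $\sigma_{t-1,\alpha}^2(x)=\phi(x)^\top\mathrm{Cov}\,\phi(x)\ge\lambda_{\min}(\mathrm{Cov})\|\phi(x)\|^2$ yields the claim. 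Specializing at $x^+$ gives $(\sigma_{t-1,\alpha}(x^+)/\sigma_{t-1,\alpha}(x_t))^g\ge(\sigma^2/(\sigma^2+\alpha(t-1)))^{g/2}$, which combined with the value of $\tau_g(0)$ reproduces precisely $\tau_g(z_t)\ge c_t$; applying $\tau_g^{-1}$ and multiplying back by $\nu_{\theta_t}\sigma_{t-1,\alpha}(x_t)$ finishes the argument.

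I expect the variance floor to be the one genuinely delicate step, since it is the only place that goes beyond the definition of the acquisition and the monotonicity of $\tau_g$: one must pass from the trace bound on $\sum_s\phi(x_s)\phi(x_s)^\top$ to a uniform bound on its top eigenvalue (valid even for the infinite-dimensional feature maps of the SE kernel) and then transfer this to a pointwise lower bound on $\sigma_{t-1,\alpha}^2$ at the competitor $x^+$. The remaining ingredients --- the competitor comparison, the decreasing-ness of $\tau_g^{-1}$, and the algebraic identity $2^{g/2-1}\Gamma((g+1)/2)/\sqrt\pi\cdot(\sigma^2/(\sigma^2+\alpha(t-1)))^{g/2}=c_t$ --- are routine. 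It is worth noting that the stated concentration hypothesis does not appear to enter this particular step; the bound is driven entirely by acquisition-optimality and the variance estimates, with the hypothesis presumably inherited from the surrounding regret argument.
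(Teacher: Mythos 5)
Your proposal is correct and follows essentially the same route as the paper: compare the acquisition value at $x_t$ against the competitor $x^+=\arg\max_x\mu_{t-1,\alpha}(x)$ where the standardized gap vanishes, bound $\sigma_{t-1,\alpha}(x_t)\le 1$ by the prior variance, lower-bound $\sigma^2_{t-1,\alpha}(x^+)$ by $\sigma^2/(\alpha(t-1)+\sigma^2)$, and invert the strictly decreasing $\tau_g$. The only difference is in the variance floor, where you give a clean spectral/trace argument on the precision operator $I+\tfrac{\alpha}{\sigma^2}\sum_s\phi(x_s)\phi(x_s)^\top$ (valid uniformly in $x$, and for infinite-dimensional feature maps since the perturbation has finite rank), whereas the paper invokes the coincident-observations extreme case from Lemma~10 of \citet{wang2014theoreticalanalysisbayesianoptimisation}; your observation that the stated concentration hypothesis is never used in this step also matches the paper's proof.
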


\begin{proof}
We define $x_{t}^{+}=\arg\max_{x\in\mathcal{X}}\mu_{t-1, \alpha}(x;\theta_{t})$.
Since $\alpha_{\theta,g, \alpha}^{EI(f)}(x_{t}|\mathcal{D}_{t-1})\geq\alpha_{\theta,g,\alpha}^{EI(f)}(x_t^{+}|\mathcal{D}_{t-1})$,
we have 
\begin{multline*}
\nu^{g}\sigma_{t-1, \alpha}^{g}(x_{t};\theta)
\tau_{g}\!\left(\frac{\mu_{\theta,\alpha}^{+}-\mu_{t-1,\alpha}(x_{t})}
{\nu\sigma_{t-1, \alpha}(x_{t};\theta)}\right)\\
\geq\nu^{g}\sigma_{t-1, \alpha}^{g}(x_{t}^{+};\theta)\tau_g(0)
\end{multline*}
We start by finding a lower bound for the right hand side of the inequality
above. First observe that using the same step as in part (III) of Lemma \ref{lem:tau-g-real}, we have $\tau_g(0) = \frac{2^{g/2-1} \Gamma((g+1)/2)}{\sqrt{\pi}}$. By the same argument as in Lemma 10 of \citet{wang2014theoreticalanalysisbayesianoptimisation},
we have $\sigma_{t-1}^{2}(x_{t}^{+};\theta_{t})\geq\frac{\sigma^{2}}{\alpha(t-1)+\sigma^{2}}$, since 
\begin{align}
\sigma_{t-1,\alpha}^{2}(x_{t}^{+};\theta_{t})
&\geq1-\mathbf{1}^{T}(\mathbf{1}\mathbf{1}^T+\tfrac{\sigma^{2}}{\alpha}\mathbb{I})^{-1}\mathbf{1}\notag\\
&=1-\frac{\|\mathbf{1}\|_{2}^{2}}{\|\mathbf{1}\|_{2}^{2}+\sigma^{2}/\alpha}
=\frac{\sigma^{2}}{\alpha(t-1)+\sigma^{2}}. \label{eq:sigma-lb}
\end{align}

The lower bound above holds since we can bound the posterior variance at $x_t^+$ in the extreme case where all $(t-1)$ observations are collected at the same point. Since $\sigma_{t-1, \alpha}^{g}(x_{t};\theta)\leq1$ is still bounded by the prior variance $k(x,x)$, it follows 
\[
\frac{\Gamma(\frac{g+1}{2})}{2\sqrt{\pi}}\left(\frac{2\sigma^{2}}{\alpha(t-1)+\sigma^{2}}\right)^{g/2}\leq\tau_{g}\left(\frac{\mu_{\theta}^{+}-\mu_{t-1}(x_{t})}{\nu\sigma_{t-1}(x_{t};\theta)}\right).
\]
Since $\tau_{g}(.)$ is a decreasing function by Lemma \ref{lem:tau-g-real} and its inverse is well-defined,
we must have 
\[
\mu_{\theta}^{+}-\mu_{t-1,\alpha}(x_{t})\leq\tau_{g}^{-1}\left( \frac{\Gamma(\frac{g+1}{2})}{2\sqrt{\pi}}\left(\frac{2\sigma^{2}}{\alpha(t-1)+\sigma^{2}}\right)^{g/2}\right)\nu\sigma_{t-1}(x_{t};\theta_{t}).
\]
\end{proof}

\subsection{Proof of Theorem \ref{thm:general-g-alpha}} \label{subsec:positive-g-proof}

\begin{proof}
Define $\mu_{\theta_{t},\alpha}^{+}:=\max_{x\in\mathcal{X}}\mu_{t-1,\alpha}(x;\theta_{t})$,
and $x_{t}:=\arg\max_{x\in\mathcal{X}}\alpha_{\theta,g,\alpha}^{EI(f)}(x|\mathcal{D}_{t-1})$.
We further define the random variable 
\[
I_{t,g,\alpha}^{\theta}(x)=\left(\left[f(x)-\mu_{\theta_t,\alpha}^{+}\right]_+\right )^{g}.
\]
For readability, write $\Phi_t:=\Phi_{t,\alpha}^{\Theta}(a_t)$, 
$\Phi_T^* := \Phi_{T,\alpha}^{\Theta,\mathrm{max}} = \max_{1\le t\le T}\Phi_{t,\alpha}^{\Theta}(a_s)$, and $$q_{t,\alpha,g} := \tau_g^{-1} \left[ C_{(g)} \left(\frac{\sigma^2}{\alpha(t-1)+\sigma^2} \right)^{g/2}\right].$$
We have the instantaneous regret 
\begin{align*}
r_{t} & =f(x^{*})-f(x_t)=[f(x^{*})-\mu_{\theta_{t},\alpha}^{+}]-[f(x_{t})-\mu_{\theta_{t},\alpha}^{+}]\\
 & \leq I_{t,1,\alpha}^{\theta}(x^{*})-[f(x_{t})-\mu_{\theta_{t},\alpha}^{+}].
\end{align*}
We will work on the high-probability event from Proposition \ref{prop:uniform-alpha-confidence}. On this event, simultaneously for all $1\le t\le T$ and all $x\in\mathcal X$,
$$\left| \mu_{t-1,\alpha}(x;\theta_t)-f(x) \right| \le \Phi_t \sigma_{t-1,\alpha}(x;\theta_t).$$
It follows that
\begin{equation} \label{eq:regret-after-uniform-concentration}
r_t \le I_{t,1,\alpha}^{\theta_t}(x^*) + \Phi_t \sigma_{t-1,\alpha}(x_t;\theta_t)
+ \mu_{\theta_t,\alpha}^{+} - \mu_{t-1,\alpha}(x_t;\theta_t).
\end{equation}

To upper bound $I_{t,1,\alpha}^{\theta_t}(x^*)$, we define $z_t(x):=\frac{\mu_{\theta_t,\alpha}^{+}-\mu_{t-1,\alpha}(x;\theta_t)}{\nu_{\theta_t}\sigma_{t-1,\alpha}(x;\theta_t)} \geq 0.$  By Assumption~\ref{ass:nu-scaling},
$\frac{\Phi_t}{\nu_{\theta_t}} \le \frac1{c_\nu}$.  For fixed $g>0$, define $D_{g,\nu}:=\sup_{0\le z\le 1/c_\nu} \frac{1/c_\nu-z}{\tau_g(z)^{1/g}} <\infty$. On the confidence event, we have
\begin{align*}
I_{t,1,\alpha}^{\theta_t}(x^*) &\le [\mu_{t-1,\alpha}(x^*;\theta_t)+\Phi_t\sigma_{t-1,\alpha}(x^*;\theta_t) -\mu_{\theta_t,\alpha}^{+}]_+ \\
&=\nu_{\theta_t} \sigma_{t-1,\alpha}(x^*;\theta_t) \left[\frac{\Phi_t}{\nu_{\theta_t}}-z_t(x^*)\right]_+ \\
&\le \nu_{\theta_t} \sigma_{t-1,\alpha}(x^*;\theta_t) \left[\frac1{c_\nu}-z_t(x^*)\right]_+ \\
&\le D_{g,\nu}\nu_{\theta_t} \sigma_{t-1,\alpha}(x^*;\theta_t) \tau_g(z_t(x^*))^{1/g} \\
&= D_{g,\nu} \left(\alpha_{\theta_t,g,\alpha}^{EI(f)}(x^*\mid\mathcal D_{t-1})\right)^{1/g} \\
& \le D_{g,\nu} \left( \alpha_{\theta_t,g,\alpha}^{EI(f)}(x_t\mid\mathcal D_{t-1})\right)^{1/g} =
D_{g,\nu}\nu_{\theta_t}\sigma_{t-1,\alpha}(x_t;\theta_t) \tau_g(z_t(x_t))^{1/g}.
\end{align*}

Define constant $C_{(g)}:= \frac{2^{g/2} \Gamma((g+1)/2)}{2\sqrt{\pi}}$. By part (III) of Lemma \ref{lem:tau-g-real}, we have
$$I_{t,1,\alpha}^{\theta_t}(x^*) \le D_{g,\nu}C_{(g)}^{1/g} \nu_{\theta_t} \sigma_{t-1,\alpha}(x_t;\theta_t).$$

To bound $(\mu_{\theta_{t},\alpha}^{+}-\mu_{t-1,\alpha}(x_t;\theta_{t}))$,
we apply lemma \ref{lem:mean-concentration},
which yields
\begin{equation} \label{eq:mean-gap-bound-fixed}
\mu_{\theta_t,\alpha}^{+} - \mu_{t-1,\alpha}(x_t;\theta_t) \le
q_{t,\alpha,g} \nu_{\theta_t} \sigma_{t-1,\alpha}(x_t;\theta_t).
\end{equation}
By combining the inequalities above, we can upper bound the simple regret as follows:
\begin{align*} 
r_t &\le \sigma_{t-1,\alpha}(x_t;\theta_t) \Big\{ D_{g,\nu} C_{(g)}^{1/g}\nu_{\theta_t}
+ \Phi_t + q_{t,\alpha,g}\nu_{\theta_t} \Big\}.
\end{align*}
By the argument in Part (II) of Lemma \ref{lem:tau-g-real} and the inverse derivative formula, we may compute the derivative of $\tau_g^{-1}(z)$ as $\frac{1}{\tau_g'(z)} = -\frac{1}{g \tau_{g-1}(z)} \leq 0$. Hence $\tau_g^{(-1)}(.)$ is monotonically decreasing from $(0,c_{0})$, where $c_{0}=\tau_{g}(0) = C_{(g)}$. It follows from the definition of $q_{t,a,g}$ that $q_{t,a,g} \leq q_{T,a,g}$ for $t\leq T$. Furthermore, by the assumed scaling of $\nu_{\theta_t}$, $\nu_{\theta_t} \le C_\nu \Phi_t \le C_\nu \Phi_T^*$. Hence we have
\begin{equation*}
r_t \le \sigma_{t-1,\alpha}(x_t;\theta_t) \Big[ 1 + C_\nu D_{g,\nu} C_{(g)}^{1/g} + C_\nu q_{T,\alpha,g} \Big] \Phi_T^*.
\end{equation*}

Squaring and summing of the instantaneous regret yields:
$$\sum_{t=1}^T r_t^2  \le \Big[ 1 + C_\nu D_{g,\nu} C_{(g)}^{1/g} + C_\nu q_{T,\alpha,g} \Big]^2
(\Phi_T^*)^2 \sum_{t=1}^T \sigma_{t-1,\alpha}^2(x_t;\theta_t).$$
Since $\Theta$ is compact, there exits coordinate wise $(\theta_j^L, \theta_j^U)$ such that $0 < \theta_j^L < \theta_t <\theta_j^U < \infty$. By Lemma 8 of \citet{wang2014theoreticalanalysisbayesianoptimisation},for every $\theta^L\leq \theta_t$,
$$\mathcal H_{\theta}(\mathcal X) \subseteq\mathcal H_{\theta^L}(\mathcal X),
\qquad \|h\|_{\mathcal H_{\theta^L}}^2
\leq c(\theta) \|h\|_{\mathcal H_{\theta}}^2,
\quad c(\theta):=\prod_{j=1}^q\frac{\theta_j}{\theta_j^L}.$$
It follows that $\sigma_{t,\alpha}^2(x;\theta) \leq c(\theta) \sigma_{t,\alpha}^2(x;\theta^L)$. Hence, uniformly over $\Theta$, we have
$$\sigma_{t,\alpha}^2(x_t;\theta_t) \leq C_{\mathrm{var}} \sigma_{t,\alpha}^2(x_t;\theta^L),
\qquad C_{\mathrm{var}} := \prod_{j=1}^q\frac{\theta_j^U}{\theta_j^L}.$$

Since prior variance is normalized to $1$, we have $0\leq \sigma_{t-1,\alpha}^2(x_t; \theta^L)\leq1$. By concavity of $x \mapsto\log(1+\alpha \sigma^{-2} x)$, we also have $$\log(1+\alpha \sigma^{-2} \sigma_{t-1,\alpha}^2(x_t; \theta^L)) \geq \sigma_{t-1,\alpha}^2(x_t; \theta^L)\log(1+\alpha \sigma^{-2}),$$
which implies $$\sigma_{t-1,\alpha}^2(x_t; \theta^L) \leq \frac{\log(1+\alpha\sigma^{-2}\sigma_{t-1,\alpha}^2(x_t; \theta^L))}{\log(1+\alpha\sigma^{-2})}.$$ 
Given that $\frac12
\sum_{t=1}^T\log\!\left(1+\alpha\sigma^{-2}\sigma_{t-1,\alpha}^2(x_t;\theta^0)\right) \leq
\overline\gamma_{T,\alpha}^{\Theta}$, we can control the accumulated posterior variances as
$$\sum_{t=1}^T \sigma_{t-1,\alpha}^2(x_t;\theta_t) \leq \frac{2C_{\mathrm{var}}}{
\log(1+\alpha\sigma^{-2})}\overline\gamma_{T,\alpha}^{\Theta}.$$
It follows that
$$\sum_{t=1}^T r_t^2  \leq \Big[ 1 + C_\nu D_{g,\nu} C_{(g)}^{1/g} + C_\nu q_{T,\alpha,g} \Big]^2
(\Phi_T^*)^2 \left(\frac{2 C_{\mathrm{var}}}{\log(1+\alpha\sigma^{-2})}\overline\gamma_{T,\alpha}^{\Theta} \right).$$
Finally, by $R_{T}^{2}\leq T\sum_{t}r_{t}^{2}$, we have
$$
R_T \le \sqrt{ \frac{ 2C_{\mathrm{var}}T\overline\gamma_{T,\alpha}^{\Theta}}{\log(1+ \alpha \sigma^{-2})}}\,\Big[1+ C_\nu D_{g,\nu} C_{(g)}^{1/g} + C_\nu q_{T,\alpha,g} \Big] \Phi_T^*.
$$

\end{proof}

\subsection{Proof of Corollary \ref{cor:general-g-alpha-order}}

\begin{proof}
    By Theorem \ref{thm:general-g-alpha}, it is sufficient to bound the order of $\Phi_{T,\alpha}^{\Theta,\max}$. First, we recall equation (\ref{eq:Phi-expression}) in the proof of Proposition \ref{prop:uniform-alpha-confidence}, where we have:
    \begin{equation*} 
    \begin{aligned}
      &\Phi_{t,\alpha}^{\Theta}(a_t) := \overline\varphi_{t,\alpha}^{\Theta}(a_t) + \frac{ \operatorname{err}_{t,\alpha}^{\Theta}(a_t)}{s_{t,\alpha}},\\
      & \overline\varphi_{t,\alpha}^{\Theta}(a_t) := B_\Theta + \sqrt{ 2\alpha \left( \overline\gamma_{t-1,\alpha}^{\Theta} + 1 + \log \frac{\pi^2t^2N_\Theta(a_t)}{3\delta} \right)}, \\
      & \operatorname{err}_{t,\alpha}^{\Theta}(a_t) := a_t \left( L_{t-1,\alpha}^{\mu} (\delta/2) + \overline\varphi_{t,\alpha}^{\Theta}(a_t) L_{t-1,\alpha}^{\sigma}(\delta/2) \right), \qquad s_{t,\alpha} := \sqrt{\frac{\sigma^2}{\alpha(t-1)+\sigma^2}}.
    \end{aligned}
    \end{equation*}

We first show that by properly choosing the radius $a_t$, the second term $\frac{ \operatorname{err}_{t,\alpha}^{\Theta}(a_t)}{s_{t,\alpha}}$ can be absorbed into the first. Define $\underline\varphi_{t,\alpha}^{\Theta} := B_\Theta+ \sqrt{2\alpha\left(1+\log\frac{\pi^2t^2}{3\delta}
\right)}.$ Let $a_0>0$ be small enough that the standard compact-set covering bound below holds, and choose
\[
a_t:=\min\!\left\{a_0,\frac{s_{t,\alpha}}{2\left[L_{t-1,\alpha}^{\sigma}(\delta/2)+ L_{t-1,\alpha}^{\mu}(\delta/2)/ \underline\varphi_{t,\alpha}^{\Theta}\right]}\right\}.
\]
Then
$$\frac{\operatorname{err}_{t,\alpha}^{\Theta}(a_t)}{s_{t,\alpha}\overline\varphi_{t,\alpha}^{\Theta}(a_t)
} \le \frac{a_t}{s_{t,\alpha}}\left(L_{t-1,\alpha}^{\sigma}(\delta/2) + \frac{L_{t-1,\alpha}^{\mu}(\delta/2)}{\underline\varphi_{t,\alpha}^{\Theta}}\right)\le \frac12. $$
It follows that $\Phi_{t,\alpha}^{\Theta}(a_t) \le \frac32\overline\varphi_{t,\alpha}^{\Theta}(a_t).$

Since $\Theta\subset\mathbb R^p$ is compact, there exists a constant $C_\Theta<\infty$ such that $N_\Theta(a)\le C_\Theta a^{-p}$ for every $0<a\le a_0$. By the additional polynomial-growth condition in Corollary~\ref{cor:general-g-alpha-order} and the bound $s_{t,\alpha} \geq \frac{\sigma}{\sqrt{T+\sigma^2}}$, the displayed choice satisfies, for constants $c>0$ and $p'>0$ independent of $t$,
$$ a_t\ge cT^{-p'}, \qquad 1\le t\le T.
$$
It follows that $\log N_\Theta(a_t)= \mathcal O(p\log T)$ uniformly for all $1\leq t \leq T$. Since $\overline\gamma_{t-1,\alpha}^{\Theta}\le \overline\gamma_{T,\alpha}^{\Theta}$ and $\log t \leq \log T$, the result follows since
$$\overline\varphi_{t,\alpha}^{\Theta}(a_t) = \mathcal O\!\left( B_\Theta+ \sqrt{ \alpha\left( \overline\gamma_{T,\alpha}^{\Theta} + p\log T + \log(1/\delta)\right)}\right).$$
    
\end{proof}

\subsection{Proof of Theorem \ref{thm:pi-alpha}}

We follow the same notation as in the proof of Theorem \ref{thm:general-g-alpha} in Subsection  \ref{subsec:positive-g-proof}, as the proof structure for the $g=0$ case is very similar to the general $g$ case. 

First observe that in probability of improvement, we must have
\begin{equation} \label{eq:pi-equality}
    \mu_{\theta_t,\alpha}^{+}-\mu_{t-1,\alpha}(x_t;\theta_t)=0.
\end{equation}
To see this, define $z_t(x) := \frac{\mu_{\theta_t,\alpha}^{+}-\mu_{t-1,\alpha}(x;\theta_t)}{\nu_{\theta_t}\sigma_{t-1,\alpha}(x;\theta_t)}.$ Since
$\mu_{\theta_t,\alpha}^{+}=\max_{x\in\mathcal X}\mu_{t-1,\alpha}(x;\theta_t)$, we have $z_t(x)\ge0$ for all $x\in\mathcal X$. For $g=0$,
$$\alpha_{\theta,0,\alpha}^{EI(f)}(x\mid\mathcal D_{t-1})=\tau_0(z_t(x))=\Phi(-z_t(x)).$$
Therefore,
$\alpha_{\theta,0,\alpha}^{EI(f)}(x\mid\mathcal D_{t-1})\le \tau_0(0) = \frac12.$
Let $x_t^+\in \arg\max_{x\in\mathcal X}\mu_{t-1,\alpha}(x;\theta_t)$. Then $z_t(x_t^+)=0$, and hence $\alpha_{\theta,0,\alpha}^{EI(f)}(x_t^+\mid\mathcal D_{t-1})= \frac12.$ By acquisition optimality, we must also have
$\alpha_{\theta,0,\alpha}^{EI(f)}(x_t\mid\mathcal D_{t-1})=\frac12.$ Since $\tau_0$ is strictly decreasing, this proves the claim in \eqref{eq:pi-equality}.

Working on the high-probability event from Proposition~\ref{prop:uniform-alpha-confidence}, and by the same argument of as in the proof of Theorem \ref{thm:general-g-alpha} (see \eqref{eq:regret-after-uniform-concentration}), we have 
$$r_t \le I_{t,1,\alpha}^{\theta_t}(x^*) + \Phi_t \sigma_{t-1,\alpha}(x_t;\theta_t)
+ \mu_{\theta_t,\alpha}^{+} - \mu_{t-1,\alpha}(x_t;\theta_t) = I_{t,1,\alpha}^{\theta_t}(x^*) + \Phi_t \sigma_{t-1,\alpha}(x_t;\theta_t) .$$
By the confidence event,
$$I_{t,1,\alpha}^{\theta_t}(x^*) \le [\mu_{t-1,\alpha}(x^*;\theta_t)+\Phi_t\sigma_{t-1,\alpha}(x^*;\theta_t)
-\mu_{\theta_t,\alpha}^{+}]_+ \le \Phi_t\sigma_{t-1,\alpha}(x^*;\theta_t).$$
It follows that
$$r_t \le \Phi_t \left\{ \sigma_{t-1,\alpha}(x^*;\theta_t) + \sigma_{t-1,\alpha}(x_t;\theta_t) \right\}.$$
Since the kernels are normalized, $\sigma_{t-1,\alpha}(x^*;\theta_t)\le 1.$ By the same argument as in Lemma \ref{lem:mean-concentration} (equation \eqref{eq:sigma-lb}), we have the global posterior variance lower bound $\sigma_{t-1,\alpha}(x_t;\theta_t) \ge s_{t,\alpha} = \sqrt{\frac{\sigma^2}{\alpha(t-1)+\sigma^2}}$. Hence we have $\sigma_{t-1,\alpha}(x^*;\theta_t) \le s_{t,\alpha}^{-1} \sigma_{t-1,\alpha}(x_t;\theta_t)$, which implies
$$r_t \le \Phi_t \left( 1+s_{t,\alpha}^{-1}\right) \sigma_{t-1,\alpha}(x_t;\theta_t).$$
Furthermore since $s_{t,\alpha}^{-1}$ is increasing in $t$ and
$\Phi_t\le\Phi_T^*$, we have
\[
r_t \le \Phi_T^* \left( 1+s_{T,\alpha}^{-1}\right)
\sigma_{t-1,\alpha}(x_t;\theta_t).
\]
Squaring and summing the simple regret yields
\begin{align*}
\sum_{t=1}^T r_t^2
&\le (\Phi_T^*)^2 \left(1+s_{T,\alpha}^{-1}\right)^2
\sum_{t=1}^T \sigma_{t-1,\alpha}^2(x_t;\theta_t)\\
&\le (\Phi_T^*)^2\left(1+s_{T,\alpha}^{-1}\right)^2
\left(\frac{2C_{\mathrm{var}}}{\log(1+\alpha\sigma^{-2})}
\overline\gamma_{T,\alpha}^{\Theta}\right),
\end{align*}
where the last inequality is the same information-gain summation step used in Theorem~\ref{thm:general-g-alpha}. The result follows by Cauchy-schwartz, which yields that
$$R_T \le \sqrt{ \frac{2C_{\mathrm{var}}T\overline\gamma_{T,\alpha}^{\Theta}}{
\log(1+\alpha\sigma^{-2})}}\,\Phi_T^*\left(1+s_{T,\alpha}^{-1}\right).$$
 
\subsection{Regret Monotonicity in \texorpdfstring{$g$}{g}} \label{subsec:g-monotonicity}

Recall that the dependence on $g$ of our regret bound in Theorem \ref{thm:general-g-alpha} is emerged through
$$\kappa_{T,\alpha,g} := 1+D_{g,\nu}C_{\nu}C_{(g)}^{1/g} +C_{\nu}\tau_g^{-1}\!\left[ C_{(g)} \left( \frac{\sigma^2}{\alpha(T-1)+\sigma^2}\right)^{g/2} \right],
\qquad
C_{(g)}:=\frac{2^{g/2}\Gamma((g+1)/2)}{2\sqrt{\pi}}.
$$
The following proposition makes such dependence more transparent:

\begin{proposition}[Dependence on the improvement order] \label{prop:g-monotonicity}
Fix $c_\nu>0$, $\alpha\in(0,1]$, and $T>1$. Recall that $C_{(g)}=\tau_g(0) = \frac{2^{g/2}\Gamma((g+1)/2)}{2\sqrt{\pi}}$, and $$D_{g,\nu} = \sup_{0\leq z\leq1/c_\nu} \frac{1/c_\nu-z}{\tau_g(z)^{1/g}}.$$ Then
$D_{g,\nu}C_{(g)}^{1/g}$ is nonincreasing in $g>0$ and diverges as $g\downarrow0$. In contrast, $$\tau_g^{-1}\!\left[ C_{(g)} \left( \frac{\sigma^2} {\alpha(T-1)+\sigma^2} \right)^{g/2}\right]$$
is nondecreasing in $g>0$ and diverges as $g\to\infty$.
\end{proposition}
\begin{proof}
    For $g>0$, let $U_g$ be a random variable with density $p_g(u)=\frac{u^g\phi(u)}{C_{(g)}}$, $u>0$. Then $U_g^2\sim\chi^2_{g+1}$ and, for every $z\geq0$,
$$\left\{\frac{\tau_g(z)}{C_{(g)}}\right\}^{1/g}=\left\|\left(1-\frac z{U_g}\right)_+\right\|_{L^g}.$$ For any $0<g_1<g_2$, we have $U_{g_2} \geq U_{g_1}$
\begin{align*}
\left\{\frac{\tau_{g_2}(z)}{C_{(g_2)}}\right\}^{1/g_2}
&= \left\|\left(1-\frac z{U_{g_2}}\right)_+\right\|_{L^{g_2}}\\
&\ge \left\| \left(1-\frac z{U_{g_1}}\right)_+ \right\|_{L^{g_2}}
\geq \left\|\left(1-\frac z{U_{g_1}}\right)_+ \right\|_{L^{g_1}}\\
&= \left\{ \frac{\tau_{g_1}(z)}{C_{(g_1)}} \right\}^{1/g_1}.
\end{align*}
Hence we have $\left\{ \frac{\tau_g(z)}{C_{(g)}} \right\}^{1/g}$ is nondecreasing in $g$ for every fixed $z\geq0$. To prove divergence of $D_{g,\nu}C_{(g)}^{1/g}$ as $g\downarrow0$, fix any $z_0\in(0,1/c_\nu)$, and we have $$\frac{\tau_g(z_0)}{C_{(g)}} \rightarrow \frac{\Phi(-z_0)}{1/2} = 2\Phi(-z_0)<1.$$
It follows that $D_{g,\nu}C_{(g)}^{1/g} \geq \frac{1/c_\nu-z_0} {\{\tau_g(z_0)/C_{(g)}\}^{1/g}} \longrightarrow\infty.$

To show divergence of the second claim: Use the integral form $\tau_g(z)=\int_{0}^{\infty} t^{g}\phi(z+t)\,dt=\phi(z)\int_{0}^{\infty} t^{g}e^{-zt-\tfrac12 t^{2}}dt$ and a Laplace approximation as $z\to\infty$, we obtain $\tau_g(z)=\dfrac{\Gamma(g+1)}{\sqrt{2\pi}}\dfrac{e^{-z^{2}/2}}{z^{\,g+1}}\bigl(1+O(z^{-2})\bigr)$.  Let $y_T=C_{(g)} r^{g/2}$ and $r=\dfrac{\sigma^{2}}{\alpha(T-1)+\sigma^{2}}$, and set $-\log y_T=-\log C_{(g)}-\tfrac{g}{2}\log r = -\log C_{(g)}+\tfrac{g}{2}\log\!\bigl(\tfrac{\alpha(T-1)+\sigma^{2}}{\sigma^{2}}\bigr)$.  Inverting the asymptotic yields \[
\tau_g^{-1}\!\bigl(C_{(g)} r^{g/2}\bigr)=\Theta\!\Big(\sqrt{\,g\log\!\bigl(\tfrac{\alpha(T-1)+\sigma^{2}}{\sigma^{2}}\bigr)}\Big)=\Theta\!\big(\sqrt{g\log T}\big).
\]
\end{proof}

\section{Proof of Proposition \ref{prop:calibration2}}
\begin{proof}
Define the residual
\[
r_s:=y_s-\mu_{s-1,1}(x_s)=\varepsilon_s-e_s.
\]
Then
\[
\mathrm{MSE}_t
=\frac1t\sum_{s=1}^t\varepsilon_s^2
 +\frac1t\sum_{s=1}^t e_s^2
 -\frac2t\sum_{s=1}^t\varepsilon_s e_s.
\]
Let $d_s:=\varepsilon_s^2-\sigma^2$. The sequence $\{d_s\}$ is a martingale-difference sequence and, by the conditional fourth-moment assumption,
\[
\mathbb E[d_s^2\mid\mathcal F_{s-1}]
\leq 2M_4+2\sigma^4.
\]
Martingale orthogonality therefore gives
\[
\mathbb E\!\left[\left(\frac1t\sum_{s=1}^t d_s\right)^2\right]
\leq \frac{2M_4+2\sigma^4}{t}\longrightarrow0,
\]
so $t^{-1}\sum_{s=1}^t\varepsilon_s^2\xrightarrow{p}\sigma^2$.
Likewise, $e_s$ is $\mathcal F_{s-1}$-measurable and bounded by $E$, so $\{\varepsilon_s e_s\}$ is a martingale-difference sequence with conditional second moment at most $E^2\sigma^2$. Hence
\[
\frac1t\sum_{s=1}^t\varepsilon_s e_s\xrightarrow{p}0.
\]
Consequently,
\[
\mathrm{MSE}_t
=\sigma^2+\frac1t\sum_{s=1}^t e_s^2+o_p(1).
\]
In case (i), $\mathrm{MSE}_t\xrightarrow{p}\sigma^2$; in case (ii), $\mathrm{MSE}_t\xrightarrow{p}\sigma^2+b^2$. Since $\widehat\sigma_t^2\xrightarrow{p}\sigma^2$ and $\mathrm{PV}_t\xrightarrow{p}\mathrm{PV}_\infty$, the claimed limits follow from the continuous-mapping theorem applied to
\[
\widehat\alpha_t
=\min\!\left\{\sqrt{\frac{\mathrm{PV}_t+\widehat\sigma_t^2}
{\mathrm{PV}_t+\mathrm{MSE}_t}},1\right\}.
\]
In case (ii), the limiting ratio is strictly below one, so the truncation is asymptotically inactive.
\end{proof}

\section{Additional Simulation Results} \label{subsec:appendix-simulation}

\subsection{Lower Noise Setting} \label{subsec:low-noise}

We repeat the same experiment as illustrated in Section \ref{subsec:simulation} but with lower $\sigma=0.01$ Gaussian noise settings. All configurations of the $61$ functions, choices of $g\in \{0, 1, 2\}$, and choices of $\alpha \in \{\text{tempered}, 1\}$ remain the same. The exact functional forms can be found in \citep{simulationlib}. Given this is the very low noise setting, we focus on best observed rather than regret as focused in the main text.

The main empirical finding in this lower noise setting can be found in Table \ref{tab:alpha_by_g}. In this lower noise setting, tempering the posterior can substantially improve Bayesian optimization performance in less exploratory settings, particularly when $g$ is small and the acquisition function tends to overexploit. 

For $g=0$, our adaptive tempering approach yields a substantial and statistically significant improvement ($p=0.041$): the tempered posterior wins $38$ out of $61$ cases, achieves a lower average rank, and exhibits much smaller raw and absolute margins. This behavior not only echos our discussion of Theorem \ref{thm:general-g-alpha}, but also aligns with our theoretical intuition: as probability of improvement is known to be overly exploitative \citep{garnett_bayesoptbook_2023}, tempering the posterior effectively encourages additional principled exploration and tends to yields better optimization results.

Tempering also offers a modest advantage for the $g=1$ expected improvement setting, leading to higher win rates and better average ranks. Although the effect is only marginally significant in our experimentation ($p=0.097$), both the raw and normalized margin metrics are more stable under the tempered posterior. This indicates that in instances where the regular posterior outperforms tempering, its advantage tends to be relatively small compared to the cases where tempering wins.

Notably, tempering the posterior significantly deteriorates the situation where $g=2$ based on Panel C of Table \ref{tab:alpha_by_g}. A plausible explanation is that $g=2$ already promotes aggressive exploration, and tempering further inflates posterior uncertainty, inducing excessive exploration and thus degrading finite-sample performance. The overly exploration behavior can also lead to larger MSE, which induces stronger tempering effect of $\alpha$ based on the estimator in equation (\ref{eq:hw-alpha-es}). 

\begin{table}[t]
\centering
\small
\setlength{\tabcolsep}{4pt}
\renewcommand{\arraystretch}{1.1}
\caption{Paired comparison of $\alpha\in\{1,\mathrm{hw}\}$ across 61 functions (seed-averaged) for each $g$.
Lower \emph{Avg. rank} is better; \emph{Avg. margin to best} closer to $0$ (less negative) is better.
$p$-values from Wilcoxon paired signed-rank (one-sided, $H_1$: the median of the paired difference is larger than $0$ (tempereing is better)).}
\label{tab:alpha_by_g}
\begin{threeparttable}
\begin{tabular}{l
  S[table-format=2.0, round-mode=places, round-precision=0] 
  r                                                         
  S[table-format=1.3]                                       
  S[table-format=-3.3]                                      
  S[table-format=-1.3]                                      
}
\toprule
\multicolumn{1}{c}{$\alpha$} &
\multicolumn{1}{c}{Wins} &
\multicolumn{1}{c}{Strict win rate} &
\multicolumn{1}{c}{Avg. rank $\downarrow$} &
\multicolumn{1}{c}{Avg. margin to best $\uparrow$} &
\multicolumn{1}{c}{Avg. norm. margin $\uparrow$} \\
\midrule
\multicolumn{6}{l}{\textsc{Panel A:} $g=0$ \hfill  $p=0.041$} \\
\cmidrule(lr){1-6}
1.0      & 23 & 37.7\% & 1.631 & -30.864 & -0.6333 \\
Tempered & 38 & 62.3\% & 1.369 &  -2.946 & -0.3667 \\
\addlinespace[4pt]
\multicolumn{6}{l}{\textsc{Panel B:} $g=1$ \hfill  $p=0.097$} \\
\cmidrule(lr){1-6}
1.0        & 28 & 45.9\% & 1.55 & -24.25 & -0.541 \\
Tempered     & 33 & 54.1\% & 1.46 & -11.18 & -0.443 \\
\addlinespace[4pt]
\multicolumn{6}{l}{\textsc{Panel C:} $g=2$ \hfill  $p=0.932$} \\
\cmidrule(lr){1-6}
1.0        & 39 & 63.9\% & 1.37 & -6.33 & -0.361 \\
Tempered       & 22 & 36.1\% & 1.63 & -12.12 & -0.623 \\
\bottomrule
\end{tabular}
\begin{tablenotes}[flushleft]
\footnotesize
\item Notes: Avg margin to best is $\frac{1}{|F|}\sum_{f \in F} \big(y_{f,a}-\max_{a'} y_{f,a'}\big)$ (thus $\le 0$ and larger is better).
Avg normalized margin (range) uses range normalization: $\frac{1}{|F|}\sum_f \frac{y_{f,a}-\max_{a'} y_{f,a'}}{\max_{a'} y_{f,a'} - \min_{a'} y_{f,a'} + \varepsilon}\in[-1,0]$.
\end{tablenotes}
\end{threeparttable}
\end{table}

\subsection{Detailed Table for high Noise Setting} \label{subsec:dth}

\begin{table*}[t]
\centering
\small
\setlength{\tabcolsep}{6pt}
\renewcommand{\arraystretch}{1.08}
\caption{Paired comparison of tempering schedule against the untempered baseline $\alpha=1$}
\label{tab:alpha_noise_sigma2-large}
\begin{threeparttable}
\begin{tabular}{
c
l
S[table-format=2.0, round-mode=places, round-precision=0]
r
S[table-format=1.3]
S[table-format=-1.3]
}
\toprule
\multicolumn{1}{c}{$g$} &
\multicolumn{1}{c}{$\alpha$} &
\multicolumn{1}{c}{Strict Wins} &
\multicolumn{1}{c}{Strict win rate} &
\multicolumn{1}{c}{Avg. rank $\downarrow$} &
\multicolumn{1}{c}{Avg. norm. margin $\uparrow$} \\
\midrule

\multicolumn{6}{l}{\textsc{Panel A: Gaussian noise}} \\
\cmidrule(lr){1-6}
\multirow{2}{*}{$0$}
  & $1.0$      & 25 & 41.0\% & 1.549 & -0.508 \\
  & Tempered   & 31 & 50.8\% & 1.451 & -0.410 \\
\addlinespace[2pt]
\multirow{2}{*}{$1$}
  & $1.0$      & 17 & 27.9\% & 1.680 & -0.639 \\
  & Tempered   & 39 & 63.9\% & 1.320 & -0.279 \\
\addlinespace[2pt]
\multirow{2}{*}{$2$}
  & $1.0$      & 28 & 45.9\% & 1.500 & -0.459 \\
  & Tempered   & 28 & 45.9\% & 1.500 & -0.459 \\

\addlinespace[5pt]
\multicolumn{6}{l}{\textsc{Panel B: Gaussian-mixture noise}} \\
\cmidrule(lr){1-6}
\multirow{2}{*}{$0$}
  & $1.0$      & 25 & 41.0\% & 1.582 & -0.574 \\
  & Tempered   & 35 & 57.4\% & 1.418 & -0.410 \\
\addlinespace[2pt]
\multirow{2}{*}{$1$}
  & $1.0$      & 22 & 36.1\% & 1.623 & -0.607 \\
  & Tempered   & 37 & 60.7\% & 1.377 & -0.361 \\
\addlinespace[2pt]
\multirow{2}{*}{$2$}
  & $1.0$      & 24 & 39.3\% & 1.582 & -0.557 \\
  & Tempered   & 34 & 55.7\% & 1.418 & -0.393 \\

\addlinespace[5pt]
\multicolumn{6}{l}{\textsc{Panel C: Heteroscedastic noise}} \\
\cmidrule(lr){1-6}
\multirow{2}{*}{$0$}
  & $1.0$      & 33 & 54.1\% & 1.434 & -0.410 \\
  & Tempered   & 25 & 41.0\% & 1.566 & -0.541 \\
\addlinespace[2pt]
\multirow{2}{*}{$1$}
  & $1.0$      & 18 & 29.5\% & 1.648 & -0.590 \\
  & Tempered   & 36 & 59.0\% & 1.352 & -0.295 \\
\addlinespace[2pt]
\multirow{2}{*}{$2$}
  & $1.0$      & 26 & 42.6\% & 1.541 & -0.508 \\
  & Tempered   & 31 & 50.8\% & 1.459 & -0.426 \\

\addlinespace[5pt]
\multicolumn{6}{l}{\textsc{Panel D: Student-$t$ noise}} \\
\cmidrule(lr){1-6}
\multirow{2}{*}{$0$}
  & $1.0$      & 26 & 42.6\% & 1.541 & -0.508 \\
  & Tempered   & 31 & 50.8\% & 1.459 & -0.426 \\
\addlinespace[2pt]
\multirow{2}{*}{$1$}
  & $1.0$      & 31 & 50.8\% & 1.475 & -0.459 \\
  & Tempered   & 28 & 45.9\% & 1.525 & -0.508 \\
\addlinespace[2pt]
\multirow{2}{*}{$2$}
  & $1.0$      & 27 & 44.3\% & 1.533 & -0.508 \\
  & Tempered   & 31 & 50.8\% & 1.467 & -0.443 \\

\bottomrule
\end{tabular}

\begin{tablenotes}[flushleft]
\footnotesize
\item Notes: Strict win rates needs not to sum to $100\%$ because there exist exact ties.
Lower Avg. rank is better. Avg. normalized margin is computed as
$\frac{\min_{a'} R_{f,a'} - R_{f,a}} {\max_{a'} R_{f,a'}-\min_{a'} R_{f,a'}+\varepsilon}$, so values are non-positive and larger values are better.
\end{tablenotes}
\end{threeparttable}
\end{table*}

\end{document}